\newtheorem{thm}{Theorem}[section]
 \newtheorem{defi}[thm]{Definition}
 \newtheorem{exple}[thm]{Example}
  \newtheorem{prop}[thm]{Proposition}
  \newtheorem{lem}[thm]{Lemma}
 \newtheorem{cor}[thm]{Corollary}
   \newtheorem{rque}[thm]{Remark}
\newcommand*\diff{\mathop{}\!\mathrm{d}}
\renewcommand{\S}{\operatorname{S_2}}
\newcommand{\SG}{\mathfrak{S}} 
\newcommand{\Vect}{\operatorname{Vect}}
\newcommand{\W}{\mathcal{W}}
\newcommandx{\WP}[1][1=p]{\{\W_1,\cdots,\W_{#1}\}} 
\renewcommand{\H}{\mathcal{H}}
\newcommandx{\HP}[1][1=p]{\{\H_1,\cdots,\H_{#1}\}} 
\newcommandx{\WiP}[2][1=i,2=p]{\{\W_{#1,1},\cdots,\W_{#1,#2}\}} 
\renewcommandx{\i}[2]{\llbracket #1,#2 \rrbracket}
\newcommand{\NAT}{\mathcal{N\!AT}} 
\newcommandx{\NATdk}[2][1=d,2=k]{\NAT_{#1,#2}} 
\newcommandx{\NATdkp}{\NAT_{d,k,\dir}} 
\renewcommand{\d}{d} 
\renewcommand{\k}{k} 
\newcommand{\B}{B} 
\newcommand{\M}{M} 
\renewcommand{\O}{O} 
\newcommand{\Bp}{\mathcal{B}_p} 
\newcommand{\Op}{\mathcal{O}_p} 
\newcommand{\BT}{\mathcal{BT}} 
\newcommand{\N}{T} 
\newcommand{\GFN}{\mathfrak{N}} 
\newcommand{\LGFN}{\mathfrak{G}} 
\newcommandx{\GFHdk}[2][1=d,2=k]{\mathfrak{H}_{#1,#2}} 
\newcommandx{\GFNdk}[2][1=d,2=k]{\mathfrak{N}_{#1,#2}} 
\newcommandx{\LGFNdk}[2][1=d,2=k]{\mathfrak{G}_{#1,#2}} 
\newcommand{\gs}{\wi_1 \times \dots \times \wi_d} 
\newcommand{\w}{w} 
\newcommand{\wi}{\w} 
\newcommand{\dir}{\pi} 
\newcommandx{\edir}[2][1=d,2=k]{\Pi_{#1,#2}} 
\newcommand{\V}{\mathcal{V}}
\newcommand{\F}{\mathcal{F}}
\newcommand{\LV}{\V_L} 
\newcommand{\RV}{\V_R} 
\newcommand{\inv}{\operatorname{Inv}} 
\newcommand{\imaj}{\operatorname{iMaj}} 
\newcommand{\std}{\operatorname{Std}} 
\newcommand{\DV}{\V_{\dir}} 
\newcommandx{\GFD}[1][1=d]{\mathfrak{D}_{#1}} 
\newcommandx{\DTp}[1][1=p]{\mathcal{DT}_{d,#1}} 
\newcommand{\p}{p} 
\newcommandx{\NATB}[1][1=\B]{\NAT\left(#1\right)} 
\newcommandx{\NATM}[1][1=\M]{\NAT\left(#1\right)} 
\newcommandx{\EL}{\mathcal{E}_L} 
\newcommandx{\ER}{\mathcal{E}_R} 
\newcommand{\E}{\mathcal{E}} 
\newcommand{\LO}{\mathcal{L}_0} 
\newcommand{\RO}{\mathcal{R}_0} 
\newcommandx{\nat}[2][1=d,2=k]{NAT$_{#1,#2}$\xspace}
\newcommand{\QQ}{\operatorname{\mathbb{Q}}}
\newcommand{\X}{X}
\newcommand{\BNAT}{\operatorname{\mathbb{M}}} 
\newcommand{\BSG}{\operatorname{\BNAT\SG}} 
\newcommand{\Bxy}{\operatorname{\mathbb{B}}} 
\newcommand{\Bx}{\operatorname{\mathbb{L}}} 
\newcommand{\pattern}{\begin{array}{c}\includegraphics[scale=1.0]{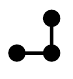}\end{array}}
\newcommand\sbullet[1][.5]{\mathbin{\vcenter{\hbox{\scalebox{#1}{$\bullet$}}}}}
\newcommand{\NodeBT}[2]{
\begin{tikzpicture}[baseline=(current bounding box.base)]
  \filldraw[color=black] (7pt, 7pt) circle (2pt);
  \draw (7pt,7pt)--(0,0);
  \node at (-3pt,-3pt) {$#1$};
  \draw (7pt,7pt)--(14pt,0pt);
  \node at (17pt,-3pt) {$#2$};
\end{tikzpicture}}
\renewcommand{\r}[1]{{\color{red}\ensuremath{{#1}}}}
\renewcommand{\b}[1]{{\color{blue}\ensuremath{{#1}}}}
\begin{document}
\author[J.-C. Aval]{Jean-Christophe Aval}
\address{Laboratoire Bordelais de Recherche en Informatique (UMR CNRS 5800),
Universit\'e de Bordeaux, 33405 TALENCE }
\author[al.]{Adrien Boussicault}
\address{Laboratoire Bordelais de Recherche en Informatique (UMR CNRS 5800),
Universit\'e de Bordeaux, 33405 TALENCE }
\author[]{B\'{e}r\'{e}nice Delcroix-Oger}
\address{Institut 
Math\'ematiques de Toulouse (UMR CNRS 5219),
Universit\'e Paul Sabatier, 31062 TOULOUSE}
\author[]{Florent Hivert}
\address{Laboratoire de Recherche en Informatique (UMR CNRS 8623),
B\^atiment 650, Universit\'e Paris Sud 11, 91405 ORSAY CEDEX}
\author[]{Patxi Laborde-Zubieta}
\address{Laboratoire Bordelais de Recherche en Informatique (UMR CNRS 5800),
Universit\'e de Bordeaux, 33405 TALENCE }

\title{Non-ambiguous trees: new results and generalisation}

\date{\today}

\keywords{Non-ambiguous trees, binary trees, ordered trees, q-analogues,
  permutations, hook-length formulas}

\begin{abstract}
We present a new definition of non-ambiguous trees (NATs) as labelled binary
trees. We thus get a differential equation whose solution can be described
combinatorially. This yields a new formula for the number of NATs. We also
obtain $q$-versions of our formula. We finally generalise NATs to higher
dimension. 
\end{abstract}

\maketitle
\section*{Introduction}

Non-ambiguous trees (NATs for short) were introduced in a previous paper \cite{AvaBouBouSil14}.
We propose in the present article a sequel to this work.

Tree-like tableaux \cite{ABN} are certain fillings of Ferrers diagram, in simple bijection 
with permutations or alternative tableaux \cite{postnikov,viennot}. They are the subject of an intense
research activity in combinatorics, mainly because they appear as the key tools
in the combinatorial interpretation of the well-studied model of statistical mechanics
called PASEP: they naturally encode the states of the PASEP, together with
the transition probabilities through simple statistics \cite{CorteelWilliams}.

Among tree-like tableaux, NATs were defined as rectangular-shaped objects in \cite{AvaBouBouSil14}.
In this way, they are in bijection with permutations $\sigma=\sigma_1\,\sigma_2\,\dots\,\sigma_n$
such that the excedances ($\sigma_i>i$) are placed at the beginning of the word $\sigma$.


Such permutations were studied by Ehrenborg and Steingrimsson \cite{ES},
who obtained an explicit enumeration formula. Thanks to NATs, a bijective proof
of this formula was described in \cite{AvaBouBouSil14}. 

In the present work, we define NATs as labelled binary trees (see Definition
\ref{def_nat}, which is equivalent to the original definition). This new
presentation allows us to obtain many new results about these objects. The plan
of the article is the following.

In Section \ref{nat}, we (re-)define NATs as
binary trees whose right and left children are respectively labelled with two
sets of labels. We show how the generating series for these objects satisfies
differential equations (Prop. \ref{diff_eq_nat}), whose solution is quite
simple and explicit (Prop. \ref{gen_ser_nat}). A combinatorial interpretation
of this expression involves the (new) notion of hooks in binary trees, linked
to the notion of leaves in ordered trees. Moreover this expression yields a new
formula for the number of NATs as a positive sum (see Proposition~\ref{thm_p}),
where Ehrenborg-Steingrimsson's formula is alternating. It should be noted that
Prop.~\ref{gen_ser_nat} and Proposition~\ref{thm_p} (in the case $\alpha=\beta=1)$
were already proven by Clark and Ehrenborg \cite{CE}. To conclude with Section
\ref{nat}, we obtain $q$-analogues of our formula, which are similar to those
obtained for binary trees in \cite{BjornerWachs,HivNovThib08} (see Theorem
\ref{thm-q-hook}, the relevant statistics are either the number of inversions
or the inverse major index).\\
Section \ref{gnat} presents a generalisation of
NATs in higher dimension. For any $k\le d$, we consider NATs of dimension
$(d,k)$, embedded  in $\mathbb{Z}^d$, and with edges of dimension $k$
\footnote{A definition in terms of labelled trees is given in Subsection
    \ref{def_natdk}.}. 
The original case corresponds to dimension $(2,1)$. Our main result on this
question is a differential equation satisfied by the generating series of these
new objects.

Finally, we study the (new) notion of hooks on binary trees in
Section \ref{hook}. We prove (through the use of generating series, and
bijectively) that the number of hooks is distributed on binary trees as another
statistics: the childleaf statistic, defined as the number of vertices who have
at least one leaf as a child.

\section{Non-ambiguous trees} \label{nat}

\subsection{Definitions} \label{definitions}

We recall that a \emph{binary tree} is a rooted tree whose vertices may have 
no child, one left child, one right child or both of them. 
The size of a binary tree is its number of vertices.
Usually, it is considered that there is a unique binary tree with
no vertex, it is called the \emph{empty binary tree}. In this article, we
consider that there are two binary trees of size 0: the \emph{left empty binary
tree} and the \emph{right empty binary tree}, they are respectively denoted by
$\emptyset_L$ and $\emptyset_R$. Having no child in the left direction (resp.
right direction) is the same as having the left (resp. right) empty subtree in
this direction. We denote by $\BT$ the set of binary trees.
Given a binary tree $B$, we denote by $\LV(B)$ and $\RV(B)$ the set of left
children (also called left vertices) and the set of right children (also called
right vertices). By convention, $\LV(\emptyset_L)=\RV(\emptyset_R)=-1$ and
$\LV(\emptyset_R)=\RV(\emptyset_L)=0$. We shall extend this notation to NATs.
Let $U$ and $V$ be two vertices of a binary tree $B$. If $V$ is a vertex of the
subtree of $B$ whose root is $U$, then $V$ is a \emph{descendant} of $U$ and $U$
an \emph{ancestor} of $V$.

We now define the notion of non-ambiguous trees:
\begin{defi} \label{def_nat}
    A \emph{non-ambiguous tree} (NAT) $T$ is a labelling of a binary tree $B$
    such that:
    \begin{itemize}
        \item the left (resp. right) children are labelled from $1$ to
            $|\LV(B)|$ (resp. $|\RV(B)|$), such that different left (resp.
            right) vertices have different labels. In other words, each left
            (right) label appears exactly once.
        \item if $U$ and $V$ are two left (resp. right) children in the tree,
            such that $U$ is an ancestor of $V$, then the label of $U$ in $T$ is
            strictly greater than the label of $V$.
    \end{itemize}

    The underlying binary tree of a non-ambiguous tree is called its \emph{shape}.
    By convention, there is a unique NAT whose shape is $\emptyset_L$ (resp.
    $\emptyset_R$) which is also denoted $\emptyset_L$ (resp. $\emptyset_R$).
    We denote by $\NAT(B)$ the set of NATs of shape $B$.
\end{defi}
It is sometimes useful to label the root as well. In
this case, it is considered as both a left and right child so that it carries
a pairs of labels, namely $(|\LV(T)|+1, |\RV(T)|+1)$. On pictures, to ease the
reading, we color the labels of left and right vertices in red and blue
respectively.

In \cite{AvaBouBouSil14}, NATs were defined in a slightly different, more geometrical way.
We recall it here, and show that the two definitions are equivalent.
Formally, a (geometric) NAT of size $n$ is a set $A$ of $n$ points $(x,y)\in\mathbb{N}\times\mathbb{N}$ such that:
\begin{enumerate}
\item \label{condition_1_ana} $(0,0)\in A$; we call this point the \emph{root} of $A$;
\item \label{condition_2_ana} given a non-root point $p=(x,y)\in A$, there exists one point $q=(x',y')\in A$ such that $y'<y$ and $x'=x$, or one point $r=(x',y')\in A$ such that $x'<x$, $y'=y$, but not both (which means that the pattern $\pattern$ 
is avoided);
\item \label{condition_3_ana} there is no empty line between two given points: if there exists a point $p=(x,y)\in A$, then for every $x'<x$ (resp. $y'<y$) there exists $q=(x'',y'')\in A$ such that $x''=x'$ (resp. $y''=y'$).
\end{enumerate}

\begin{lem}
This geometric definition is equivalent to Definition 
\ref{def_nat}.
\end{lem}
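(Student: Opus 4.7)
The plan is to exhibit a bijection between geometric NATs of size $n$ and labelled binary tree NATs (in the sense of Definition~\ref{def_nat}) of size $n$. Starting from a geometric NAT $A$, I would take $A$ itself as the vertex set of a binary tree: condition~(\ref{condition_2_ana}) assigns to every non-root point $p$ a unique parent (the closest point below $p$ in its column if case $q$ holds, or the closest point to the left of $p$ in its row if case $r$ holds), and I would declare $p$ a left child in the first case and a right child in the second. A short verification using~(\ref{condition_2_ana}) shows that no vertex can receive two left (or two right) children, so we obtain a genuine binary tree $B$.

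Condition~(\ref{condition_3_ana}) forces the sets of $x$- and $y$-coordinates occurring in $A$ to be initial segments $\{0,\ldots,w-1\}$ and $\{0,\ldots,h-1\}$; a short case analysis in~(\ref{condition_2_ana}) then shows that a non-root point is a left child if and only if it is the leftmost point in its row, which yields a bijection between left children and $\{1,\ldots,h-1\}$ via the $y$-coordinate, and symmetrically between right children and $\{1,\ldots,w-1\}$ via the $x$-coordinate. I would label the left child in row $y$ by $h-y$ and the right child in column $x$ by $w-x$. The only delicate check is the ancestor condition of Definition~\ref{def_nat}: along any path from a left vertex $V$ up to a strict left-vertex ancestor $U$ in $B$, the $y$-coordinate is non-increasing (preserved at steps through right children and strictly decreased at steps through left children) and strictly decreases at the very first step out of $V$; hence $y_U<y_V$ and the label of $U$ strictly exceeds that of $V$. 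The argument for right vertices is symmetric in $x$.

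For the reverse direction, given a labelled binary tree $T$ I would embed it in $\mathbb{N}\times\mathbb{N}$ by placing the root at $(0,0)$ and, recursively, the left child of label $\ell$ (resp.\ right child of label $r$) of a vertex at $(x,y)$ at position $(x,\,|\LV(T)|+1-\ell)$ (resp.\ $(|\RV(T)|+1-r,\,y)$). An induction on the size of $T$ verifies the three geometric axioms; the key inputs are the ancestor condition (which guarantees that each left child sits strictly above its parent and, by the same argument applied to right vertices, that every right-descendant of a left child $v$ lies strictly to the right of $v$ in its row) and the fact that the labels cover the intervals $\{1,\ldots,|\LV(T)|\}$ and $\{1,\ldots,|\RV(T)|\}$ bijectively, which ensures condition~(\ref{condition_3_ana}). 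The two maps are mutual inverses because in both directions a left (resp.\ right) vertex's label and its $y$-coordinate (resp.\ $x$-coordinate) are linked by $\ell=h-y$ (resp.\ $r=w-x$).

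I expect the main obstacle to lie in checking condition~(\ref{condition_2_ana}) in the reverse direction: one must combine the ancestor condition on labels with a careful tracking of how $x$- and $y$-coordinates propagate along right- and left-child edges, in order to rule out simultaneously the existence of a point below in the column and a point to the left in the row. Everything else in the argument reduces to routine bookkeeping with conditions~(\ref{condition_1_ana})--(\ref{condition_3_ana}) and the defining properties of the labelling.
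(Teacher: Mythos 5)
Your plan realises the same correspondence as the paper (labels as coordinates, up to the flip $\ell=h-y$, $r=w-x$), and your geometry-to-tree direction is in fact carried out \emph{more} carefully than in the paper, which dismisses it with ``forget the redundant coordinate'': the parent assignment via the nearest point below in the column or to the left in the row, the observation that left children are exactly the leftmost points of the rows $1,\dots,h-1$ (forced by the ``but not both'' clause together with condition~(\ref{condition_2_ana})), and the monotonicity of the $y$-coordinate along upward paths, which yields the ancestor condition of Definition~\ref{def_nat}, are all correct. The genuine gap is in the converse direction, and it sits exactly at the one step the paper considers nontrivial enough to prove. You write that you ``expect the main obstacle to lie in checking condition~(\ref{condition_2_ana}) in the reverse direction'' --- that is, ruling out a point having simultaneously a point below it in its column and a point to its left in its row --- and you only name the tools you would use; the argument itself is never given. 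Since the existence half of condition~(\ref{condition_2_ana}) is witnessed by the parent, and conditions~(\ref{condition_1_ana}) and~(\ref{condition_3_ana}) are indeed routine, this deferred pattern-avoidance check is essentially the whole content of the lemma, so your proposal is incomplete precisely at its crux.

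Moreover, the toolkit you announce for that step (the ancestor condition plus coordinate propagation) is not sufficient on its own: the decisive ingredient is that each left (resp.\ right) label appears \emph{exactly once}. In your embedding, every point inherits its $y$-coordinate from its nearest left-child ancestor-or-self and its $x$-coordinate from its nearest right-child ancestor-or-self (or from the root). The paper's reductio, transposed to your conventions, runs as follows: if a left child $p$ had a point $q$ strictly to its left in its row, consider the left vertex $c$ governing the row of $q$; equality of rows forces $c$ and $p$ to carry the same left label, and uniqueness of labels forces $c=p$, so $p$ is a strict ancestor of $q$ through right-child steps only; then the ancestor condition on right labels, via $x=w-r$, pushes $q$ strictly to the \emph{right} of $p$, a contradiction (the case of a right child with an extra point below it in its column is symmetric). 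Monotonicity alone can never identify $c$ with $p$, so without invoking label uniqueness the pattern avoidance does not follow. Supplying this short contradiction --- which is exactly the argument the paper gives --- is what your sketch still owes; with it, the rest of your write-up stands.
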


\begin{proof}
Let us consider a NAT $T$ presented as a labelled tree 
(Definition \ref{def_nat}). 
Each vertex of $T$ may be given a pair of coordinates $
(x,y)$ as follows.
For a left (resp. right) child, its $x$ (resp. $y$) 
coordinate is its label in $T$.
and its $y$ (resp. $x$) coordinate is the one of its 
closest ancestor which is a right child, or the root.
We thus get a geometrical object, which can be drawn as 
shown in Figure \ref{fig_exple_nat} (top, right).
With the minor change of coordinates $(\tilde{x},
\tilde{y})=(|\LV(T)|+1-x, |\RV(T)|+1-y)$, we get a set of 
points which
satisfies the geometrical definition.
The only thing that needs to be checked is the avoidance 
of the pattern
.
Let us proceed by reductio ad absurdum. We suppose that 
there are three vertices $p=(\tilde{x},\tilde{y})$, 
$q=(\tilde{x'},\tilde{y})$ and $r=(\tilde{x},\tilde{y'})$, 
with $\tilde{x'}<\tilde{x}$ and $\tilde{y'}<\tilde{y}$. We 
can suppose moreover that $p$ is a right child and that 
its parent is then $r$. Thus, $q$ is not an ancestor of $p
$. Let us consider the closest ancestor $c$ of $q$ which 
is a right vertex. Then the label of $c$ in $T$ is the 
same as the one of $p$, but it is not $p$ as $\tilde{x'}<
\tilde{x}$. It is then absurd that $p$ and $c$ have the 
same label in $T$.

Conversely, to go from the geometric version of a NAT to 
Definition \ref{def_nat}, 
we just have to forget the redundant coordinate.
\end{proof}
The top part of Figure \ref{fig_exple_nat} shows an 
example of a NAT,
and illustrates the correspondence between the geometrical 
presentation
of \cite{AvaBouBouSil14} and Definition \ref{def_nat}.

The dimension $\wi_L(T)\times \wi_R(T)$ of the rectangle containing the
geometrical presentation of $T$, is called the geometric size of $T$ and
satisfies
\[
    (\wi_L(T),\wi_R(T)) = (|\LV(T)|+1, |\RV(T)|+1).
\]

By convention, the non-ambiguous trees $\emptyset_L$ and $\emptyset_R$ satisfy
respectively $(\wi_L(\emptyset_L),\wi_R(\emptyset_L)) = (0,1)$ and $(\wi_L(\emptyset_R),\wi_R(\emptyset_R)) = (1,0)$.

\tikzset{every node/.style={inner sep=1pt,scale=0.8}}
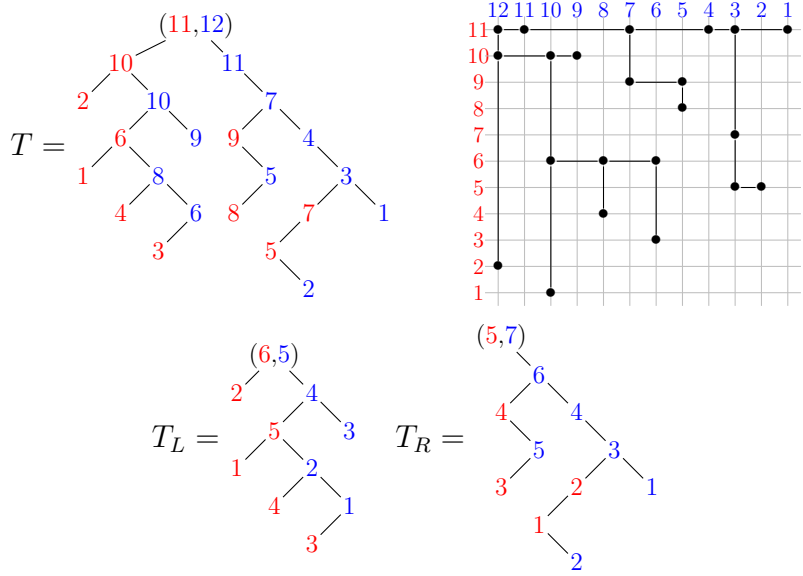
\begin{figure} 
\begin{center}
\[T=\begin{tikzpicture}[baseline=(current bounding box.center),scale=0.5]
\node (r) at (0,8) {(\textcolor{red}{11},\textcolor{blue}{12})};
\node (11b) at (1,7) {\textcolor{blue}{11}};
\node (10b) at (-1,6) {\textcolor{blue}{10}};
\node (9b) at (0,5) {\textcolor{blue}{9}};
\node (8b) at (-1,4) {\textcolor{blue}{8}};
\node (7b) at (2,6) {\textcolor{blue}{7}};
\node (6b) at (0,3) {\textcolor{blue}{6}};
\node (5b) at (2,4){\textcolor{blue}{5}};
\node (4b) at (3,5) {\textcolor{blue}{4}};
\node (3b) at (4,4){\textcolor{blue}{3}};
\node (2b) at (3,1){\textcolor{blue}{2}};
\node (1b) at (5,3){\textcolor{blue}{1}};
\node (10r) at (-2,7){\textcolor{red}{10}};
\node (9r) at (1,5){\textcolor{red}{9}};
\node (8r) at (1,3){\textcolor{red}{8}};
\node (7r) at (3,3){\textcolor{red}{7}};
\node (6r) at (-2,5){\textcolor{red}{6}};
\node (5r) at (2,2){\textcolor{red}{5}};
\node (4r) at (-2,3){\textcolor{red}{4}};
\node (3r) at (-1,2){\textcolor{red}{3}};
\node (2r) at (-3,6){\textcolor{red}{2}};
\node (1r) at (-3,4){\textcolor{red}{1}};
\draw (2r)--(10r)--(r)--(11b)--(7b)--(4b)--(3b)--(1b);
\draw (7b)--(9r);
\draw (9r)--(5b);
\draw (5b)--(8r);
\draw (3b)--(7r);
\draw (7r)--(5r);
\draw (5r)--(2b);
\draw (10r)--(10b);
\draw (9b)--(10b)--(6r)--(1r);
\draw (6b)--(8b)--(4r);
\draw (3r)--(6b);
\draw (6r)--(8b);
\end{tikzpicture}
\hspace*{1cm}
\begin{tikzpicture}[baseline=(current bounding box.center),
                    scale=0.35,
                    every node/.style={inner sep=0pt,scale=0.7}]
\draw (0,0.75) node{\textcolor{blue}{12}};
\draw (1,0.75) node{\textcolor{blue}{11}};
\draw (2,0.75) node{\textcolor{blue}{10}};
\draw (3,0.75) node{\textcolor{blue}{9}};
\draw (4,0.75) node{\textcolor{blue}{8}};
\draw (5,0.75) node{\textcolor{blue}{7}};
\draw (6,0.75) node{\textcolor{blue}{6}};
\draw (7,0.75) node{\textcolor{blue}{5}};
\draw (8,0.75) node{\textcolor{blue}{4}};
\draw (9,0.75) node{\textcolor{blue}{3}};
\draw (10,0.75) node{\textcolor{blue}{2}};
\draw (11,0.75) node{\textcolor{blue}{1}};
\draw (-0.75,0) node{\textcolor{red}{11}};
\draw (-0.75,-1) node{\textcolor{red}{10}};
\draw (-0.75,-2) node{\textcolor{red}{9}};
\draw (-0.75,-3) node{\textcolor{red}{8}};
\draw (-0.75,-4) node{\textcolor{red}{7}};
\draw (-0.75,-5) node{\textcolor{red}{6}};
\draw (-0.75,-6) node{\textcolor{red}{5}};
\draw (-0.75,-7) node{\textcolor{red}{4}};
\draw (-0.75,-8) node{\textcolor{red}{3}};
\draw (-0.75,-9) node{\textcolor{red}{2}};
\draw (-0.75,-10) node{\textcolor{red}{1}};
\draw[gray!50] (0,0.5)--(0,-10.5);
\draw[gray!50] (1,0.5)--(1,-10.5);
\draw[gray!50] (2,0.5)--(2,-10.5);
\draw[gray!50] (3,0.5)--(3,-10.5);
\draw[gray!50] (4,0.5)--(4,-10.5);
\draw[gray!50] (5,0.5)--(5,-10.5);
\draw[gray!50] (6,0.5)--(6,-10.5);
\draw[gray!50] (7,0.5)--(7,-10.5);
\draw[gray!50] (8,0.5)--(8,-10.5);
\draw[gray!50] (9,0.5)--(9,-10.5);
\draw[gray!50] (10,0.5)--(10,-10.5);
\draw[gray!50] (11,0.5)--(11,-10.5);
\draw[gray!50] (-0.5,0)--(11.5,0);
\draw[gray!50] (-0.5,-1)--(11.5,-1);
\draw[gray!50] (-0.5,-2)--(11.5,-2);
\draw[gray!50] (-0.5,-3)--(11.5,-3);
\draw[gray!50] (-0.5,-4)--(11.5,-4);
\draw[gray!50] (-0.5,-5)--(11.5,-5);
\draw[gray!50] (-0.5,-6)--(11.5,-6);
\draw[gray!50] (-0.5,-7)--(11.5,-7);
\draw[gray!50] (-0.5,-8)--(11.5,-8);
\draw[gray!50] (-0.5,-9)--(11.5,-9);
\draw[gray!50] (-0.5,-10)--(11.5,-10);
\node (r) at (0,0) {$\bullet$};
\node (11b) at (12-11,0) {$\bullet$};
\node (10b) at (12-10,10-11) {$\bullet$};
\node (9b) at (12-9,10-11) {$\bullet$};
\node (8b) at (12-8,6-11) {$\bullet$};
\node (7b) at (12-7,0) {$\bullet$};
\node (6b) at (12-6,6-11) {$\bullet$};
\node (5b) at (12-5,9-11){$\bullet$};
\node (4b) at (12-4,0) {$\bullet$};
\node (3b) at (12-3,0){$\bullet$};
\node (2b) at (12-2,5-11){$\bullet$};
\node (1b) at (12-1,0){$\bullet$};
\node (10r) at (0,10-11){$\bullet$};
\node (9r) at (12-7,9-11){$\bullet$};
\node (8r) at (12-5,8-11){$\bullet$};
\node (7r) at (12-3,7-11){$\bullet$};
\node (6r) at (12-10,6-11){$\bullet$};
\node (5r) at (12-3,5-11){$\bullet$};
\node (4r) at (12-8,4-11){$\bullet$};
\node (3r) at (12-6,3-11){$\bullet$};
\node (2r) at (0,2-11){$\bullet$};
\node (1r) at (12-10,1-11){$\bullet$};
\draw (2r)--(10r)--(r)--(11b)--(7b)--(4b)--(3b)--(1b);
\draw (7b)--(9r);
\draw (9r)--(5b);
\draw (5b)--(8r);
\draw (3b)--(7r);
\draw (7r)--(5r);
\draw (5r)--(2b);
\draw (10r)--(10b);
\draw (9b)--(10b)--(6r)--(1r);
\draw (6b)--(8b)--(4r);
\draw (3r)--(6b);
\draw (6r)--(8b);
\end{tikzpicture}\]
\[T_L=
\begin{tikzpicture}[baseline=(current bounding box.center),scale=0.5]
\node (10b) at (-1,6) {\textcolor{blue}{4}};
\node (9b) at (0,5) {\textcolor{blue}{3}};
\node (8b) at (-1,4) {\textcolor{blue}{2}};
\node (6b) at (0,3) {\textcolor{blue}{1}};
\node (10r) at (-2,7){(\textcolor{red}{6},\textcolor{blue}{5})};
\node (6r) at (-2,5){\textcolor{red}{5}};
\node (4r) at (-2,3){\textcolor{red}{4}};
\node (3r) at (-1,2){\textcolor{red}{3}};
\node (2r) at (-3,6){\textcolor{red}{2}};
\node (1r) at (-3,4){\textcolor{red}{1}};
\draw (2r)--(10r);
\draw (10r)--(10b);
\draw (9b)--(10b)--(6r)--(1r);
\draw (6b)--(8b)--(4r);
\draw (3r)--(6b);
\draw (6r)--(8b);
\end{tikzpicture}
\hspace{0.5cm}
T_R=
\begin{tikzpicture}[baseline=(current bounding box.center),scale=0.5]
\node (11b) at (1,7) {(\textcolor{red}{5},\textcolor{blue}{7})};
\node (7b) at (2,6) {\textcolor{blue}{6}};
\node (5b) at (2,4){\textcolor{blue}{5}};
\node (4b) at (3,5) {\textcolor{blue}{4}};
\node (3b) at (4,4){\textcolor{blue}{3}};
\node (2b) at (3,1){\textcolor{blue}{2}};
\node (1b) at (5,3){\textcolor{blue}{1}};
\node (9r) at (1,5){\textcolor{red}{4}};
\node (8r) at (1,3){\textcolor{red}{3}};
\node (7r) at (3,3){\textcolor{red}{2}};
\node (5r) at (2,2){\textcolor{red}{1}};
\draw (11b)--(7b)--(4b)--(3b)--(1b);
\draw (7b)--(9r);
\draw (9r)--(5b);
\draw (5b)--(8r);
\draw (3b)--(7r);
\draw (7r)--(5r);
\draw (5r)--(2b);
\end{tikzpicture}\]
\caption{A non-ambiguous tree, its geometrical presentation,  
and its left and right subtrees}\label{fig_exple_nat}
\end{center}
\end{figure}

Figure \ref{fig_exple_nat_B} gives an example of a class $NAT(B)$,
in the case where the binary tree is $B=\begin{tikzpicture}[baseline=(current bounding box.center),scale=0.2]
\node (r) at (0,2) {$\sbullet$};
\node (1b) at (2,1) {$\sbullet$};
\node (2b) at (-0.7,0) {$\sbullet$};
\node (1r) at (-2,1){$\sbullet$};
\node (2r) at (0.7,0){$\sbullet$};
\draw (2r)--(1b)--(r)--(1r)--(2b);
\end{tikzpicture}
$.

\tikzset{every node/.style={inner sep=1pt,scale=0.8}}
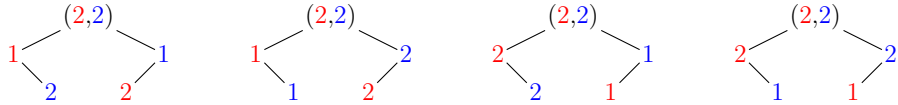
\begin{figure} 
\begin{center}
\[
\begin{tikzpicture}[baseline=(current bounding box.center),scale=0.5]
\node (r) at (0,2) {(\textcolor{red}{2},\textcolor{blue}{2})};
\node (1b) at (2,1) {\textcolor{blue}{1}};
\node (2b) at (-1,0) {\textcolor{blue}{2}};
\node (1r) at (-2,1){\textcolor{red}{1}};
\node (2r) at (1,0){\textcolor{red}{2}};
\draw (2r)--(1b)--(r)--(1r)--(2b);
\end{tikzpicture}
\hspace*{1cm}
\begin{tikzpicture}[baseline=(current bounding box.center),scale=0.5]
\node (r) at (0,2) {(\textcolor{red}{2},\textcolor{blue}{2})};
\node (2b) at (2,1) {\textcolor{blue}{2}};
\node (1b) at (-1,0) {\textcolor{blue}{1}};
\node (1r) at (-2,1){\textcolor{red}{1}};
\node (2r) at (1,0){\textcolor{red}{2}};
\draw (2r)--(2b)--(r)--(1r)--(1b);
\end{tikzpicture}
\hspace*{1cm}
\begin{tikzpicture}[baseline=(current bounding box.center),scale=0.5]
\node (r) at (0,2) {(\textcolor{red}{2},\textcolor{blue}{2})};
\node (1b) at (2,1) {\textcolor{blue}{1}};
\node (2b) at (-1,0) {\textcolor{blue}{2}};
\node (2r) at (-2,1){\textcolor{red}{2}};
\node (1r) at (1,0){\textcolor{red}{1}};
\draw (1r)--(1b)--(r)--(2r)--(2b);
\end{tikzpicture}
\hspace*{1cm}
\begin{tikzpicture}[baseline=(current bounding box.center),scale=0.5]
\node (r) at (0,2) {(\textcolor{red}{2},\textcolor{blue}{2})};
\node (2b) at (2,1) {\textcolor{blue}{2}};
\node (1b) at (-1,0) {\textcolor{blue}{1}};
\node (2r) at (-2,1){\textcolor{red}{2}};
\node (1r) at (1,0){\textcolor{red}{1}};
\draw (1r)--(2b)--(r)--(2r)--(1b);
\end{tikzpicture}
\]
\caption{An example of $NAT(B)$}\label{fig_exple_nat_B}
\end{center}
\end{figure}

\subsection{Differential equations on non-ambiguous trees}
\label{differential}

The goal of this section is to get (new) formulas for the number of NATs 
with prescribed shape.

The crucial argument is the following remark: let $T$ be a NAT
whose shape is a non-empty binary tree $B = \NodeBT{L}{R}$.
Restricting the labellings of the left and right children of $T$
to $L$ and $R$ gives non-decreasing labelling of their respective left and
right children. Note that the root of $L$ (resp. $R$) is a left
(resp. right) child in $T$. By renumbering the labels so that they are
consecutive numbers starting from $1$, we get two non-ambiguous labellings for
$L$ and $R$, that is two non-ambiguous trees $T_L$ and
$T_R$. See Figure~\ref{fig_exple_nat} for an example. 

\begin{rque}
The geometric size of $T$ satisfies:
\begin{align}
    \wi_L(T) &= \wi_L(T_L)+\wi_L(T_R) \\
    \wi_R(T) &= \wi_R(T_L)+\wi_R(T_R) 
\end{align}
\end{rque}

Conversely, knowing the labelling of $L$ and $R$, to recover the labelling
of $T$, one has to choose which labels among $\{ 1, \ldots, \LV(T) \}$ will be used for
$L$ (including its root) and the same for right labels.  
As a consequence, the number of NAT with shape $B$ is given by:

\begin{equation}\label{Equation:BNat-binom}
  \left|\NATB[\NodeBT{L}{R}]\right| =
  \binom{|\LV(T)|}{|\LV(R)|}
  \binom{|\RV(T)|}{|\RV(L)|}\,
  |\NATB[L]|\,
  |\NATB[R]|.
\end{equation}
Our first step is to recover the hook-length formula for the number of
NATs of fixed shape (\cite{AvaBouBouSil14}). We
use the method from~\cite{HivNovThib08}, namely, applying recursively a
bilinear integro-differential operator called here a \emph{pumping function}
along a binary tree.

First of all, we consider the $\QQ$-vector space $\QQ\NAT$ of formal sums of
non-ambiguous trees and identifies $\NATB$ with the formal sum of its elements.
We consider also the $\QQ$-vector spaces $\QQ\NAT_L$ and $\QQ\NAT_R$ generated
respectively by $\NAT\setminus\{\emptyset_R\}$ and
$\NAT\setminus\{\emptyset_L\}$. Let $\BNAT$ be the linear map
\[
    \BNAT:\QQ\NAT_L\times\QQ\NAT_R\mapsto\QQ\NAT
\]
sending a pair of non-ambiguous trees $(T_1,T_2)$ to the formal sum of NATs $T$
such that $T_L=T_1$ and $T_R=T_2$. The main remark is that $\NATB$ can be
computed by a simple recursion using $\BNAT$.
\begin{lem}\label{lem:recur_B}
    The formal sum $\NATB[B]$ of non-ambiguous trees of shape $B$ satisfies
    the following recursion: if $B=\emptyset_L$ or $B=\emptyset_R$ then
    $\NATB[B] = B$, else
    \[
        \NATB[\NodeBT{L}{R}] = \BNAT\left(\NATB[L], \NATB[R]\right).
    \]
\end{lem}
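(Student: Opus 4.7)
The plan is to establish this identity by unpacking the definitions of $\BNAT$ and of the decomposition $T \mapsto (T_L, T_R)$, which was already informally described in the paragraphs preceding equation~(\ref{Equation:BNat-binom}). The base cases $B = \emptyset_L$ and $B = \emptyset_R$ require no argument: by convention there is a unique NAT of each such shape, so $\NATB[\emptyset_L] = \emptyset_L$ and $\NATB[\emptyset_R] = \emptyset_R$ as formal sums.

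For the inductive step, I would first verify that the restriction-and-renumbering map $\phi : \NATB[\NodeBT{L}{R}] \to \NATB[L] \times \NATB[R]$ sending $T$ to $(T_L, T_R)$ is well-defined. The left (resp.\ right) vertices of $L$ (resp.\ $R$) inherit from $T$ a labelling that is strictly decreasing along ancestor chains; re-indexing these labels order-isomorphically with $\{1,\dots,|\LV(L)|\}$ (resp.\ $\{1,\dots,|\RV(L)|\}$) preserves that property and produces labels exhausting each of these intervals exactly once, so $T_L \in \NATB[L]$; symmetrically $T_R \in \NATB[R]$.

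Next I would interpret the right-hand side. By bilinearity,
\[
\BNAT\bigl(\NATB[L],\NATB[R]\bigr)
= \sum_{T_1 \in \NATB[L]} \sum_{T_2 \in \NATB[R]} \BNAT(T_1,T_2),
\]
and by the very definition of $\BNAT$, each summand $\BNAT(T_1,T_2)$ is the formal sum of those NATs $T$ whose shape is $\NodeBT{L}{R}$ and for which $\phi(T) = (T_1, T_2)$. Summing over all pairs $(T_1,T_2)$ therefore produces exactly $\sum_{T} T$, where $T$ ranges over the fibres of $\phi$, i.e.\ over $\NATB[\NodeBT{L}{R}]$. This gives the claimed identity in $\QQ\NAT$.

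The only genuine content, and hence the main obstacle if any, is checking that $\phi$ is indeed a bijection onto $\NATB[L] \times \NATB[R]$: well-definedness was sketched above, and surjectivity/injectivity amount to the observation that the data of $T$ of shape $\NodeBT{L}{R}$ is equivalent to the data of $(T_L, T_R)$ together with the choice of which labels in $\{1,\dots,|\LV(T)|\}$ and $\{1,\dots,|\RV(T)|\}$ are attributed to the vertices of $L$ versus those of $R$ (this is exactly the binomial choice counted in~(\ref{Equation:BNat-binom})). Since $\BNAT(T_1,T_2)$ is defined to sum over all NATs compatible with $(T_1,T_2)$, these label-assignment choices are precisely what $\BNAT$ records, and no further combinatorial work is required.
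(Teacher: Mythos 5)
Your proof is correct and takes essentially the paper's approach: the paper states Lemma~\ref{lem:recur_B} without a separate proof, treating it as immediate from the definition of $\BNAT$ together with the $T\mapsto(T_L,T_R)$ decomposition and label-choice reconstruction described just before \eqref{Equation:BNat-binom}, which is exactly what you spell out. One phrasing slip worth noting: $\phi$ is not a bijection onto $\NATB[L]\times\NATB[R]$ (its fibres have the binomial cardinalities appearing in \eqref{Equation:BNat-binom}); but your argument never actually uses bijectivity --- it only needs that $\phi$ is well-defined and that $\BNAT(T_1,T_2)$ is by definition the formal sum over the fibre of $\phi$ at $(T_1,T_2)$, so the fibres partition $\NATB[\NodeBT{L}{R}]$ and the identity follows.
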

To count non-ambiguous trees, and as suggested by the binomial coefficients
in~\eqref{Equation:BNat-binom}, we shall use \emph{doubly exponential
generating functions} in two variables $x$ and $y$, where $x$ and $y$ count the
geometric size: the weight of a NAT $T$ is $\Phi(T):=
\frac{x^{\wi_L(T)}}{\wi_L(T)!} \frac{y^{\wi_R(T)}}{\wi_R(T)!}\,.$ We extend
$\Phi(T)$ by linearity to a map $\QQ\NAT\mapsto\QQ[[x, y]]$. Consequently,
$\Phi(\NATB[B])$ is the generating series of the non-ambiguous trees of shape
$B$. Thanks to~\eqref{Equation:BNat-binom} the image in $\QQ[[x, y]]$ of the
bilinear map $\BNAT$ under the map $\Phi$ is a simple differential operator:
\begin{defi}\label{def:pumping}
  The \emph{pumping function} $\Bxy$ is the bilinear map
  $\QQ[[x,y]]\times\QQ[[x,y]]\mapsto\QQ[[x,y]]$ defined by
  \begin{equation}
    \Bxy(f, g) = \int_0^x\int_0^y
    \partial_yf(u,v)\cdot
    \partial_xg(u,v)
    \diff u\diff v.
  \end{equation}
  We further define recursively, for any binary tree $B$
  an element $\Bx(B)\in\QQ[[x,y]]$ by
  \begin{equation}
  \Bx(\emptyset_L) = y,
  \quad
  \Bx(\emptyset_R) = x
  \quad\text{and}\qquad
  \Bx\left({\NodeBT{L}{R}}\right) = \Bxy\left(\Bx(L), \Bx(R)\right)\,.
\end{equation}
\end{defi}
A simple computation gives us that 
for $f=\frac{x^{a_1}}{a_1!}\frac{y^{b_1}}{b_1!}$ and $g=\frac{x^{a_2}}{a_2!} \frac{y^{b_2}}{b_2!}$
one has 
$$
\Bxy(f,g)=\binom{a_1+a_2-1}{a_2-1} \binom{b_1+b_2-1}{b_1-1} \frac{x^{a_1+a_2}}{(a_1+a_2)!} \frac{y^{b_1+b_2}}{(b_1+b_2)!}.
$$
Whence for $(T_1, T_2)$ a pair of NATs in $\QQ\NAT_L\times\QQ\NAT_R$, we get:
\begin{align*}
\Bxy(\Phi(T_1),\Phi(T_2)) =& \binom{w_L(T_1)+w_L(T_2)-1}{w_L(T_2)-1} \binom{w_R(T_1)+w_R(T_2)-1}{w_R(T_1)-1}\\
&\times \frac{x^{w_L(T_1)+w_L(T_2)}}{(w_L(T_1)+w_L(T_2))!} \frac{y^{w_R(T_1)+w_R(T_2)}}{(w_R(T_1)+w_R(T_2))!} 
\end{align*}

Thus~\eqref{Equation:BNat-binom} implies the following lemma.
\begin{lem}\label{lem:com_phi}
For $(T_1, T_2)$ a pair of NATs in $\QQ\NAT_L\times\QQ\NAT_R$, one has
  $$\Phi(\BNAT(T_1, T_2)) = \Bxy(\Phi(T_1),\Phi(T_2)).$$
\end{lem}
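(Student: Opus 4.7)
The plan is to prove the identity by explicit computation on both sides followed by direct comparison. By definition, $\BNAT(T_1,T_2)$ is a formal sum of NATs $T$ all sharing the same shape $\NodeBT{L}{R}$, where $L,R$ are the shapes of $T_1,T_2$; by the remark preceding Lemma~\ref{lem:recur_B} they also share the same geometric size $(a_1+a_2,\,b_1+b_2)$, writing $a_i:=\wi_L(T_i)$ and $b_i:=\wi_R(T_i)$. Consequently each such $T$ carries the same weight
\[
\Phi(T)=\frac{x^{a_1+a_2}}{(a_1+a_2)!}\,\frac{y^{b_1+b_2}}{(b_1+b_2)!},
\]
and by linearity $\Phi(\BNAT(T_1,T_2))=N\cdot\Phi(T)$, where $N$ denotes the number of NATs $T$ such that $T_L=T_1$ and $T_R=T_2$.

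To identify $N$, I would reuse the analysis supporting~\eqref{Equation:BNat-binom}. Once $T_1,T_2$ are fixed, a NAT $T$ of shape $\NodeBT{L}{R}$ restricting to them is determined by the choice of which $a_2-1$ among the $a_1+a_2-1$ left labels of $T$ are to become the left labels of the right subtree (the remaining $a_1$ going to the left subtree, the strictly decreasing condition along ancestor chains then forcing their placement so as to match $T_1$ after renumbering), together with the analogous choice for the right labels. This yields
\[
N=\binom{a_1+a_2-1}{a_2-1}\binom{b_1+b_2-1}{b_1-1}.
\]

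Plugging this into $\Phi(\BNAT(T_1,T_2))=N\cdot\Phi(T)$ reproduces exactly the expression computed for $\Bxy(\Phi(T_1),\Phi(T_2))$ in the display preceding the lemma, whence the result. The proof thus reduces to a direct transcription of~\eqref{Equation:BNat-binom} into generating-function language, the only nontrivial input being the computation of $\Bxy$ on monomials $\tfrac{x^a}{a!}\tfrac{y^b}{b!}$, which is already performed in the text; there is no genuine obstacle. The only point warranting mild care is the boundary case $T_1=\emptyset_L$ or $T_2=\emptyset_R$, where $a_1=0$ or $b_2=0$: the binomial coefficients then collapse to $1$, matching the unique valid labeling in those degenerate situations.
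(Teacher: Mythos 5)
Your proof is correct and takes essentially the same route as the paper: both reduce the identity to the counting statement~\eqref{Equation:BNat-binom} (whose binomial coefficients you rederive by choosing the label sets for each subtree) combined with the explicit evaluation of $\Bxy$ on monomials $\frac{x^a}{a!}\frac{y^b}{b!}$, using that all terms of $\BNAT(T_1,T_2)$ share the same shape and hence the same weight. Your check of the degenerate cases $T_1=\emptyset_L$ and $T_2=\emptyset_R$, where the binomial coefficients collapse to $1$, is accurate and makes explicit a point the paper leaves implicit.
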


We derive from these two lemmas the following proposition.
\begin{prop}
\label{prop:commutation_phi}
 For any non-empty binary tree $B$,
  $$\Phi(\NATB[B])=\Bx(B).$$
\end{prop}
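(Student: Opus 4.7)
The plan is to prove the proposition by structural induction on the binary tree $B$, essentially just concatenating Lemma~\ref{lem:recur_B} with Lemma~\ref{lem:com_phi}. The only real work is to pass from the pairwise statement of Lemma~\ref{lem:com_phi} to formal sums, and to handle the empty trees that may appear as children at the leaves of the recursion.

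First I would dispose of the base case, the single-vertex tree $\NodeBT{\emptyset_L}{\emptyset_R}$. There is a unique NAT with this shape (the lone root, of geometric size $(1,1)$), so $\Phi$ of it equals $xy$; on the other hand $\Bx(\NodeBT{\emptyset_L}{\emptyset_R}) = \Bxy(y,x) = \int_0^x\!\int_0^y 1\,\diff u\,\diff v = xy$, matching. At this stage I would also record the boundary identities $\Phi(\emptyset_L) = y = \Bx(\emptyset_L)$ and $\Phi(\emptyset_R) = x = \Bx(\emptyset_R)$, so that in the inductive step the empty subtrees can be treated on equal footing with non-empty ones.

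For the inductive step, take $B=\NodeBT{L}{R}$ with $L$ and $R$ either empty or non-empty of strictly smaller size. Lemma~\ref{lem:recur_B} gives
\[
\NATB[B] = \BNAT(\NATB[L],\NATB[R]).
\]
The key maneuver is to extend Lemma~\ref{lem:com_phi} from pairs of NATs to formal sums by bilinearity: the map $\BNAT$ is bilinear by definition, the map $\Phi$ is linear, and the pumping function $\Bxy$ is visibly bilinear from its integral-of-derivatives formula. Expanding $\NATB[L]$ and $\NATB[R]$ as formal sums of their elements and pulling the sums through, one obtains
\[
\Phi(\NATB[B]) \;=\; \Bxy\bigl(\Phi(\NATB[L]),\,\Phi(\NATB[R])\bigr).
\]
Applying the induction hypothesis when $L,R$ are non-empty, and the boundary identities recorded above when they are not, gives $\Bxy(\Bx(L),\Bx(R)) = \Bx(B)$, closing the induction.

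There is no genuine obstacle here; the statement is essentially a formal consequence of the two preceding lemmas. The one point that deserves explicit mention is that $\BNAT$ is only defined on $\QQ\NAT_L\times\QQ\NAT_R$, so I should check that $\NATB[L] \in \QQ\NAT_L$ and $\NATB[R] \in \QQ\NAT_R$ in all cases. This is immediate: every NAT of shape $L$ has its root as a left child of $B$ (so its shape is not $\emptyset_R$), and symmetrically on the right side; when $L = \emptyset_L$ (resp. $R = \emptyset_R$), the single element $\NATB[L] = \emptyset_L$ (resp. $\NATB[R] = \emptyset_R$) indeed lies in $\QQ\NAT_L$ (resp. $\QQ\NAT_R$).
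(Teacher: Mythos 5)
Your proof is correct and takes essentially the same route as the paper: the paper's argument is exactly the chain of equalities obtained by expanding $\BNAT\left(\NATB[L],\NATB[R]\right)$ by bilinearity, applying Lemma~\ref{lem:com_phi} termwise, and reassembling, with the structural induction left implicit. Your additional bookkeeping --- the explicit base case, the boundary identities $\Phi(\emptyset_L)=y=\Bx(\emptyset_L)$ and $\Phi(\emptyset_R)=x=\Bx(\emptyset_R)$, and the check that $\NATB[L]\in\QQ\NAT_L$ and $\NATB[R]\in\QQ\NAT_R$ --- simply makes precise details the paper glosses over, and it is all handled correctly.
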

\begin{proof}
We apply Lemmas \ref{lem:recur_B} and \ref{lem:com_phi} to write (with $L$ and $R$ the left and right sub-tree of $B$):
\begin{align*}
	\Phi(\NATB[B])
	&=
	\Phi(\BNAT(\NAT(L),\NAT(R)))
	\\
	& = 
	\Phi(\BNAT(
		\sum_{T_1 \in \NAT(L)} T_1,
		\sum_{T_2 \in \NAT(R)} T_2
	))
	\\
	&=
	\sum_{T_1 \in \NAT(L)} 
	\ 
	\sum_{T_2 \in \NAT(R)} 
	\Phi(\BNAT(
		T_1, T_2
	))
	\\
	&=
	\sum_{T_1 \in \NAT(L)} 
	\ 
  \sum_{T_2 \in \NAT(R)} 
	\Bxy( 
		\Phi(T_1), \Phi(T_2)
	)
	\\
	&=
	\Bxy\left( 
		\sum_{T_1 \in \NAT(L)} \Phi(T_1),
		\sum_{T_2 \in \NAT(R)} \Phi(T_2)
	\right)
	=
	\Bxy( \Bx(L), \Bx(R) ).
\end{align*}
\end{proof}

We are now able to recover the hook-length formula
of~\cite{AvaBouBouSil14} for non-ambiguous trees of a given shape.
\begin{prop} \label{prop_hook_nat} 
    Let $\B$ be a binary tree. For each non-root left (resp. right) vertex $U$, we
    denote by $\EL(U)$ (resp. $\ER(U)$) the number of left (resp. right) vertices of
    the subtree with root $U$ (itself included in the count). Then
    \begin{equation}\label{eq:HLT}
    |\NAT(\B)|
    =
    \frac{ |\LV(\B)|! \cdot |\RV(\B)|!}{
      \displaystyle
      \prod_{U: \text{left child}}\EL(U) \cdot 
      \prod_{U: \text{right child}}\ER(U)
    }\,.
    \end{equation}
\end{prop}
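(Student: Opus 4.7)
The plan is to reduce the problem to a computation of the series $\Bx(B)$ via Proposition~\ref{prop:commutation_phi}, and then to prove by induction on the size of $B$ that $\Bx(B)$ has a hook-product closed form from which the formula can be read off.

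More precisely, for a non-empty binary tree $B$ of left/right sizes $a=|\LV(B)|$ and $b=|\RV(B)|$, the image $\Phi(\NATB[B])$ equals $|\NAT(B)|\cdot \frac{x^{a+1}}{(a+1)!}\frac{y^{b+1}}{(b+1)!}$. By Proposition~\ref{prop:commutation_phi} this equals $\Bx(B)$. Hence the proposition is equivalent to the claim
\[
\Bx(B) \;=\; \frac{x^{a+1}\,y^{b+1}}{\displaystyle (a+1)(b+1)\prod_{U:\text{ non-root left}}\EL(U)\;\prod_{U:\text{ non-root right}}\ER(U)}.
\]
The two factors $(a+1)$ and $(b+1)$ on the denominator are naturally interpreted as $\EL$ and $\ER$ of the root of $B$ (viewed as simultaneously a left and a right child), so the denominator is a single product over all left/right vertices including the root. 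I would formulate the induction hypothesis in this symmetric form.

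The base case is the single-node tree $\NodeBT{\emptyset_L}{\emptyset_R}$, for which $\Bx = \Bxy(y,x) = \int_0^x\!\!\int_0^y 1\,du\,dv = xy$, matching the formula. For the inductive step on $B=\NodeBT{L}{R}$, I would apply the closed-form expression for $\Bxy$ on monomials (given in the excerpt) to $\Bx(L)$ and $\Bx(R)$, which are single monomials by induction. Setting $a_1=|\LV(L)|+1$, $b_1=|\RV(L)|+1$, $a_2=|\LV(R)|+1$, $b_2=|\RV(R)|+1$, this produces the expected monomial $x^{a+1}y^{b+1}$ and a scalar prefactor that must be checked to equal $1/D_B$. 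The key identity to verify is
\[
\frac{D_L\,D_R\,(a+1)(b+1)}{a_2\,b_1} \;=\; D_B,
\]
which reduces to tracking how the hook contributions of $L$ and $R$ combine into those of $B$: every non-root left/right vertex of $L$ or $R$ keeps the same hook value in $B$, the root of $L$ becomes a non-root left child of $B$ with $\EL=a_1$, the root of $R$ becomes a non-root right child of $B$ with $\ER=b_2$, and the root of $B$ contributes the factors $(a+1)(b+1)$. A short bookkeeping shows that exactly the factors $a_1,b_2$ from $D_B$ cancel two of the four root-factors of $D_LD_R$, leaving the ratio $a_2b_1$ as required.

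The main obstacle is precisely this accounting of hook-factors at the junction of $L$, $R$ and the new root: one has to be careful that the ``root contributions'' $a_1,b_1$ of $L$ and $a_2,b_2$ of $R$ do not become double-counted, and that the two binomial coefficients produced by $\Bxy$ exactly absorb the factorials in the denominator of $\Phi$. Once the bookkeeping is settled, the proposition follows by extracting $|\NAT(B)|$ as the coefficient of $\frac{x^{a+1}}{(a+1)!}\frac{y^{b+1}}{(b+1)!}$ in $\Bx(B)$, yielding the hook-length formula~\eqref{eq:HLT}.
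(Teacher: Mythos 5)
Your proposal is correct and follows essentially the same route as the paper: both reduce the statement to computing $\Bx(B)$ via Proposition~\ref{prop:commutation_phi}, evaluate the pumping function $\Bxy$ on monomials, and track recursively how the factors $\wi_L(L)=a_1$ and $\wi_R(R)=b_2$ (the hook values of the subtree roots) build up the denominator of~\eqref{eq:HLT}. Your key identity $D_L D_R (a+1)(b+1)/(a_2 b_1)=D_B$ is exactly the paper's observation that the coefficient $\frac{1}{\wi_L(L)\wi_R(R)}$ recursively generates the hook product, merely repackaged as an induction on the closed form of $\Bx(B)$ itself.
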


Before proving this Proposition, let us illustrate it on an example.
\begin{exple} \label{exple_hook}
  Let $B = \scalebox{0.5}
{ \newcommand{\nodea}{\node[draw,fill,circle] (a) {$$};}
  \newcommand{\nodeb}{\node[draw,fill,circle] (b) {$$};}
  \newcommand{\nodec}{\node[draw,fill,circle] (c) {$$};}
  \newcommand{\noded}{\node[draw,fill,circle] (d) {$$};}
  \newcommand{\nodee}{\node[draw,fill,circle] (e) {$$};}
  \newcommand{\nodef}{\node[draw,fill,circle] (f) {$$};}
  \newcommand{\nodeg}{\node[draw,fill,circle] (g) {$$};}
  \newcommand{\nodeh}{\node[draw,fill,circle] (h) {$$};}
\begin{tikzpicture}[scale=0.5,baseline=(current bounding box.center),
                      every node/.style={inner sep=2pt}]
\matrix[column sep=.15cm, row sep=.15cm,ampersand replacement=\&]{
         \&         \&         \& \nodea  \&         \&         \&         \&         \&         \\
         \& \nodeb  \&         \&         \&         \&         \&         \& \nodee  \&         \\
 \nodec  \&         \& \noded  \&         \&         \& \nodef  \&         \&         \& \nodeh  \\
         \&         \&         \&         \&         \&         \& \nodeg  \&         \&         \\
};
\path[thick] (b) edge (c) edge (d)
    (f) edge (g)
    (e) edge (f)
    (a) edge (b) edge (e)
        (e) edge (h);
\end{tikzpicture}}$. The hook formula is given by:
    \begin{equation}
    |\NAT(\B)|
    =
    \frac{ 3! 4!}{
      \left( 1 \cdot 2 \cdot 1\right) \cdot
      \left(1 \cdot 1 \cdot 3 \cdot 1 \right)
    }=24\,.
    \end{equation}

\end{exple}

\begin{proof}
Proposition \ref{prop:commutation_phi} may be rewritten as:
\begin{equation}\label{eq:com_phi}
\Bx(B) = |\NAT(\B)| \frac{x^{\wi_L(B)}y^{\wi_R(B)}}{\wi_L(B)! \wi_R(B)!}.
\end{equation}

So, we get by a simple computation : 
\begin{align*}
\Bx(B)
&=
	\Bx\left({\NodeBT{L}{R}}\right)
	=
	\Bxy\left(\Bx(L), \Bx(R)\right)
\\
&= 
	\Bxy\left(
		|\NAT(L)|
		\frac{
			x^{\wi_L(L)}y^{\wi_R(L)} 
		}{
			\wi_L(L)! \wi_R(L)!
		}
		,
		|\NAT(R)|
		\frac{
			x^{\wi_L(R)}y^{\wi_R(R)}
		}{
			\wi_L(R)! \wi_R(R)!
		}
	\right)
\\
&=
	\frac{
		|\NAT(L)| \cdot |\NAT(R)|
	}{
		\wi_L(L)! \wi_R(L)! \wi_L(R)! \wi_R(R)!
	}
	\Bxy\left(
		x^{\wi_L(L)}y^{\wi_R(L)},
		x^{\wi_L(R)}y^{\wi_R(R)}
	\right)
\\
&=
	\frac{
		|\NAT(L)| \cdot |\NAT(R)| \wi_R(L) \wi_L(R)
	}{
		\wi_L(L)! \wi_R(L)! \wi_L(R)! \wi_R(R)!
	}
	\frac{
		x^{\wi_L(L)+\wi_L(R)}y^{\wi_R(L)+\wi_R(R)}
	}{
		(\wi_L(L)+\wi_L(R)) \cdot (\wi_R(L)+\wi_R(R))
	}
\\
&=
	\frac{
		 \wi_R(L) \wi_L(R)
	}{
		\wi_L(B) \wi_R(B)
	}
	|\NAT(L)| \frac{x^{\wi_L(L)}y^{\wi_R(L)}}{\wi_L(L)! \wi_R(L)!}
	\cdot 	
	|\NAT(R)| \frac{x^{\wi_L(R)}y^{\wi_R(R)}}{\wi_L(R)! \wi_R(R)!}
\end{align*}
which, together with \eqref{eq:com_phi} gives:
\begin{align*}
|\NAT(\B)| \frac{x^{\wi_L(B)}y^{\wi_R(B)}}{\wi_L(B)! \wi_R(B)!}
&=
	\frac{
		 \wi_R(L) \wi_L(R)
	}{
		\wi_L(B) \wi_R(B)
	}
	\cdot 	
	|\NAT(L)| \frac{x^{\wi_L(L)}y^{\wi_R(L)}}{\wi_L(L)! \wi_R(L)!}
	\\
	& \cdot
	|\NAT(R)| \frac{x^{\wi_L(R)}y^{\wi_R(R)}}{\wi_L(R)! \wi_R(R)!}\,.
\end{align*}

We obtain
\begin{align*}
\frac{
	|\NAT(\B)| 
}{
	(\wi_L(B)-1)! (\wi_R(B)-1)!
}
&=
	\wi_R(L) \wi_L(R)
	\frac{
		|\NAT(L)| \cdot |\NAT(R)| 
	}{
		\wi_L(L)! \wi_R(L)! \wi_L(R)! \wi_R(R)!
	}
\\
&=
	\frac{
		1
	}{
		\wi_L(L) \wi_R(R)
	}
	\frac{
		|\NAT(L)|
	}{
		(\wi_L(L)-1)! (\wi_R(L)-1)!
	}
	\\
	&\cdot
	\frac{
		|\NAT(R)| 
	}{
		(\wi_L(R)-1)! (\wi_R(R)-1)!
	}\,.
\end{align*}

We deduce that
\begin{align*}
\frac{
	|\NAT(\B)| 
}{
	\LV(B)! \RV(B)!
}
&=
	\frac{
		1
	}{
		\wi_L(L) \wi_R(R)
	}
	\cdot
	\frac{
		|\NAT(L)|
	}{
		\LV(L)! \RV(L)!
	}
	\cdot
	\frac{
		|\NAT(R)| 
	}{
		\LV(R)! \RV(R)!
	}\,.
\end{align*}

The coefficient $\frac{1}{\wi_L(L) \wi_R(R)}$ recursively gives the denominator of the hook-legth formula, and we get \eqref{eq:HLT}.
\end{proof}
\bigskip

Let $\LGFN$ be the \emph{exponential generating function of non-ambiguous trees}
with weight $\Phi$:
\begin{equation}
\LGFN(x,y):= \sum_{\N \in \NAT} \Phi(\N) = 
  \sum_{\N \in \NAT} \frac{x^{\wi_L(T)}}{\wi_L(T)!} \frac{y^{\wi_R(T)}}{\wi_R(T)!}\,.
\end{equation}
and $\GFN$ its derivative $\GFN=\partial_x\partial_y\LGFN$. Naturally,
they are linked by the relation
\[
    \LGFN(x,y) = y + x + \int_0^x \int_0^y \GFN(u,v)\diff u\diff v.
\]

They are both solutions of a fixed point differential equation.
\begin{prop}
\label{prop_equ_diff_nat}
The generating function $\GFN$ and $\LGFN$ can be computed by the following
fixed point differential equations:
\begin{equation}
\label{equ_gfn}
\LGFN = y + x + \int_x \int_y \partial_x \LGFN \cdot \partial_y \LGFN 
\hspace{.5cm}
\text{and}
\hspace{.5cm}
\GFN = 
\left( 1 + \int_{x} \GFN  \right)
\cdot
\left( 1 + \int_{y} \GFN  \right)
\end{equation}
\end{prop}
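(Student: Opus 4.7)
The plan is to leverage Proposition \ref{prop:commutation_phi}, which identifies $\Phi(\NATB[B])$ with $\Bx(B)$, and to translate the tree recursion from Lemma \ref{lem:recur_B} into a functional equation on generating series. Writing $\LGFN = \sum_{T\in\NAT}\Phi(T) = \sum_{B\in\BT}\Phi(\NATB[B])$, I would split this sum according to whether the shape $B$ is one of the two empty trees or is of the form $\NodeBT{L}{R}$. The empty shapes contribute $\Bx(\emptyset_L) + \Bx(\emptyset_R) = y + x$, while the remaining terms form $\sum_{L,R}\Bxy(\Bx(L),\Bx(R))$, where $L$ ranges over $\BT\setminus\{\emptyset_R\}$ and $R$ over $\BT\setminus\{\emptyset_L\}$ in order to match the domain of $\BNAT$.

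By the bilinearity of $\Bxy$, this sum equals $\Bxy(\LGFN - x,\, \LGFN - y)$. The key observation is that the correction terms $-x$ and $-y$ are annihilated by the partial derivatives $\partial_y$ and $\partial_x$ appearing in Definition~\ref{def:pumping}, so this simplifies to $\Bxy(\LGFN, \LGFN) = \int_0^x\!\int_0^y \partial_x\LGFN \cdot \partial_y\LGFN \diff u\diff v$, yielding the first equation.

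For the second equation, I would apply $\partial_x$ to the first integral equation via the fundamental theorem of calculus to obtain $\partial_x\LGFN = 1 + \int_0^y (\partial_x\LGFN)(x,v)\cdot(\partial_y\LGFN)(x,v)\diff v$. Applying $\partial_y$ once more extracts the identity $\GFN = \partial_x\LGFN \cdot \partial_y\LGFN$. Substituting this back produces $\partial_x\LGFN = 1 + \int_0^y \GFN\diff v$ and, by symmetry, $\partial_y\LGFN = 1 + \int_0^x \GFN\diff u$. Combining these with $\GFN = \partial_x\LGFN \cdot \partial_y\LGFN$ gives the second equation.

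The main subtle point I anticipate is the bookkeeping around the two empty trees, which must be excluded from each factor of $\Bxy$ to respect the domain of $\BNAT$; the fortunate fact that the missing contributions $x$ and $y$ are killed by the partial derivatives in $\Bxy$ is what makes the right-hand side symmetric in $\LGFN$ rather than featuring $\LGFN - x$ and $\LGFN - y$. Once this is observed, the rest of the argument is a routine translation of the bilinear recursion into differential-integral form.
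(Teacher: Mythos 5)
Your proposal is correct and follows essentially the same route as the paper's proof: the same decomposition of $\LGFN = \sum_{B\in\BT}\Bx(B)$ into the two empty shapes plus $\Bxy$-images of pairs $(L,R)\in\BT_L\times\BT_R$, the same observation that $\Bxy(\LGFN-x,\LGFN-y)=\Bxy(\LGFN,\LGFN)$ because the derivatives $\partial_y$ and $\partial_x$ in Definition~\ref{def:pumping} kill the correction terms, and the same differentiation of the integral equation to obtain $\GFN=\partial_x\LGFN\cdot\partial_y\LGFN$ together with $\partial_x\LGFN=1+\int_y\GFN$ and $\partial_y\LGFN=1+\int_x\GFN$. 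Your treatment of the second equation is in fact slightly more explicit than the paper's, which merely asserts these identities.
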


\begin{proof}
The first equation is a consequence of the definition of the bilinear
map $\Bxy$:
\begin{align*}
\LGFN   &= \sum_{B \in \BT} \Bx(B)\\
        &= y + x + \sum_{(L, R) \in \BT_L\times\BT_R} \Bx\left({\NodeBT{L}{R}}\right)\\
        &= y + x + \sum_{(L, R) \in \BT_L\times\BT_R} \Bxy(\Bx(L), \Bx(R))\\
        &= y + x + \Bxy(\LGFN-x, \LGFN-y)\\
        &= y + x + \Bxy(\LGFN, \LGFN),
\end{align*}
with $\BT_L=\BT\setminus\{\emptyset_R\}$ and
$\BT_R=\BT\setminus\{\emptyset_L\}$. To prove the second equation, remark that
the first equation implies the identity
\[
    \partial_x \partial_y \LGFN = \partial_x \LGFN. \partial_y \LGFN
\]
and moreover that we have
\[
    \partial_x \LGFN = 1 + \int_y \GFN
    \hspace{.5cm}
    \text{and}
    \hspace{.5cm}
    \partial_y \LGFN = 1 + \int_x \GFN.\
\]
\end{proof}

From these identities, a closed formula can be computed for $\GFN$ and $\LGFN$.
The expression of $\GFN$ was already proven in \cite{CE} using permutations.

\begin{prop}\label{prop:gf_expressions}
The doubly exponential generating functions for non-ambiguous trees are given by
$$
\LGFN = x+y-\log( 1 - (e^x-1)(e^y-1) )\,,
\quad\text{and}\quad
\GFN
=
\frac{
e^{x+y}
}{
\left(
1 - (e^x - 1)(e^y - 1)
\right)^2
}.
$$
\end{prop}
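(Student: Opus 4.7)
The plan is to exploit the second differential equation in Proposition \ref{prop_equ_diff_nat}, or equivalently its consequence noted in its proof,
\[ \partial_x\partial_y \LGFN = \partial_x\LGFN \cdot \partial_y\LGFN, \]
together with the boundary conditions $\LGFN(0,y)=y$ and $\LGFN(x,0)=x$. These boundary values follow directly from the series definition: a NAT $T$ with $\wi_L(T)=0$ must be $\emptyset_L$, which contributes exactly $y$, and symmetrically for $y=0$.

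The crucial observation will be that this nonlinear PDE linearises under the substitution $G := e^{-\LGFN}$. Since $\LGFN$ has no constant term, $G$ is a well-defined element of $\QQ[[x,y]]$. A direct computation gives
\[ \partial_x\partial_y G \;=\; -\bigl(\partial_x\partial_y \LGFN - \partial_x\LGFN\cdot\partial_y\LGFN\bigr)\,G \;=\; 0, \]
so $G(x,y)=a(x)+b(y)$ for some formal series $a,b$. Specialising to $y=0$ and to $x=0$, the boundary conditions give $a(x)+b(0)=e^{-x}$ and $a(0)+b(y)=e^{-y}$; combined with $a(0)+b(0)=G(0,0)=1$ these collapse to
\[ e^{-\LGFN}\;=\;e^{-x}+e^{-y}-1. \]

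From here, a short algebraic rewrite finishes the first formula: multiplying and dividing by $e^{x+y}$ and expanding $(e^x-1)(e^y-1)$ yields
\[ e^{-x}+e^{-y}-1 \;=\; e^{-(x+y)}\bigl(1-(e^x-1)(e^y-1)\bigr), \]
so taking $-\log$ gives $\LGFN=x+y-\log\bigl(1-(e^x-1)(e^y-1)\bigr)$. For $\GFN$, I use $\GFN=\partial_x\LGFN\cdot\partial_y\LGFN$ and differentiate $e^{-\LGFN}=e^{-x}+e^{-y}-1$ directly: this gives $\partial_x\LGFN = e^{-x}/(e^{-x}+e^{-y}-1)$, which after multiplying numerator and denominator by $e^{x+y}$ equals $e^y/\bigl(1-(e^x-1)(e^y-1)\bigr)$, and symmetrically for $\partial_y\LGFN$; the product is the stated closed form.

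The main obstacle is really just spotting the linearising substitution $G=e^{-\LGFN}$; once it is in hand, the rest reduces to an elementary separation of variables using the two boundary conditions. The only point that deserves a check is that all manipulations are legitimate in $\QQ[[x,y]]$, which is immediate since $\LGFN(0,0)=0$ and the argument of the logarithm $1-(e^x-1)(e^y-1)$ has constant term $1$.
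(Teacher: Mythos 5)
Your proof is correct, and it takes a genuinely different route from the paper's. The paper solves the system $\partial_x\partial_y f=\partial_x f\cdot\partial_y f$, $f(x,y)=f(y,x)$ of Proposition~\ref{prop_equ_diff_nat} by an exploratory method: it produces the particular solution $-\ln(x+y+c)$ via the ansatz $\partial_x f_y=f_y^2$ and $\partial_y f_x=f_x^2$, and then invokes two structural facts stated without proof --- that $f(\varphi(x),\varphi(y))$ is again a solution for any $\varphi$ with zero constant term, and that a solution is uniquely determined by the initial condition $f(x,0)$ together with the symmetry --- to match $\LGFN(x,0)=x$, arriving at $e^{-\LGFN}=e^{-x}+e^{-y}-1$. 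Your linearising substitution $G=e^{-\LGFN}$ reaches that same intermediate identity directly: the PDE becomes $\partial_x\partial_y G=0$, whose general solution $a(x)+b(y)$ in $\QQ[[x,y]]$ is immediate (all coefficients of $x^iy^j$ with $i,j\geq 1$ vanish), and the two boundary conditions $\LGFN(x,0)=x$ and $\LGFN(0,y)=y$ --- which you correctly justify from the definition of $\Phi$ and the conventions $(\wi_L(\emptyset_R),\wi_R(\emptyset_R))=(1,0)$, $(\wi_L(\emptyset_L),\wi_R(\emptyset_L))=(0,1)$ --- pin down $G=e^{-x}+e^{-y}-1$ with no appeal to an unproven uniqueness lemma. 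Your derivation of $\GFN$ is likewise sound, since $\GFN=\partial_x\partial_y\LGFN=\partial_x\LGFN\cdot\partial_y\LGFN$ is exactly the identity recorded in the proof of Proposition~\ref{prop_equ_diff_nat}. What your route buys is rigour and economy: every manipulation is a coefficientwise identity of formal power series. What the paper's route buys is a method that generalises: the transformation property (composing a solution with $s_1(x_1),\dots,s_d(x_d)$) is precisely what reappears as Proposition~\ref{prop_sol_d_d-1} in higher dimension, where your trick is unavailable, since for $d>2$ the equation $\partial_1\cdots\partial_d f=\prod_i \partial_i f$ is not linearised by $e^{-f}$.
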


\begin{proof}
We know that $\LGFN$ is a solution of
\begin{equation}
\label{equ_lgfn}
        \begin{cases}
            \partial_x \partial_y f = \partial_x f \times \partial_y f,\\
            f(x,y)=f(y,x)
        \end{cases}
\end{equation}

This system of equations satisfies the two following properties:
\begin{itemize}
    \item if $s$ is a solution of Equation~\ref{equ_lgfn} then for each power
        series $\varphi$ with constant term equal to zero,
        $f(\varphi(x),\varphi(y))$ is also a solution;
    \item if we fix the initial condition $f(x,0)$, there exists a unique
        formal power series solution to Equation~\ref{equ_lgfn}.
\end{itemize}

Let $f$ be a particular solution. Let us consider the notation
$f_x:=\partial_xf$ and $f_y:=\partial_yf$, then
\[
    \partial_xf_y\cdot\partial_yf_x=f_x^2f_y^2.
\]
We suppose that $\partial_xf_y=f_y^2$ and $\partial_yf_x=f_x^2$, hence
\[
    f_y=\frac{-1}{x+c_1(y)} \text{ and } f_x=\frac{-1}{y+c_2(x)}.
\]
Since $f_x^2=\partial_y f_x=f_xf_y$, we get $f_y=f_x$, which implies
\[
    x+c_1(y)=y+c_2(x).
\]
As a consequence, $c_1(z)=c_2(z)=z+c$ with $c$ a real number.
Finally
\[
    f(x,y)=-\ln(x+y+c).
\]

Conversely, $-\ln(x+y+c)$ satisfies Equation~\eqref{equ_lgfn}. It remains to
find a real number $c$ and formal power series $\varphi$ such that
$\LGFN(x,0)=-\ln(\varphi(x)+c)$. Since $\LGFN(x,0)=x$, we get
$\varphi(x)=e^{-x}-c$. Moreover, the condition $\varphi(0)=0$ implies $c=1$.
As a consequence
\[
    \LGFN(x,y)=-\ln(e^{-x}+e^{-y}-1),
\]
which can be rewritten as
\[
    \LGFN(x,y)=y+x-\ln(1-(e^x-1)(e^y-1)).
\]
Differentiating with respect to $x$ and $y$, we find the expression of $\GFN$.
\end{proof}

In the context of the PASEP \cite{CorteelWilliams}, it is natural to consider statistics $\LO(\B)$ and $\RO(\B)$. 
Let us briefly recall that the partially asymmetric exclusion process  (PASEP) is a model for a system 
of interacting particles hopping left and right on a one-dimensional lattice of $n$ sites. 
In the general case, the probability of a given state of the model depends on five parameters $(q,\alpha,\beta,\gamma,\delta)$
which give the probability of transitions (a particle moving to the right or the left, or going in or out the model, when possible). 
Tree-like tableaux \cite{ABN} have been proven to give a combinatorial interpretation for the steady state of the PASEP when $\gamma=\delta=0$,
the weight in $\alpha$ (resp. $\beta$) corresponding to the statistic $\LO(\B)$ (resp. $\RO(\B)$) defined as follows.

\begin{defi}
The \emph{leftmost branch} of a binary tree $\B$ is the set of vertices
$\{s_0,\ldots,s_k\}$ such that $s_0$ is the root of $\B$, $s_k$ is a leaf and
$s_{i+1}$ is the left child of $s_i$, for each $i<k$. Similarly, we define the \emph{rightmost
branch} of a binary tree.
We denote by $\LO(\B)$ and $\RO(\B)$ the number of non-root
vertices respectively in the leftmost and rightmost branches.
\end{defi}
 We extend these
definitions to non-ambiguous trees. For example, in
Figure~\ref{fig_exple_nat}, we have $\LO(T)=2$ and $\RO(T)=5$. 
Let us define the following $(\alpha, \beta)$-generating function for
non-ambiguous trees:
\[
    \GFN(x,y;\alpha, \beta)
    =
    \sum_{\N \in \NAT}
        \frac{
            x^{|\LV(\N)|}
            \cdot
            y^{|\RV(\N)|}
            \cdot
            \alpha^{\LO(\N)}
            \cdot
            \beta^{\RO(\N)}
        }{
            |\LV(\N)|! \cdot |\RV(\N)|!
        }.
\]
It satisfies an $(\alpha,\beta)$-analogue for the identity of
Proposition~\ref{prop_equ_diff_nat}.
\begin{prop}\label{diff_eq_nat}
A differential equation for $\GFN(x,y;\alpha, \beta)$ is
$$
        \GFN(x,y;\alpha, \beta)=
        \left(
            1 + \alpha \int_x \GFN(u,y;\alpha, 1)\diff u
        \right)
        \cdot
        \left(
            1 + \beta \int_y \GFN(x,v;1, \beta)\diff v
        \right).
$$
\end{prop}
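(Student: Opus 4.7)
The plan is to adapt the proof of Proposition~\ref{prop_equ_diff_nat}, refining the root decomposition so as to track the additional statistics $\LO$ and $\RO$. The crucial observation is that if a non-empty NAT $T$ decomposes as $(r, T_L, T_R)$, then
\[
    \LO(T) = [T_L \neq \emptyset_L] + \LO(T_L)
    \qquad\text{and}\qquad
    \RO(T) = [T_R \neq \emptyset_R] + \RO(T_R).
\]
In particular, $\alpha^{\LO(T)}$ depends only on $T_L$, and $\beta^{\RO(T)}$ only on $T_R$; this is precisely what forces the first factor of the claimed identity to involve $\GFN(\cdot,\cdot;\alpha,1)$ and the second $\GFN(\cdot,\cdot;1,\beta)$.

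I would split the sum defining $\GFN(x,y;\alpha,\beta)$ into four pieces according to whether $T_L = \emptyset_L$ and whether $T_R = \emptyset_R$. The root-only case contributes $1$. The case $T_L = \emptyset_L$, $T_R \neq \emptyset_R$ is in bijection with the non-empty NATs (via $T \mapsto T_R$); using $|\LV(T)|=|\LV(T_R)|$, $|\RV(T)|=1+|\RV(T_R)|$, $\LO(T)=0$ and $\RO(T)=1+\RO(T_R)$, this case rearranges to $Y := \beta \int_y \GFN(x,v;1,\beta)\,dv$. By symmetry, the case $T_L \neq \emptyset_L$, $T_R = \emptyset_R$ contributes $X := \alpha \int_x \GFN(u,y;\alpha,1)\,du$.

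The main technical step is the case where both $T_L$ and $T_R$ are non-empty. Using~\eqref{Equation:BNat-binom} to count labellings and the identities $|\LV(T)| - |\LV(T_R)| = 1 + |\LV(T_L)|$ and $|\RV(T)| - |\RV(T_L)| = 1 + |\RV(T_R)|$, the binomial-factorial prefactor splits as
\[
\binom{|\LV(T)|}{|\LV(T_R)|}\binom{|\RV(T)|}{|\RV(T_L)|} \frac{1}{|\LV(T)|!\,|\RV(T)|!} = \frac{1}{(1+|\LV(T_L)|)!\,|\RV(T_L)|!} \cdot \frac{1}{|\LV(T_R)|!\,(1+|\RV(T_R)|)!}.
\]
This decouples the double sum into a product of independent sums over $T_L$ and $T_R$; each of these, once the shifted factorial $(1+\cdots)!$ is absorbed into an integration in $x$ or $y$, equals exactly $X$ or $Y$. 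Hence this fourth case contributes the product $X\cdot Y$.

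Summing the four contributions gives $1 + X + Y + XY = (1+X)(1+Y)$, which is the claimed identity. The main subtlety will be the bookkeeping in the fourth case: verifying that the binomial/factorial decoupling above reproduces exactly the same integrated generating functions $X$ and $Y$ as those obtained independently from the two ``mixed-empty'' cases, with no residual shifts or stray factors of $\alpha$ or $\beta$.
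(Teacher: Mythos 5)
Your proof is correct and is essentially the paper's own argument in expanded form: the key observation that $\LO(T)$ depends only on $T_L$ and $\RO(T)$ only on $T_R$ is exactly what the paper encodes in its modified pumping function $\Bxy^{(\alpha,\beta)}(f,g)=\alpha\beta\,\Bxy(f|_{\beta=1},g|_{\alpha=1})$, and your four-case root decomposition with $1+X+Y+XY=(1+X)(1+Y)$ is just the unwound version of the resulting recursion, the paper's proof being a one-line reference to this machinery. The bookkeeping you flag as the main subtlety does check out (e.g.\ $\binom{n}{k}/n!=1/\bigl(k!\,(n-k)!\bigr)$ with $n-k=1+|\LV(T_L)|$, and the shifted factorials are absorbed by $\int_x$ and $\int_y$ exactly as in the mixed-empty cases), so there is no gap.
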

\begin{proof}
We just need to define a new pumping function:
\[
    \Bxy^{(\alpha, \beta)}(f,g)
    =
    \alpha\beta \Bxy(\left.f\right|_{\beta=1},\left.g\right|_{\alpha=1})
\]
and deduce the expected differential equation.
\end{proof}

The solution of the new differential equation is given by 
Proposition~\ref{pro_equ_diff_alpha_beta}, a bijective proof is given in
Section~\ref{bijectionslabelledtree}.
\begin{prop}\label{gen_ser_nat}
The $(\alpha,\beta)$-exponential generating function for non-ambiguous trees is equal to
\label{pro_equ_diff_alpha_beta}
\[
    \GFN(x,y;\alpha, \beta)
    =
    \frac{
        e^{\alpha x + \beta y}
    }{
        \left(
            1 - (e^x -1)(e^y - 1)
        \right)^{\alpha+\beta}
    }.
\]
\end{prop}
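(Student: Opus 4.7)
The plan is to verify that the candidate $\widetilde F(x,y;\alpha,\beta) := \dfrac{e^{\alpha x+\beta y}}{(1-(e^x-1)(e^y-1))^{\alpha+\beta}}$ satisfies the integro-differential equation of Proposition~\ref{diff_eq_nat}, and then to argue that this equation, together with the boundary values $\GFN(0,y;\alpha,\beta)=e^{\beta y}$ and $\GFN(x,0;\alpha,\beta)=e^{\alpha x}$, uniquely determines $\GFN(x,y;\alpha,\beta)$, forcing the two series to coincide. The boundary values come from the fact that a NAT with no left child is a rightward path, each of whose non-root vertices lies on the rightmost branch, contributing $\beta^n y^n/n!$ at size $n$; the formula $\widetilde F$ restricts correctly to both axes.

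The computational core reduces to the single explicit identity
\[
1+\alpha\int_0^x \widetilde F(t,y;\alpha,1)\,\diff t \;=\; \frac{e^{\alpha x}}{\bigl(1-(e^x-1)(e^y-1)\bigr)^{\alpha}},
\]
which I would prove by differentiating in $x$: at $x=0$ both sides equal $1$, and the $x$-derivative of the right-hand side collapses, via the telescoping identity $\bigl(1-(e^x-1)(e^y-1)\bigr)+e^x(e^y-1)=e^y$, to $\alpha\,e^{\alpha x+y}/\bigl(1-(e^x-1)(e^y-1)\bigr)^{\alpha+1}=\alpha\widetilde F(x,y;\alpha,1)$, matching the integrand. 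The symmetric identity obtained by swapping $(x,\alpha)\leftrightarrow(y,\beta)$ is then automatic, and multiplying the two resulting closed forms reproduces $\widetilde F(x,y;\alpha,\beta)$; this is precisely the assertion of Proposition~\ref{diff_eq_nat} applied to $\widetilde F$.

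The main obstacle is making uniqueness clean, since the equation does not recurse on $\GFN$ at a single pair $(\alpha,\beta)$ but rather expresses $\GFN(\cdot;\alpha,\beta)$ in terms of the specializations $\GFN(\cdot;\alpha,1)$ and $\GFN(\cdot;1,\beta)$. I would handle this by a three-stage induction: specializing the equation at $\beta=1$ yields a fixed-point relation for $\GFN(\cdot;\alpha,1)$ whose other input is the already-known $\GFN(\cdot;1,1)$ of Proposition~\ref{prop:gf_expressions}, and extracting the coefficient of $x^m y^n$ determines $\GFN(\cdot;\alpha,1)$ by induction on $m$; the symmetric specialization gives $\GFN(\cdot;1,\beta)$; with both in hand, the full equation determines $\GFN(\cdot;\alpha,\beta)$ by a last induction. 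Because $\widetilde F$ satisfies all three fixed-point relations (by the computation above, together with Proposition~\ref{prop:gf_expressions} for the $(1,1)$-case), these inductions conclude that $\widetilde F=\GFN(x,y;\alpha,\beta)$.
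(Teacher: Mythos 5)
Your proof is correct, but it takes a genuinely different route from the paper. The paper's actual proof of Proposition~\ref{gen_ser_nat} is the bijective one of Section~\ref{bijectionslabelledtree}: via the zigzag map $\psi$, NATs are sent to 2-coloured block decreasing cycles, and the hook decomposition (root's hook, child-free vertices on it, and the NATs hanging off its left and right branches) yields the refined series $z e^{\alpha x+\beta y}\bigl(1-z(e^x-1)(e^y-1)\bigr)^{-(\alpha+\beta)}$ of Theorem~\ref{th:ana_zigzag_alpha_beta}, of which the proposition is the $z=1$ specialization; the paper only gestures at the analytic route by saying the formula is ``the solution of the new differential equation''. You instead flesh out exactly that analytic route: your verification is sound (the telescoping identity $\bigl(1-(e^x-1)(e^y-1)\bigr)+e^x(e^y-1)=e^y$ does give $\partial_x\bigl(e^{\alpha x}(1-(e^x-1)(e^y-1))^{-\alpha}\bigr)=\alpha\,e^{\alpha x+y}\bigl(1-(e^x-1)(e^y-1)\bigr)^{-(\alpha+1)}$, and multiplying the two closed factors reproduces the candidate), and your three-stage uniqueness argument correctly handles the structural point that Proposition~\ref{diff_eq_nat} recurses only through the specializations $\beta=1$ and $\alpha=1$, anchored by Proposition~\ref{prop:gf_expressions}. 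Two minor observations: the boundary values $\GFN(0,y;\alpha,\beta)=e^{\beta y}$ and $\GFN(x,0;\alpha,\beta)=e^{\alpha x}$ are not actually needed, since the coefficient extraction from the fixed-point relation already determines the constant terms (the $i=0$ term of the first factor is just $1$); and your ``last induction'' is vacuous, because once $\GFN(\cdot;\alpha,1)$ and $\GFN(\cdot;1,\beta)$ are known the equation gives $\GFN(\cdot;\alpha,\beta)$ outright. What each approach buys: yours is shorter, self-contained modulo Propositions~\ref{diff_eq_nat} and~\ref{prop:gf_expressions}, and makes rigorous the uniqueness step the paper leaves implicit; the paper's bijective proof costs more machinery but simultaneously establishes the $z$-refinement by the hook statistic and the explicit enumeration $(p-1)!\,(\alpha+\beta)_{p-1}\operatorname S_{2,\alpha}(i,p)\operatorname S_{2,\beta}(j,p)$ of Proposition~\ref{thm_p}, which your argument does not recover.
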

If we develop this expression we obtain an $(\alpha,\beta)$-analogue of the
enumeration of non-ambiguous trees of fixed geometric size. 

Let us recall that the $q$-analogue of the Stirling
numbers of the second kind $\operatorname S_{2,q}(n,k)$
is the number of ways to partition a set of size $n$ in $k$ parts, where the power of $q$ counts the number of elements different from $n$ in the subset containing $n$. For instance, $S_{2,q}(3,2) = 1+2q$, $S_{2,q}(4,2) = 1+3q+3q^2$ and $S_{2,q}(4,3) =3+3q$.

\begin{prop}\label{pro_ana_enum_alpha_beta} \label{thm_p}
    Let $i$ and $j$ be two positive integers. The $(\alpha,\beta)$-analogue of
    the number of $NATs$ of geometric size $i\times j$ is
    \[
        \left[\frac{x^{i-1}\,y^{j-1}}{(i-1)!\,(j-1)!}\right]
        \GFN(x,y;\alpha,\beta)
        =
        \sum_{p\geqslant 1}(p-1)!\,(\alpha+\beta)_{p-1}
        \operatorname S_{2,\alpha}(i,p)\operatorname S_{2,\beta}(j,p).
    \]
    with $q^{(n)}:=q(q+1)\cdots(q+n-1)$
the rising factorial and $\operatorname S_{2,q}(n,k)$ the 
$q$-analogue of the Stirling
numbers of the second kind. 
\end{prop}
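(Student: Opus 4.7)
The plan is to extract the bivariate coefficient directly from the closed form $\GFN(x,y;\alpha,\beta)=e^{\alpha x+\beta y}\bigl(1-(e^x-1)(e^y-1)\bigr)^{-(\alpha+\beta)}$ supplied by Proposition~\ref{gen_ser_nat}. First I would separate the variables by expanding the denominator via the generalised binomial series, obtaining
\begin{equation*}
\GFN(x,y;\alpha,\beta)=\sum_{k\geq 0}\frac{(\alpha+\beta)^{(k)}}{k!}\,\bigl(e^{\alpha x}(e^x-1)^k\bigr)\bigl(e^{\beta y}(e^y-1)^k\bigr),
\end{equation*}
where $(\alpha+\beta)^{(k)}=(\alpha+\beta)(\alpha+\beta+1)\cdots(\alpha+\beta+k-1)$ is the rising factorial appearing in the statement as $(\alpha+\beta)_{k}$.

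The core of the proof is then the following exponential generating function identity for the $q$-Stirling numbers of the second kind:
\begin{equation*}
\sum_{n\geq k}\operatorname{S}_{2,q}(n,k)\,\frac{x^{n-1}}{(n-1)!}=e^{qx}\,\frac{(e^x-1)^{k-1}}{(k-1)!}.
\end{equation*}
To prove it, I would condition on the size $s$ of the block $B$ containing the maximal element $n$: there are $\binom{n-1}{s-1}$ ways to pick the other elements of $B$, then $\operatorname{S}_{2}(n-s,k-1)$ ordinary partitions of the remaining $n-s$ elements into $k-1$ blocks, and the weight is $q^{s-1}$. Dividing the resulting recurrence by $(n-1)!$, summing on $n$ with $m=n-s$, and recognising the classical EGF $\sum_{m}\operatorname{S}_{2}(m,k-1)\,x^m/m!=(e^x-1)^{k-1}/(k-1)!$ together with $\sum_{s\geq 1}(qx)^{s-1}/(s-1)!=e^{qx}$, the identity drops out.

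Read as a coefficient extraction, the identity says $(i-1)!\,[x^{i-1}]\,e^{\alpha x}(e^x-1)^k=k!\,\operatorname{S}_{2,\alpha}(i,k+1)$ and similarly in $y$. Substituting these two evaluations into the variable-separated expansion of $\GFN$ gives
\begin{equation*}
(i-1)!(j-1)!\,[x^{i-1}y^{j-1}]\,\GFN(x,y;\alpha,\beta)=\sum_{k\geq 0}(\alpha+\beta)^{(k)}\,k!\,\operatorname{S}_{2,\alpha}(i,k+1)\,\operatorname{S}_{2,\beta}(j,k+1),
\end{equation*}
and reindexing $p=k+1$ produces the announced formula. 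The only genuinely non-routine step is the $\operatorname{S}_{2,q}$ exponential generating function identity; everything else is bookkeeping of factorials and a single binomial expansion.
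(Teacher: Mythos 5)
Your proof is correct. The binomial expansion $(1-u)^{-(\alpha+\beta)}=\sum_{k\geq 0}\frac{(\alpha+\beta)^{(k)}}{k!}u^k$ applied to $u=(e^x-1)(e^y-1)$, the exponential generating function identity
\begin{equation*}
\sum_{n\geq k}\operatorname{S}_{2,q}(n,k)\,\frac{x^{n-1}}{(n-1)!}=e^{qx}\,\frac{(e^x-1)^{k-1}}{(k-1)!}
\end{equation*}
(your conditioning on the block containing $n$ matches exactly the paper's definition of $\operatorname{S}_{2,q}$, and the boundary case $\operatorname{S}_{2,q}(n,1)=q^{n-1}$ checks out), and the reindexing $p=k+1$ all go through, and there is no circularity in invoking Proposition~\ref{gen_ser_nat}, which the paper establishes independently via the differential equation of Proposition~\ref{diff_eq_nat}. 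In substance you are carrying out the route the paper dismisses in a single sentence (``if we develop this expression\dots'' right after Proposition~\ref{gen_ser_nat}) without ever writing the details; the proof the paper actually writes out is different and bijective: in Section~\ref{bijectionslabelledtree}, Burstein's zigzag map sends a NAT of geometric size $i\times j$ with hook number $p$ to a 2-coloured block decreasing cycle with $p$ blue blocks, and the product $(p-1)!\,(\alpha+\beta)_{p-1}\operatorname{S}_{2,\alpha}(i,p)\operatorname{S}_{2,\beta}(j,p)$ is built factor by factor from that combinatorial decomposition (Theorem~\ref{th:ana_zigzag_alpha_beta}). The trade-off is clear: your computation is shorter and self-contained modulo the closed form, but treats $p$ as an opaque summation index; the paper's bijection proves the strictly stronger refined statement that $p$ is the hook statistic of the tree, and yields the $z$-refined generating function $ze^{\alpha x+\beta y}\left(1-z(e^x-1)(e^y-1)\right)^{-(\alpha+\beta)}$ as a bonus.
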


\subsection{Combinatorial interpretation with the zigzag bijection of Burstein}
\label{bijectionslabelledtree}

The purpose of this subsection is to explain combinatorially Propositions~\ref{gen_ser_nat} and \ref{pro_ana_enum_alpha_beta}.
To do so, we use the ``zigzag'' bijection introduced and studied in \cite{SteiWill}. This bijection, that we will denote by $\varphi$, was further studied by Burstein \cite{Bur07}.

First, let us introduce the statistic that corresponds to the integer $p$ in the
enumeration formula of Proposition~\ref{pro_ana_enum_alpha_beta}.
\begin{defi}\label{def_hook}
    Let $B$ be a binary tree and $v$ one of its node. The \emph{hook} of a
    vertex $v$ is the union of $\{v\}$, its leftmost branch and its rightmost
    branch. We say that $v$ is the root of its hook. There is a unique way to
    partition the vertices into hooks. The number of hooks in such a partition is
    the \emph{hook number of the tree} and it is denoted by $hook(T)$. We extend this
    definition to non-ambiguous trees.
\end{defi}

\begin{rque}
We can obtain recursively the unique partition of a binary tree into hooks by
extracting the root's hook and iterating the process on each tree of the
remaining forest.
\end{rque}

\begin{exple} In Figure \ref{fig:fig_nat}, the hook whose root is \textcolor{blue}{10} is highlighted in purple. The partition of all the vertices into hooks can be seen by
  keeping only bold edges. The roots of the hooks are $\{(\textcolor{red}{11},\textcolor{blue}{12}), \textcolor{blue}{10}, \textcolor{red}{9}, \textcolor{blue}{8}, \textcolor{red}{8}, \textcolor{red}{7}, \textcolor{red}{3}, \textcolor{blue}{2}\}$, and so the hook number of the tree is $8$.
\begin{figure} 
\begin{center}
\begin{tikzpicture}[scale=0.45]
\node (r) at (0,8) {(\textcolor{red}{11},\textcolor{blue}{12})};
\node (11b) at (1,7) {\textcolor{blue}{11}};
\node[fill=purple!50] (10b) at (-1,6) {\textcolor{blue}{10}};
\node[fill=purple!50] (9b) at (0,5) {\textcolor{blue}{9}};
\node (8b) at (-1,4) {\textcolor{blue}{8}};
\node (7b) at (2,6) {\textcolor{blue}{7}};
\node (6b) at (0,3) {\textcolor{blue}{6}};
\node (5b) at (2,4){\textcolor{blue}{5}};
\node (4b) at (3,5) {\textcolor{blue}{4}};
\node (3b) at (4,4){\textcolor{blue}{3}};
\node (2b) at (3,1){\textcolor{blue}{2}};
\node (1b) at (5,3){\textcolor{blue}{1}};
\node (10r) at (-2,7){\textcolor{red}{10}};
\node (9r) at (1,5){\textcolor{red}{9}};
\node (8r) at (1,3){\textcolor{red}{8}};
\node (7r) at (3,3){\textcolor{red}{7}};
\node[fill=purple!50] (6r) at (-2,5){\textcolor{red}{6}};
\node (5r) at (2,2){\textcolor{red}{5}};
\node (4r) at (-2,3){\textcolor{red}{4}};
\node (3r) at (-1,2){\textcolor{red}{3}};
\node (2r) at (-3,6){\textcolor{red}{2}};
\node[fill=purple!50] (1r) at (-3,4){\textcolor{red}{1}};
\draw[line width=3pt] (2r)--(10r)--(r)--(11b)--(7b)--(4b)--(3b)--(1b);
\draw  (7b)--(9r);
\draw[line width=3pt] (9r)--(5b);
\draw (5b)--(8r);
\draw (3b)--(7r);
\draw[line width=3pt] (7r)--(5r);
\draw (5r)--(2b);
\draw (10r)--(10b);
\draw[line width=3pt, purple!50] (9b)--(10b)--(6r)--(1r);
\draw[line width=3pt] (6b)--(8b)--(4r);
\draw (3r)--(6b);
\draw (6r)--(8b);
\end{tikzpicture}
\caption{Hooks (bold edges) on a non-ambiguous tree}\label{fig:fig_nat}
\end{center}
\end{figure}
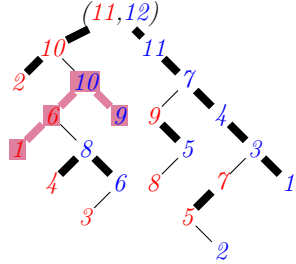
\end{exple}
The correspondence between $p$ and this new statistic is proven hereafter.

Let us now define the bijection $\varphi$ between non-ambiguous trees and permutations. From now on, we juggle between the geometric representation and the labelled binary
tree representation of non-ambiguous trees. Let $T$
be a non-ambiguous tree. We remove the first column of $T$, and denote by $T'$ the result of this deletion. We number, starting with
1, the south-east border, starting from the westmost edge
(Figure~\ref{fig:ana_zigzag}). Let $\sigma$ be the permutation $\varphi(T)$ and
$i$ the positive integer corresponding to a border edge. The image $\sigma(i)$
is defined as follows. Let $e$ be the border edge numbered by $i$ and suppose
that $e$ is vertical. If $e$ has no point to its left in the same row in $T'$, then
$\sigma(i)=i$. Else, starting from the leftmost point of the row in $T'$, we go down to
the closest point in the same column, then right to the closest point in the
same row and so on by doing a "zigzag" which alternates going down and right, until we reach a border edge $e'$. The image $\sigma(i)$
corresponds to the integer associated to $e'$. If $e$ is horizontal, we start
with the topmost point of the same column and then we "zigzag", starting from
with right direction and going alternatively down and right, to find $\sigma(i)$. For example, if $T$ if the
non-ambiguous tree of Figure~\ref{fig:ana_zigzag}, then $\sigma(23)=13$ and
$\sigma(3)=7$, and more generally
\[
    \sigma=(13\;1\;6\;20\;12\;5\;22\;10\;2\;23)\;(21)\;(18\;3\;7\;17\;15\;4\;19)\;(14\;9\;8\;16)\;(11)
\]
where we use the cyclic notation for permutations.

\begin{rque}\cite[Section 2]{Bur07}\label{rem:BurCyc}
It is a simple consequence of the construction that every cycle of $\varphi(T)$ corresponds either to an empty line or to a binary tree in $T'$. 
\end{rque}
\begin{figure}[h]
    \begin{center}
        \includegraphics[scale=.6]{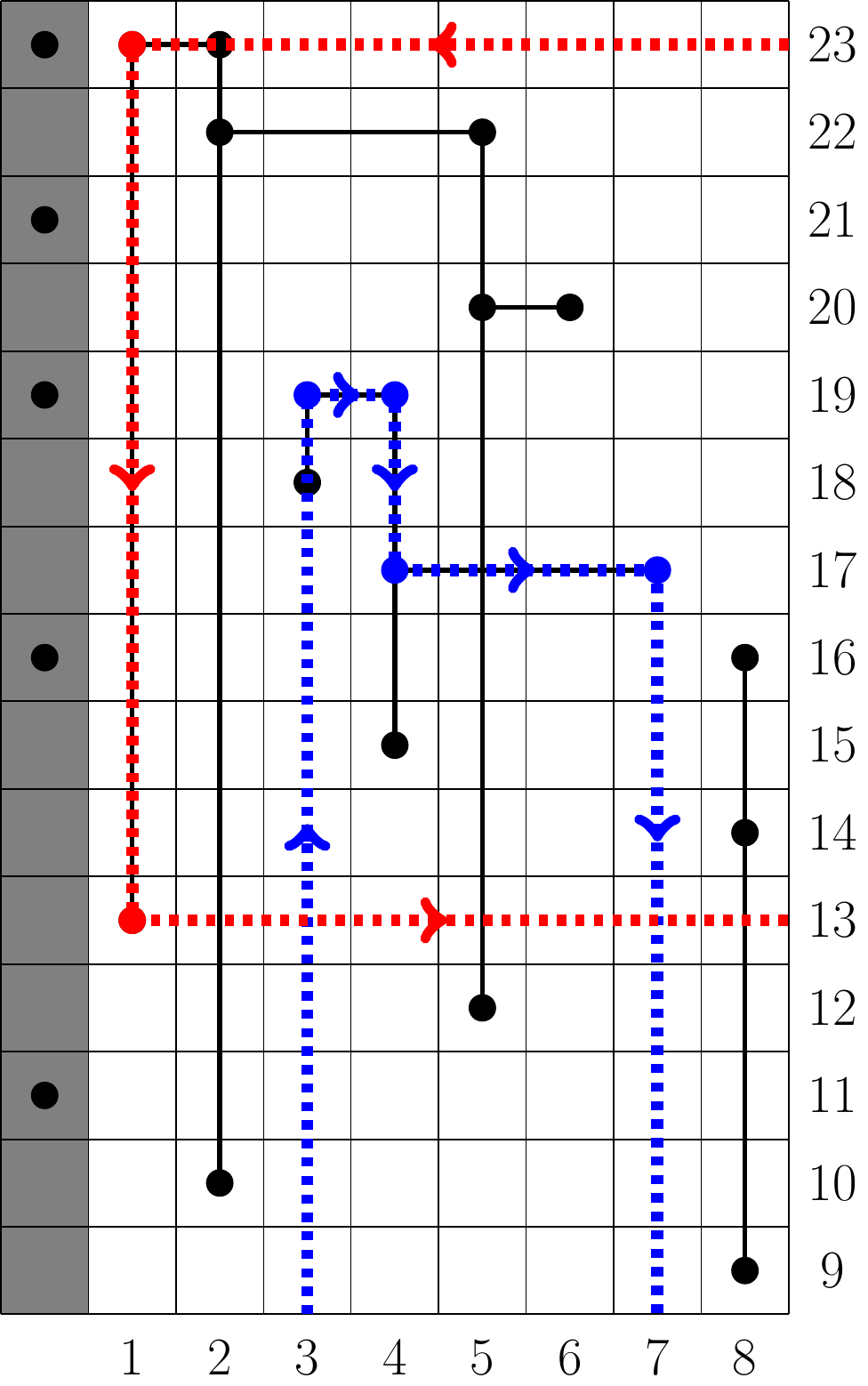}
        \caption{Example of the bijection of Burstein. }
        \label{fig:ana_zigzag}
    \end{center}
\end{figure}

Let us recall that an excedance in a permutation $\sigma$ is an entry $i$ such that $\sigma(i)>i$. In the cyclic notation, it corresponds to an ascent (an integer followed -cyclically- by a greater one).
\begin{prop}\cite[Theorem 14]{SteiWill}
    Let $\w_L$ and $\w_R$ be two positive integers. The map $\varphi$ is a
    bijection between non-ambiguous trees of geometric size $\w_L\times\w_R$ and
    permutations of size $\w_L+\w_R-1$ such that all their excedances are at
    positions $1,\cdots,\w_R-1$.
\end{prop}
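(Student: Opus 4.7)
The plan is to prove the claim in three steps: establish that $\varphi$ is well-defined, verify the excedance condition, and construct an explicit inverse. I work throughout with the geometric representation of NATs.

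First, $\varphi$ is well-defined and produces a permutation. Each zigzag step either strictly decreases the row index (a ``down'' move) or strictly increases the column index (a ``right'' move), so the process terminates inside the bounded rectangle of $T'$. Moreover the zigzag is reversible: from any interior point, knowing whether it was reached by a down-move or a right-move determines the previous point, so two distinct starting border edges cannot terminate at the same edge. Hence $\sigma := \varphi(T)$ is a permutation of $\{1, \ldots, w_L + w_R - 1\}$.

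Next, I establish the excedance condition. Label the south-east border so that the first $w_R$ positions correspond to the vertical edges on the east side and the remaining $w_L - 1$ positions to the horizontal edges on the south side. A zigzag starting at a vertical edge initiates with a down-move and one checks that its terminal edge is either itself (empty row, giving a fixed point) or strictly later on the border; dually, a zigzag from a horizontal edge initiates with a right-move and terminates at an earlier or equal position. Thus excedances occur exactly at vertical positions $1, \ldots, w_R - 1$, while the topmost vertical edge at position $w_R$ together with all horizontal edges are non-excedances.

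For surjectivity I construct an inverse $\psi$ by exploiting Remark~\ref{rem:BurCyc}: fixed points of $\sigma$ become empty rows or columns of $T'$, while each non-trivial cycle determines a connected binary tree component of $T'$. Given a cycle $(i_1\, i_2\, \cdots\, i_k)$ read cyclically, the excedance condition forces the sign (ascent or descent) of each consecutive pair $(i_j, i_{j+1})$: ascents prescribe a down-then-right fragment of the zigzag, descents a right-then-down fragment. Gluing these fragments deposits a well-defined set of grid points, to which one reattaches the first column with one point per non-empty row plus the root.

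The hardest step is to verify that $\psi(\sigma)$ actually satisfies the geometric NAT axioms (1)--(3) from Subsection~\ref{definitions}, in particular the pattern avoidance condition. The crucial observation is that the excedance constraint forces every grid point produced by a cycle to lie on exactly one zigzag leg, and hence to have a single ``parent direction''; points originating from different cycles cannot collide because they live in distinct rows and columns prescribed by disjoint label sets on the border. Once this is checked, the identities $\varphi \circ \psi = \mathrm{id}$ and $\psi \circ \varphi = \mathrm{id}$ follow by direct inspection of the zigzag on each cycle.
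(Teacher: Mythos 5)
Note first that the paper does not prove this proposition at all: it is imported verbatim from Steingr{\'\i}msson and Williams \cite{SteiWill} (Theorem 14, further studied by Burstein \cite{Bur07}), so your argument has to stand on its own as a complete proof --- and as written it does not. The most concrete error is in your excedance analysis, where the border bookkeeping is inverted. After deleting the first column, $T'$ has $w_L$ rows and $w_R-1$ columns, so the south-east border carries $w_R-1$ horizontal edges and $w_L$ vertical ones; since the paper numbers from the westmost edge, positions $1,\dots,w_R-1$ are the \emph{horizontal} (south) edges and positions $w_R,\dots,w_L+w_R-1$ the \emph{vertical} (east) edges. A zigzag launched from a horizontal edge starts with a right move and only ever moves right or down, so it exits either through the east side (position $\geq w_R$) or through the south edge of a strictly more eastern column: the horizontal positions are exactly the excedances (every column of $T'$ is non-empty, so there is no fixed point here). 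Dually, a zigzag from the vertical edge of a row starts with a down move and exits weakly \emph{earlier}, with a fixed point precisely when that row of $T'$ is empty. Your text asserts the opposite on every count: ``the first $w_R$ positions correspond to the vertical edges'' (there are $w_L$ vertical edges, not $w_R$, and they come last), vertical zigzags end ``strictly later'', horizontal ones ``at an earlier or equal position''. It is also internally inconsistent: if vertical zigzags ended strictly later, your ``topmost vertical edge at position $w_R$'' would be an excedance too, contrary to what you then claim.

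The deeper gap is surjectivity. You correctly flag the verification that the reconstructed point set satisfies the geometric axioms (in particular the pattern-avoidance condition \ref{condition_2_ana}) as ``the hardest step'', and then dispose of it in a single sentence of assertion. Showing that the down/right fragments prescribed by a cycle glue into a consistent set of lattice points, that legs of different pairs within one cycle overlap coherently, and that no cross-cycle violation of condition \ref{condition_2_ana} occurs (a point produced by one cycle could a priori acquire both a point above it and a point to its left coming from another cycle) is precisely the content of the theorem; ``the excedance constraint forces every grid point produced by a cycle to lie on exactly one zigzag leg'' is a claim, not an argument. Your reversibility argument for injectivity of $\varphi$ is, by contrast, essentially sound, and it suggests a legitimate way to close the gap without reconstructing the inverse: combine injectivity with the fact that the two sets are equinumerous, which follows from the hook-length formula of \cite{AvaBouBouSil14} together with the Ehrenborg--Steingr{\'\i}msson enumeration \cite{ES} of permutations whose excedance set is exactly $\{1,\dots,w_R-1\}$. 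Two smaller points: your use of $\psi$ for the inverse collides with the paper's $\psi$ (the variant of $\varphi$ that keeps the first column, Corollary \ref{cor:ana_bij_cycle}), and fixed points correspond only to empty \emph{rows} of $T'$, never to empty columns.
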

If we do the same construction without deleting the first column of $T$ (in this case the corresponding horizontal border edge is numbered with $0$), we get another map, which we denote $\psi$. As observed in Remark~\ref{rem:BurCyc}, $\psi(T)$ is a cyclic permutation.

\begin{figure}[h]
    \begin{center}
        \includegraphics[scale=.6]{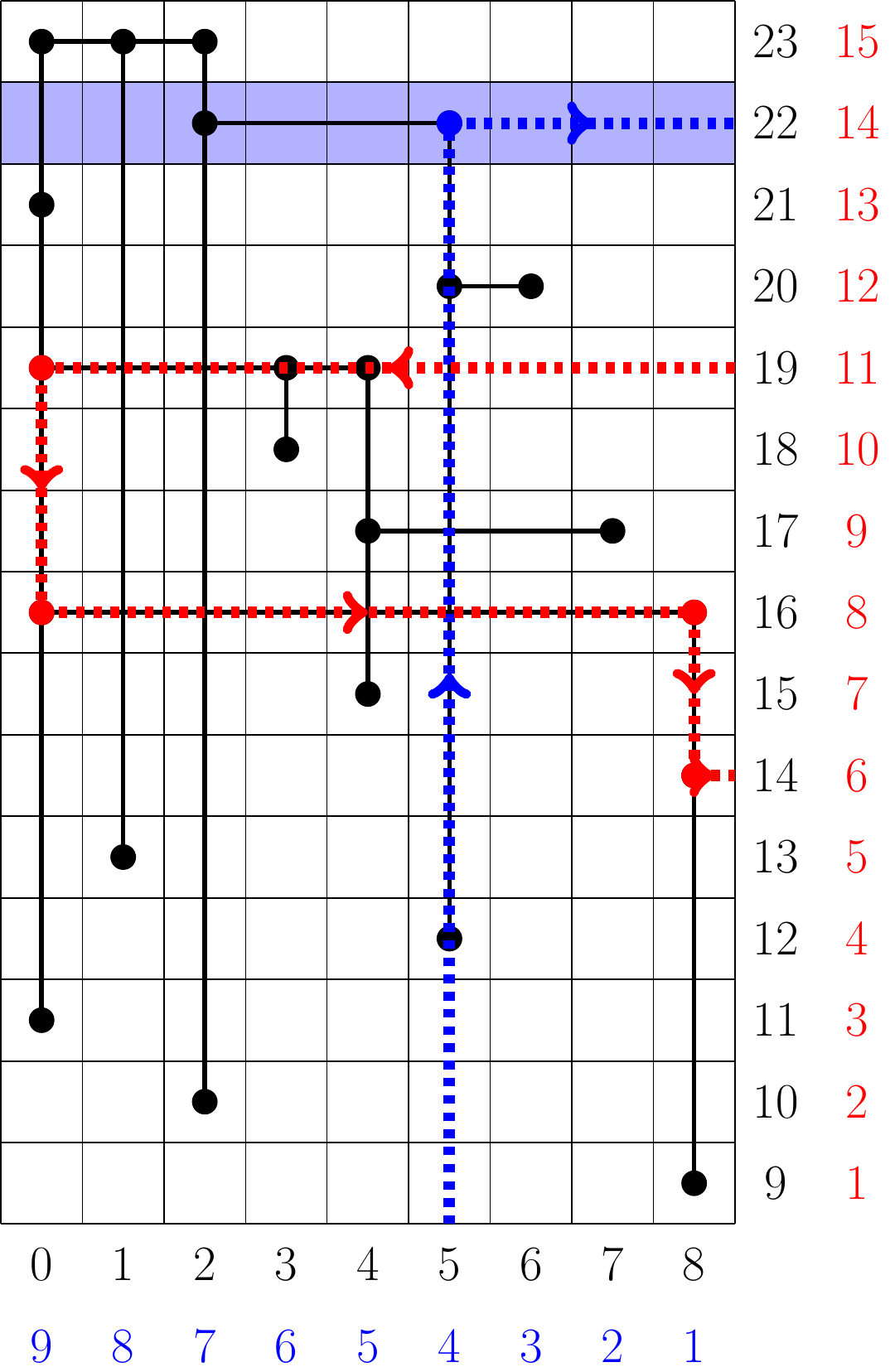}
        \caption{Example of the bijection $\psi$.}
        \label{fig:ana_zigzag_cycle}
    \end{center}
\end{figure}

The link between $\varphi$ and $\psi$ is given by the following lemma.
\begin{lem}\label{lem:ana_perm_cycles_cycle_bleus_rouges}
    Let $T$ be a non-ambiguous tree. Let $c_1\cdots c_k$ be the decomposition in
    cycles of $\varphi(T)$ such that the biggest element of $c_i$
    is larger than the biggest element of $c_{i+1}$.
    We denote by $m_i$ the representative of $c_i$ ending by its biggest element. 
    Then, a representative of $\psi(T)$ is the word $0m_1\cdots m_k$.
\end{lem}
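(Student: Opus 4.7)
The plan is to compare directly the zigzag paths defining $\varphi(T)$ (performed in $T'$) and $\psi(T)$ (performed in $T$). Both procedures use identical rules, so the two maps differ only because the first column of $T$ is removed in $T'$. Since each zigzag step only moves down or right, the presence of the first column matters only when a zigzag in $T$ starts from the leftmost point of a row (resp.\ topmost point of a column) that lies in the first column; in $T'$ the same row (resp.\ column) has a different leftmost point, or is empty. My goal is to identify exactly when the extra first-column points intervene, and to show that these interventions connect the disjoint cycles of $\varphi(T)$ into the single cycle of $\psi(T)$ in the prescribed order.

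First I would check that, if $i$ is not the biggest element of its cycle $c_j$ in $\varphi(T)$, then the zigzag from $i$ in $T$ is identical to the one in $T'$, giving $\psi(T)(i)=\varphi(T)(i)$. Next, for the biggest element $b_j$ of each cycle $c_j$, the zigzag from $b_j$ in $T$ is forced through one or more points of the first column, travelling along the chain formed by the leftmost branch of $T$. Using Remark~\ref{rem:BurCyc}, each cycle of $\varphi(T)$ corresponds either to an empty line of $T'$ (whose single point in $T$ belongs to the first column) or to a nontrivial subtree of $T'$ (attached to the first column through its root row or column). A direct analysis then shows that when we climb the first column from the bottom, we encounter these cycle attachments in strictly decreasing order of their biggest element, and that the zigzag from $b_j$ exits at $\varphi(T)(b_{j+1})$ for $j<k$. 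Finally, the zigzag from $b_k$ reaches the topmost edge labelled $0$, while the zigzag from $0$ descends the first column and exits at $\varphi(T)(b_1)$; these facts together yield $\psi(T)=(0\;m_1\;m_2\;\cdots\;m_k)$ as a single cycle.

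The main obstacle will be the precise tracking of how the first column is traversed. This reduces to two interlinked facts: the \emph{monotonicity} statement that the biggest elements of the cycles encountered along the first column appear in strictly decreasing order $b_1>b_2>\cdots>b_k$, and the \emph{identification} that the zigzag step immediately after visiting $b_j$ exits at $\varphi(T)(b_{j+1})$ rather than at some other edge of the next cycle. I plan to establish both simultaneously by induction on the depth in the leftmost branch of $T$, using the empty-line-versus-subtree dichotomy of Remark~\ref{rem:BurCyc} to treat both possible shapes of each cycle in a uniform way.
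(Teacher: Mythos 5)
Your plan is correct and is essentially the paper's own argument: the paper likewise compares the zigzag paths in $T$ and $T'$, observes that only the entries $i_1<\cdots<i_k$ whose rows contain a point of the deleted first column are modified, and records the rerouting $c(i_j)=\sigma(i_{j-1})$ for $1<j\le k$, $c(i_1)=0$ and $c(0)=\sigma(i_k)$ --- precisely your claims that $\psi(T)$ agrees with $\varphi(T)$ off the cycle maxima, sends each maximum $b_j$ to $\varphi(T)(b_{j+1})$, and closes the loop through $0$; your explicit identification of the modified entries with the cycle maxima via Remark~\ref{rem:BurCyc} is the step the paper leaves implicit, so you are if anything more complete (the induction along the leftmost branch is overkill, since the maximum of each cycle is simply the border number of the row of its attachment point, and rows are monotone along the column). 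Do fix the orientation slips when writing it up: in the paper's convention (root at top left, zigzags moving down and right) the edge labelled $0$ is the bottommost edge of the first column, the attachments encountered climbing the first column from the bottom have \emph{increasing} maxima, and the zigzag from $0$ does not descend the column but starts at the root, the topmost point of that column, and immediately moves right into the top row, where it merges with the $T'$-zigzag computing $\sigma(b_1)$.
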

This is illustrated on the running example of Figure \ref{fig:ana_zigzag_cycle}. On this non-ambiguous tree, denoted by $T$, $\psi(T)$ is the cycle
\[
     \psi(T)=\left(
                0\;
                13\;1\;6\;20\;12\;5\;22\;10\;2\;23\;
                21\;
                18\;3\;7\;17\;15\;4\;19\;
                14\;9\;8\;16\;
                11
            \right).
\]
Recall that $\varphi(T)$ is given by
\[
    \varphi(T)=(13\;1\;6\;20\;12\;5\;22\;10\;2\;23)\;(21)\;(18\;3\;7\;17\;15\;4\;19)\;(14\;9\;8\;16)\;(11)
\]
In particular, $\psi(19)=14=\varphi(16)$.

\begin{proof}
Let us denote $\sigma=\varphi(T)$ and $c = \psi(T)$. 
It is clear that only the images of entries that have a point in the leftmost column of $T$ (the one deleted in $T'$) are modified.
Let us denote by $i_1 < \cdots < i_k$ these entries.
The construction implies that $c(i_j ) = \sigma(i_{j-1})$ for $1<j\le k$, $c(i_1) = 0$ and $c(0) = \sigma(i_k)$. Whence the result.
\end{proof}

This lemma has two consequences. 
The first one is that, as $\varphi$ is a bijection, $\psi$ is also one.

The second consequence is that the excedances of $\psi(T)$ are precisely those of $\varphi(T)$, together with $0$.
\begin{cor}\label{cor:ana_bij_cycle}
    Let $\w_L$ and $\w_R$ be two positive integers. The map $\psi$ is a
    bijection between non-ambiguous trees of geometric size $\w_L\times\w_R$ and
    cycles of $\llbracket 0,\w_L+\w_R\rrbracket$ such that all their excedances are at
    positions $0,\cdots,\w_R-1$.
\end{cor}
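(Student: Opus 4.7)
The plan is to derive the corollary as a direct consequence of Lemma~\ref{lem:ana_perm_cycles_cycle_bleus_rouges} together with the proposition on $\varphi$ recalled just above. Two points need checking: that $\psi$ is a bijection onto the stated set of cycles, and that the excedance set of $\psi(T)$ is exactly $\{0,1,\ldots,\w_R-1\}$.

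For bijectivity, Lemma~\ref{lem:ana_perm_cycles_cycle_bleus_rouges} already produces $\psi(T)$ as a single cycle on $\llbracket 0,\w_L+\w_R\rrbracket$ from $T$ via $\varphi(T)$. To invert, given a cycle $c$ on that set satisfying the excedance hypothesis, I would write $c$ starting from $0$, obtaining a word $0\,a_1\,\cdots\,a_{\w_L+\w_R-1}$, and cut the suffix $a_1\,\cdots\,a_{\w_L+\w_R-1}$ into blocks $m_1\mid m_2\mid\cdots\mid m_k$ by declaring $m_i$ to end at the global maximum of the suffix starting with $m_i$. The sequence of these maxima is automatically strictly decreasing in $i$, which matches exactly the ordering prescribed in Lemma~\ref{lem:ana_perm_cycles_cycle_bleus_rouges}. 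Reading each block $m_i$ as a cyclic word then gives a permutation $\sigma$, and the proposition on $\varphi$ yields a unique NAT $T$ with $\varphi(T)=\sigma$.

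For the excedance characterization, I would compare $\psi(T)$ and $\varphi(T)$ position by position. Both maps agree on every position in $\{1,\ldots,\w_L+\w_R-1\}$ except on the maxima $M_i$ of the cycles $c_i$ of $\varphi(T)$. At such an $M_i$, both images are strictly less than $M_i$: $\varphi(T)(M_i)$ is the first element of $c_i$, and $\psi(T)(M_i)$ is either the first element of $m_{i+1}$ (which is bounded by $M_{i+1}<M_i$) or, when $i=k$, simply $0$. Hence $M_i$ is a non-excedance of both permutations, and $\psi(T)$ and $\varphi(T)$ share the same excedance set on $\{1,\ldots,\w_L+\w_R-1\}$. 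At position $0$, $\psi(T)(0)=a_1>0$ contributes one new excedance, so the excedance set of $\psi(T)$ equals $\{0\}$ together with the excedance set of $\varphi(T)$, which gives exactly $\{0,1,\ldots,\w_R-1\}$ by the previous proposition.

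The only delicate point is to verify that the inverse block-cutting procedure is well-defined and that the excedance hypothesis on $c$ forces the recovered $\sigma$ to satisfy the excedance hypothesis required by the proposition on $\varphi$; this essentially amounts to reading the excedance analysis above backwards and presents no conceptual difficulty.
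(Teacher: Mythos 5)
Your argument is correct and takes essentially the same route as the paper: the paper likewise deduces the corollary directly from Lemma~\ref{lem:ana_perm_cycles_cycle_bleus_rouges}, noting that $\psi$ inherits bijectivity from $\varphi$ (your explicit inverse via cutting at the decreasing suffix maxima is exactly the implicit inverse there) and that the excedances of $\psi(T)$ are those of $\varphi(T)$ together with $0$. One harmless slip: when $c_i$ is a singleton cycle (e.g.\ $(21)$ in the paper's running example) one has $\varphi(T)(M_i)=M_i$, not strictly less than $M_i$, but $M_i$ is still a non-excedance of both permutations, so your comparison goes through unchanged.
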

In order to ease future explanations, we number independently rows and columns.
We replace the integers $\llbracket 0,\w_R-1\rrbracket$ with $\llbracket
\b{1},\b{\w_R}\rrbracket$ by using the map $i\mapsto \b{(\w_R-i)}$, and the
integers $\llbracket \w_R,\w_R+\w_L-1\rrbracket$ with $\llbracket
\r{1},\r{\w_L}\rrbracket$ by using the map $i\mapsto \r{(i-\w_R+1)}$, as shown
in Figure~\ref{fig:ana_zigzag_cycle}. If we denote $T$ the non-ambiguous tree of
this figure, then $\psi(T)$ is equal to the cycle
\[
    \left(
        \b{9}\;\r{5}\;\b{8}\;\b{3}\;\r{12}\;\r{4}\;\b{4}\;\r{14}\;
        \r{2}\;\b{7}\;\r{15}\;\r{13}\;\r{10}\;\b{6}\;\b{2}\;\r{9}\;
        \r{7}\;\b{5}\;\r{11}\;\r{6}\;\r{1}\;\b{1}\;\r{8}\;\r{3}
    \right).
\]
\begin{defi}\label{def:ana_cycle_bi}
    Let $i$ and $j$ be two positive integers. We define a \emph{2-coloured block
    decreasing cycle of size $i\times j$} as a cycle of the set
    $\llbracket \r{1},\r{i} \rrbracket \cup \llbracket \b{1},\b{j} \rrbracket$
    such that, if the image of an element $a$ is an element $b$ of the same
    colour then $a>b$.
\end{defi}
Using this definition, the map $\psi$ is a bijection between non-ambiguous trees
of geometric size $i\times j$ and 2-coloured block decreasing cycles of size
$i\times j$. Moreover, the number of blue blocks of the 2-coloured block
decreasing cycles has a simple interpretation over non-ambiguous trees.
\begin{lem}\label{lem:ana_enumeration_nb_bloc_fixe}
    Let $T$ be a non-ambiguous tree, then the hook number of $T$ is equal to the
    number of blue block in $\psi(T)$.
\end{lem}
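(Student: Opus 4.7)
My plan is to construct a bijection between the hooks of $T$ and the blue blocks of $\psi(T)$. Since each blue block corresponds to a unique red-to-blue transition in the cycle, this reduces to bijecting hooks with those columns $c$ whose associated zigzag path exits at a vertical (blue) border edge. The first key step is a tree-theoretic description of the zigzag: using that each column of the geometric NAT is exactly the leftmost branch of its topmost vertex (itself either a right child or the root) and each row is the rightmost branch of its leftmost vertex (itself either a left child or the root), the alternating R/D moves of the zigzag translate into alternating ``go to right child'', ``go to left child'' steps in the labelled tree. Concretely, starting from the column base $b_c$, the zigzag traces the chain $b_c = \ell_0, q_1, \ell_1, q_2, \ell_2, \dots$, where $q_i$ is the right child of $\ell_{i-1}$ and $\ell_i$ is the left child of $q_i$. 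This chain exits blue iff it terminates at some $\ell_k$ that has no right child.

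The second step is to interpret each such endpoint as the terminal (rightmost) vertex of a hook's rightmost branch. By the recursive characterization of the hook partition, each $\ell_i$ (for $i \geq 1$) is a new hook root---being the left child of the right-branch vertex $q_i$---so whenever $\ell_k$ has no right child, its own hook's rightmost branch is the singleton $\{\ell_k\}$ and $\ell_k$ is that hook's terminal. In the boundary case $k = 0$, the base $b_c$ lies on the rightmost branch of whichever hook contains it and, having no right child, is itself that hook's terminal.

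The inverse sends each hook $H$ (with terminal $t_H$) to a column $c_H$ by reversing the alternating zigzag from $t_H$, iteratively going to the tree-parent. For $t_H$ a right child or the root, $c_H$ is $t_H$'s own blue column label immediately; for $t_H$ a left child, the iterated parent path alternately visits right-child vertices $q_i$ (via ``reverse-D'' steps) and left-child vertices $\ell_j$ (via ``reverse-R'' steps), terminating uniquely at the column base $b_{c_H} = \ell_0$ the first time a reverse-R step lands on a right child or root.

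The main obstacle will be verifying bijectivity. Injectivity of the forward map follows from the determinism of the zigzag: two zigzag paths meeting at the same vertex at the same parity-phase must coincide throughout, so distinct column bases yield distinct endpoints and hence distinct hook terminals. Surjectivity requires carefully showing that the reverse construction from any hook terminal lands at a column base after the correct (even) number of reverse steps; the subtle point is that the intermediate right-child vertices $q_i$ are themselves column bases, so the stopping criterion must track the parity of reverse steps to correctly distinguish the terminal base $\ell_0$ from these intermediate ones.
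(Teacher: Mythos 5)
Your proof is correct, and its first half is essentially the paper's own analysis in tree language: the paper likewise shows that a zigzag started at a horizontal border edge alternately visits non-root vertices of the rightmost branch of a hook (your $q_i$) and hook roots (your $\ell_i$), so that a vertical-edge exit occurs precisely in the row containing the rightmost branch of a hook. Where you genuinely diverge is in how the correspondence is closed. The paper never constructs an inverse map; instead it runs the symmetric analysis on zigzags started at \emph{vertical} border edges (these travel along leftmost branches of hooks and, when they exit at a vertical edge, do so from a non-root left-branch vertex, hence in a row \emph{not} containing a rightmost branch), and then concludes from the bijectivity of $\psi$: each vertical border edge is the exit of exactly one zigzag, so the exits of blue-started zigzags are exactly the rows carrying a hook's rightmost branch, one per hook. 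Your replacement for that second half --- identifying the endpoint as the terminal vertex of a hook's rightmost branch and inverting the walk by iterated parents with a parity-sensitive stopping rule --- is sound: the observation that a left-child hook root necessarily has a right-child parent is exactly what makes the reverse walk well defined, and your parity bookkeeping correctly handles the trap that the intermediate $q_i$ are themselves topmost in their own columns. The explicit inverse buys you surjectivity without appealing to the global cycle structure, at the cost of a more delicate case analysis. Two cosmetic slips, neither affecting the count: in the paper's convention the vertical border edges carry the red (row) labels, not blue; and a red-to-blue transition corresponds to a zigzag from a vertical edge exiting horizontally, whereas the set you actually biject (columns whose zigzag exits at a vertical edge) counts the blue-to-red transitions --- the two numbers coincide because blocks of the two colours alternate around the cycle, but this step deserves a sentence.
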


For example, if $T$ is the non-ambiguous tree of
Figure~\ref{fig:ana_zigzag_cycle}, its hook number is 7, which is also the number of
blue blocks in $\psi(T)$.

\begin{proof}
    To prove this lemma, we show that, on the geometrical interpretation of a non-ambiguous tree, the vertical border edges on which ends a zigzag path starting from an horizontal border edge are exactly the vertical border edges of a line containing the right branch of a hook. For instance, on Figure~\ref{fig:ana_zigzag_cycle}, these vertical border edges are these numbered by $\textcolor{red}{15}$, $\textcolor{red}{14}$, $\textcolor{red}{12}$, $\textcolor{red}{11}$, $\textcolor{red}{9}$, $\textcolor{red}{8}$ and $\textcolor{red}{5}$.
    
    A zigzag starting from an horizontal border edge first reach a node on the right branch of a hook (as it has no vertices above). Then every vertical step reaches either the root of a hook or an horizontal border edge, while every horizontal  step reaches either a non-root node on the right branch of a hook or a vertical border edge. The only reached nodes are then on the right branch of a hook or its root. If such a zigzag path ends on a vertical border edge, the last step is horizontal from the root of a hook or a node on its right branch: the corresponding line contains the right branch of a node.
    
    Similarly, a zigzag starting from a vertical border edge first reach a node on the left branch of a hook (as it has no vertices on its left). Then every horizontal step reaches either the root of a hook or a vertical border edge, while every vertical step reaches either a non-root point on the left branch of a hook or an horizontal border edge. The only reached nodes are then on the left branch of a hook or its root. The root of a hook in this case can only be reached after an horizontal step. If such a zigzag path ends on a vertical border edge, the last step is then horizontal from a non-root node on the left branch of a hook: the corresponding line does not contain the right branch of a node, which proves the result.
\end{proof}

 From this lemma, we deduce the following proposition.
\begin{prop}
    Let $i, j$ and $p$ be positive integers. The number of non-ambiguous
    trees of geometric size $i\times j$ and hook number $p$ is
    \[
        (p-1)!\,p!\,\S(i,p)\,\S(j,p).
    \]

    Moreover, the doubly exponential generating series of non-empty
    non-ambiguous trees,  with weight on a NAT $T$ given by
    \[
        z^{hook(T)}
        \frac{
            x^{\w_L(T)}\,y^{\w_R(T)}
        }{
            \w_L(T)!\,\w_R(T)!
        },
    \]
    is
    \[
        -\ln\left(1-z(e^x-1)(e^y-1)\right).
    \]
\end{prop}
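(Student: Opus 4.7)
The plan is to translate both claims across the bijection $\psi$ of Corollary~\ref{cor:ana_bij_cycle}: by Lemma~\ref{lem:ana_enumeration_nb_bloc_fixe} the statistic $hook(T)$ is transported to the number of blue blocks of $\psi(T)$, so it suffices to enumerate 2-coloured block decreasing cycles on $\{\r 1,\ldots,\r i\}\cup\{\b 1,\ldots,\b j\}$ with exactly $p$ blue blocks.

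For the closed-form enumeration, I would first observe that in any 2-coloured block decreasing cycle the maximal monochromatic runs must alternate in colour, so having $p$ blue blocks forces exactly $p$ red blocks; moreover, the condition of Definition~\ref{def:ana_cycle_bi} forces the entries within each block to appear in decreasing order around the cycle, hence each block is determined by the unordered subset of its colour class that it spans. The data of such a cycle therefore factors into three independent choices: (i) a set partition of $\{\b 1,\ldots,\b j\}$ into $p$ nonempty parts, contributing $\S(j,p)$; (ii) a set partition of $\{\r 1,\ldots,\r i\}$ into $p$ nonempty parts, contributing $\S(i,p)$; and (iii) a cyclic arrangement of the resulting $p$ blue blocks alternating with the $p$ red blocks. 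For (iii), I would fix one distinguished blue block to kill the $p$-fold rotational symmetry of an alternating blue--red cycle: the remaining $p-1$ blue blocks then fill the other blue slots in $(p-1)!$ ways and the $p$ red blocks fill the red slots in $p!$ ways, giving a total of $(p-1)!\,p!$. Multiplying the three factors yields $(p-1)!\,p!\,\S(i,p)\,\S(j,p)$.

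The generating function then follows by direct summation. Multiplying by $z^p\,x^i y^j/(i!\,j!)$ and summing over $i,j,p\geq 1$, I invoke the classical identity $\sum_{n\geq p}\S(n,p)\,x^n/n!=(e^x-1)^p/p!$ twice; the two factorials $p!$ in $(p-1)!\,p!$ cancel against the Stirling denominators, leaving
\[
\sum_{p\geq 1}\frac{\bigl(z(e^x-1)(e^y-1)\bigr)^p}{p}=-\ln\bigl(1-z(e^x-1)(e^y-1)\bigr),
\]
which is the claimed formula. The only step needing genuine care is the cyclic count in (iii): one must verify that the correspondence between 2-coloured block decreasing cycles and triples (blue set partition, red set partition, cyclic alternating block sequence) correctly accounts for the rotational symmetry implicit in the cycle notation used in Definition~\ref{def:ana_cycle_bi}, so that no cycle is over- or under-counted. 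Everything else is routine manipulation.
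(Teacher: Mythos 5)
Your proof is correct and follows essentially the same route as the paper: both reduce via $\psi$ (Corollary \ref{cor:ana_bij_cycle}) and Lemma \ref{lem:ana_enumeration_nb_bloc_fixe} to counting 2-coloured block decreasing cycles of size $i\times j$ with $p$ blue blocks, decomposed into alternating decreasing monochromatic blocks. The only cosmetic differences are that you derive the generating series by summing the closed formula against the identity $\sum_{n\ge p}\S(n,p)\,x^n/n!=(e^x-1)^p/p!$, whereas the paper reads it off directly as the composition of the cycle series $-\ln(1-u)$ with the series $z(e^x-1)(e^y-1)$ for pairs of non-empty sets, and that your explicit $(p-1)!\,p!$ symmetry-breaking count spells out a step the paper leaves implicit.
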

\begin{proof}
    According to Corollary~\ref{cor:ana_bij_cycle} and
    Lemma~\ref{lem:ana_enumeration_nb_bloc_fixe}, non-ambiguous trees with geometric size $i \times j$ with hook number $p$ is in bijection with 2-coloured
    blocks decreasing cycles of size $i\times j$ with $p$ blue blocks, which are counted by the enumeration formula.

    The generating series is the one corresponding to a cycle whose elements are pairs formed by a non-empty blue set and a non-empty red set. Recalling that the generating series of cycles is $-\ln(1-u)$ and the one of pairs of sets is     $z(e^x-1)(e^y-1)$, we get the result by composition.
\end{proof}
We extend naturally this proposition with the parameters $\alpha$ and $\beta$.
\begin{thm}\label{th:ana_zigzag_alpha_beta}
    Let $i, j$ and $p$ be positive integers. The number of non-ambiguous
    trees of geometric size $i\times j$ and hook number $p$ is
    \[
        (p-1)!\,(\alpha+\beta)_{p-1}\,\operatorname S_{ 2,\alpha }(i,p)\,\operatorname S_{ 2,\beta }(j,p).
    \]

    Moreover, the doubly exponential generating series of non-empty
    non-ambiguous trees,  with weight on a NAT $T$ given by
    \[
        \alpha^{\LO(T)}
        \beta^{\RO(T)}
        z^{hook(T)}
        \frac{
            x^{\w_L(T)-1}\,y^{\w_R(T)-1}
        }{
            (\w_L(T)-1)!\,(\w_R(T)-1)!
        },
    \]
    is
    \[
        \frac{
            ze^{\alpha x+\beta y}
        }{
            \left(1-z(e^x-1)(e^y-1)\right)^{\alpha+\beta}
        }.
    \]
\end{thm}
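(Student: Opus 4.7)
The statement contains two claims --- an explicit enumeration of the NATs of prescribed geometric size and hook number, and a closed formula for the resulting doubly exponential generating series --- and the plan is to prove the enumeration by refining the bijective argument of the previous proposition, then to deduce the generating series by standard EGF manipulations.

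For the enumeration, I would extend Lemma~\ref{lem:ana_enumeration_nb_bloc_fixe} to track the statistics $\LO(T)$ and $\RO(T)$ through the bijection $\psi$ of Corollary~\ref{cor:ana_bij_cycle}. Geometrically, $\LO(T)$ (resp.\ $\RO(T)$) is one less than the number of points in the row (resp.\ column) of $T$ containing the root. By analysing the zigzag paths that enter from this distinguished row and column, one should show that $\LO(T)$ (resp.\ $\RO(T)$) corresponds to the size-minus-one of a distinguished red (resp.\ blue) block of $\psi(T)$. Granting this, the weighted count of 2-coloured block decreasing cycles splits as $S_{2,\alpha}(i,p)\,S_{2,\beta}(j,p)$ --- arising from the two set partitions into $p$ blocks with the distinguished block contributing $\alpha^{|B|-1}$ or $\beta^{|B|-1}$, matching exactly the definition of the $q$-Stirling numbers --- times a weighted cyclic arrangement factor $(p-1)!(\alpha+\beta)_{p-1}$ refining the unweighted count $(p-1)!\,p!$ of alternating arrangements of $p$ red and $p$ blue blocks in a cycle.

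For the generating series, the key ingredient is the identity
\[
  \sum_{i\geq p}S_{2,\alpha}(i,p)\,\frac{x^{i-1}}{(i-1)!} = \frac{e^{\alpha x}(e^x-1)^{p-1}}{(p-1)!},
\]
together with its analogue in $\beta,y$; both follow from the recurrence $S_{2,\alpha}(i,p)=\sum_{s}\binom{i-1}{s-1}\alpha^{s-1}S_2(i-s,p-1)$ by a direct EGF computation. Substituting into the enumeration formula and summing over $p$ via the binomial identity $\sum_{m\geq 0}(c)_m u^m/m!=(1-u)^{-c}$ with $c=\alpha+\beta$ and $u=z(e^x-1)(e^y-1)$ collapses the series to the claimed form $ze^{\alpha x+\beta y}/(1-z(e^x-1)(e^y-1))^{\alpha+\beta}$.

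The main obstacle lies in the enumeration step, specifically in the derivation of the factor $(\alpha+\beta)_{p-1}$. A naive count of alternating cyclic arrangements of $p$ labelled red and $p$ labelled blue blocks gives $(p-1)!\,p!$, and replacing $p!$ by $(\alpha+\beta)_{p-1}$ --- which couples $\alpha$ and $\beta$ in a manner not explained by the distinguished-block sizes alone --- requires a careful analysis of how the $(\LO,\RO)$ weights interact with the relative cyclic positions of the distinguished red and blue blocks along the cycle.
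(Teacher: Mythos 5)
Your enumeration argument---on which the whole proposal rests---has a genuine gap, and you flag it yourself: you never derive the factor $(\alpha+\beta)_{p-1}$. Moreover, your intermediate claim that $\LO(T)$ (resp.\ $\RO(T)$) equals the size-minus-one of a single distinguished red (resp.\ blue) block of $\psi(T)$ is false: $\LO(T)$ counts \emph{all} non-root points of the leftmost column, but only those with no right child form the block containing the largest red element (this is what $\operatorname{S}_{2,\alpha}(i,p)$ records); the remaining points of the leftmost column are roots of sub-NATs grafted on the left branch of the root's hook, and their $\alpha$'s must flow into the arrangement factor. This is precisely why $\alpha$ and $\beta$ couple there, and why the naive count $(p-1)!\,p!$ (consistent only at $\alpha=\beta=1$, where $(2)_{p-1}=p!$) cannot simply be ``refined'' without a new idea. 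The paper's missing ingredient is the following: delete the topmost row and leftmost column and decompose by zigzag paths into cycles, one per sub-NAT grafted on the root's hook, colouring each cycle red or blue according to whether that sub-NAT hangs on the leftmost column or the topmost row. This yields partitions of the red and of the blue labels into $p$ blocks each; the two blocks containing the maximal elements account for the child-free points and give $\operatorname{S}_{2,\alpha}(i,p)\,\operatorname{S}_{2,\beta}(j,p)$; pairing each of the $p-1$ remaining blue blocks with the red block to its right in the same cycle gives the $(p-1)!$; and contracting each pair to its rank leaves a permutation of size $p-1$ whose cycles are $2$-coloured, weighted by $\alpha$ per red cycle and $\beta$ per blue cycle, whose generating polynomial is $(\alpha+\beta)_{p-1}$ because $\sum_{\sigma\in\SG_{p-1}}t^{\operatorname{cyc}(\sigma)}=t(t+1)\cdots(t+p-2)$. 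Without this (or an equivalent) analysis the first claim of the theorem is unproven.

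By contrast, the analytic half of your proposal is correct and is genuinely different from the paper: your identity $\sum_{i\ge p}\operatorname{S}_{2,\alpha}(i,p)\,x^{i-1}/(i-1)!=e^{\alpha x}(e^x-1)^{p-1}/(p-1)!$ does follow from the stated recurrence, and summing against $(p-1)!\,(\alpha+\beta)_{p-1}z^p$ via $\sum_{m\ge0}(c)_m u^m/m!=(1-u)^{-c}$ collapses the series to the claimed closed form. The paper instead proves the generating series \emph{independently} of the enumeration, by decomposing a NAT into the child-free vertices of the root's hook (factor $e^{\alpha x+\beta y}$), the set of sub-NATs on its left branch (factor $\left(1-z(e^x-1)(e^y-1)\right)^{-\alpha}$) and the set on its right branch (the same with $-\beta$), times $z$ for the root's hook. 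Since your route makes the second claim depend entirely on the first, the unproven enumeration leaves the whole argument open.
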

\begin{proof}
We first prove the expression $\frac{
            ze^{\alpha x+\beta y}
        }{
            \left(1-z(e^x-1)(e^y-1)\right)^{\alpha+\beta}
        }$. Let us first remark that this can be rewritten as $z \times e^{\alpha x+\beta y} \times e^{-\alpha\ln\left(1-z(e^x-1)(e^y-1)\right)} \times e^{-\beta\ln\left(1-z(e^x-1)(e^y-1)\right)}$. 
    This expression is the translation in terms of generating function of the following decomposition. A non-ambiguous tree can be decomposed into three pieces:
    \begin{itemize}
        \item the set of child-free  (i.e. with no child outside of the hook) vertices in the hook $h$ of the root (in yellow on figure \ref{fig:fig_nat2})
        \item the set of non-ambiguous trees attached to the left branch of $h$ (in green on figure \ref{fig:fig_nat2})
        \item and the set of non-ambiguous trees attached to the right branch of
            $h$ (in orange on figure \ref{fig:fig_nat2}).
    \end{itemize}
    
    \begin{figure} 
\begin{center}
\begin{tikzpicture}[scale=0.45]
\node (r) at (0,0) {(\textcolor{red}{15},\textcolor{blue}{9})};
\node[fill=orange!50,below right =1cm of r] (8b)  {\textcolor{blue}{8}};
\node[fill=orange!50,below right =1cm of 8b] (7b)  {\textcolor{blue}{7}};
\node[fill=orange!50,below left =0.5cm of 7b] (14r) {\textcolor{red}{14}};
\node[fill=orange!50,below right =0.5cm of 14r] (4b) {\textcolor{blue}{4}};
\node[fill=orange!50,below left =0.5cm of 4b] (12r) {\textcolor{red}{12}};
\node[fill=orange!50,below right =0.5cm of 12r] (3b) {\textcolor{blue}{3}};
\node[fill=yellow!50,below left =0.5cm of r] (13r) {\textcolor{red}{13}};
\node[fill=green!50,below left =1cm of 13r] (11r) {\textcolor{red}{11}};
\node[fill=green!50,below left =1cm of 11r] (8r) {\textcolor{red}{8}};
\node[fill=green!50,below right =0.5cm of 8r] (1b) {\textcolor{blue}{1}};
\node[fill=green!50, below right =0.5cm of 11r] (6b)  {\textcolor{blue}{6}};
\node[fill=green!50,below right =0.5cm of 6b] (5b) {\textcolor{blue}{5}};
\node[fill=green!50,below left =0.5cm of 5b] (9r) {\textcolor{red}{9}};
\node[fill=green!50,below right =0.5cm of 9r] (2b) {\textcolor{blue}{2}};
\node[fill=green!50,below left =0.5cm of 6b] (10r) {\textcolor{red}{10}};
\node[fill=green!50,below left =0.5cm of 9r] (7r) {\textcolor{red}{7}};
\node[fill=green!50,below left =0.5cm of 1b] (6r) {\textcolor{red}{6}};
\node[fill=orange!50,below left =0.5cm of 8b] (5r){\textcolor{red}{5}};
\node[fill=orange!50,below left =0.5cm of 12r] (4r){\textcolor{red}{4}};
\node[fill=yellow!50,below left =0.5cm of 8r] (3r){\textcolor{red}{3}};
\node[fill=orange!50,below left =0.5cm of 14r] (2r) {\textcolor{red}{2}};
\node[fill=green!50,below left =0.5cm of 6r] (1r) {\textcolor{red}{1}};
\draw[line width=3pt] (3r)--(8r)--(11r)--(13r)--(r)--(8b)--(7b);
\draw  (8b)--(5r);
\draw  (7b)--(14r);
\draw[line width=3pt] (2r)--(14r)--(4b);
\draw  (4b)--(12r);
\draw[line width=3pt] (4r)--(12r)--(3b);
\draw (11r)--(6b);
\draw[line width=3pt] (10r)--(6b)--(5b);
\draw (5b)--(9r);
\draw[line width=3pt] (7r)--(9r)--(2b);
\draw (8r)--(1b);
\draw[line width=3pt] (1b)--(6r)--(1r);
\end{tikzpicture}
        \includegraphics[scale=0.6]{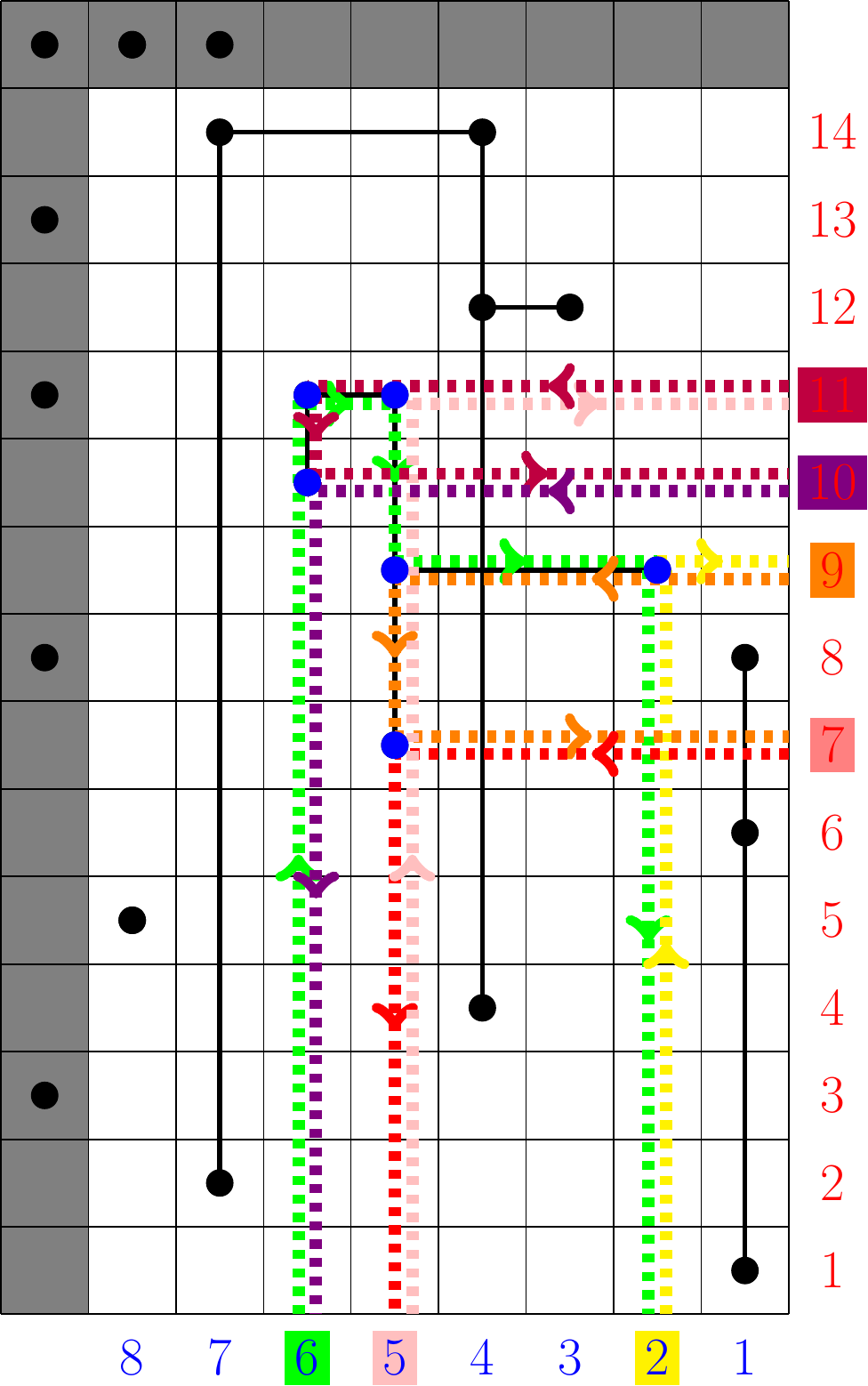}
\caption{Illustration of the proof of Theorem \ref{th:ana_zigzag_alpha_beta} with both tree and geometrical representation of the same NAT}\label{fig:fig_nat2}
\end{center}
\end{figure}

    There is one more hook in the initial tree than in the 3-tuple of sets
    described above as the hook of the root does not appear in this decomposition:
    this first remark justifies the first term $z$ in the product.
    
    The yellow set is constituted by a set of red labels and a set of blue labels, 
    all red (resp. blue) elements contributing to $\LO(T)$ (resp. $\RO(T)$). The 
    generating series associated to this set is then $e^{\alpha x}e^{\beta y}$.
    
    The green set is the set of sub-non-ambiguous trees attached to a 
    vertex of the hook of the root. Each subtree gives a contribution
    $\alpha$ as its root belongs to the hook of the root of the initial
    tree. Following Lemma \ref{lem:ana_enumeration_nb_bloc_fixe}, the 
    associated generating series is then  
    $e^{-\alpha\ln\left(1-z(e^x-1)(e^y-1)\right)}$.
    
    In the same way, the generating series associated with the orange set is         $e^{-\beta\ln\left(1-z(e^x-1)(e^y-1)\right)}$.
    
    For the enumeration formula, let us consider a non-ambiguous tree $T$. We
    use the same idea of decomposition. We delete the topmost row and the
    leftmost column, before using the zigzags paths. Then, for example, one of the zigzag on the tree of Figure~\ref{fig:ana_zigzag_cycle} is illustrated on Figure~\ref{fig:fig_nat2}. For this tree, we get
    \[
        \r{(}\r{3}\r{)}\,
        \r{(}\r{13}\r{)}\,
        \r{(}\b{6}\;\b{2}\;\r{9}\;\r{7}\;\b{5}\;\r{11}\;\r{10}\r{)}\,
        \r{(}\b{1}\;\r{8}\;\r{6}\;\r{1}\r{)}\,
        \b{(}\b{8}\;\r{5}\b{)}\,
        \b{(}\b{7}\;\b{3}\;\r{12}\;\r{4}\;\b{4}\;\r{14}\;\r{2}\b{)}.
    \]
    
    Red (resp. blue) parentheses means that the corresponding sub-non-ambiguous
    tree is attached to the leftmost (resp. rightmost) branch. We 
    illustrate on the right of Figure~\ref{fig:fig_nat2} the cycle 
    $    \r{(}\b{6}\;\b{2}\;\r{9}\;\r{7}\;\b{5}\;\r{11}\;\r{10}\r{)}$.
    This way, we obtain a partition of red (resp. blue)
    labels in $p=hook(T)$ non-empty sets. The number of non-root points in the
    leftmost column (resp. topmost row), with no right (resp. left) child, is
    equal to the number of elements minus 1 in the subset containing the biggest
    element. This explains the $\alpha$-analogue (resp. $\beta$-analogue) of
    $\operatorname S_2$. Let us order and number the $p-1$ other blue subsets
    with respect to their biggest element and pair each blue subset with the
    red block to its right (in the same cycle). Keeping the same example,
    we obtain
    \[
        \begin{array}{ccccccc}
               & 6    & 5  & 4     & 3 & 2 & 1\\
            \{\b{9}\} &\{\b{8}\} &\{\b{7}\;\b{3}\} &\{\b{6}\;\b{2}\}
            &\{\b{5}\} &\{\b{4}\} &\{\b{1}\}\\
            \{\r{15}\;\r{13}\;\r{3}\} &\{\r{5}\} &\{\r{12}\;\r{4}\}
            &\{\r{9}\;\r{7}\} &\{\r{11}\;\r{10}\} &\{\r{14}\;\r{2}\}
            &\{\r{8}\;\r{6}\;\r{1}\}\\
        \end{array}.
    \]

    In a general setting, there are $(p-1)!$ pairing possibilities. Let us now
    replace each pair with its corresponding number. We get
    \[
        \b{(}6\b{)}\;
        \b{(}5\;2\b{)}\;
        \r{(}4\;3\r{)}\;
        \r{(}1\r{)}.
    \]

    In the end, in addition to the two partitions, we have a permutation of size
    $p-1$ decomposed in cycles and whose cycles are coloured in red or in blue.
    Each red (resp. blue) cycle counts for an $\alpha$ (resp. $\beta$), hence,
    the generating polynomial of such permutations is $(\alpha+\beta)_{p-1}$.
    We finally get the desired formula.
\end{proof}

As stated in the introduction, Proposition~\ref{gen_ser_nat} and
Theorem~\ref{thm_p} (in the case $\alpha=\beta=1$) were already proven by Clark
and Ehrenborg \cite{CE}. In the proof they gave, the statistic $p$ is
interpreted on permutations as follows.
\begin{defi}\label{def:ana_CE}
    Let $i,j$ and $n$ be positive integers such that $n=i+j-1$. Let $\p$ be
    a permutation of size $n$ such that all its excedances are at position
    $\llbracket 1,j-1 \rrbracket$. The \emph{CE-statistic} 
    \footnote{
            The "+1" does not appear in the definition of Clark and Ehrenborg. We
            introduced it because there is a shift between the hook statistic and the
            CE-statistic.
        }
         of $\p$ is the
    positive integer
    \[
        \operatorname{CE}(\p)=|\{u\in\llbracket 1,j-1 \rrbracket,\, \p(u)>i\}|+1.
    \]
\end{defi}
    According to Lemma~\ref{lem:ana_enumeration_nb_bloc_fixe}, the hook
    statistic corresponds to the number of blue blocks in 2-coloured block
    decreasing cycles. The bijection between 2-coloured block decreasing cycles
    and permutations with all their excedances at the beginning that we will
    study is $\Theta=\varphi\circ\psi^{-1}$. The following lemma describes the
    difference between the number of blue blocks and the CE-statistic.
    \begin{lem}\label{lem:ana_blocs_bleus_CE}
        Let $i$ and $j$ be positive integers and $c$ a 2-coloured block
        decreasing cycle of size $i\times j$. Then, the number of blue blocks of
        $c$ is equal to the CE-statistic of $\Theta(c)$
        \begin{itemize}
            \item minus 1, if $\b{j}$ has a blue element to its right and
            \item plus 1, if $\r{1}$ has a blue integer, different from $\b{j}$
                to its left.
        \end{itemize}
    \end{lem}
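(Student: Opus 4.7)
The plan is to express the blue block count of $c = \psi(T)$ using the explicit description of $\psi(T)$ from Lemma~\ref{lem:ana_perm_cycles_cycle_bleus_rouges}, and then match it cycle by cycle against the CE-statistic of $\varphi(T) = \Theta(c)$. Starting from the decomposition $\psi(T) = (\b{j},\, m_1, \ldots, m_k)$, in which the $m_i$'s are rotations of the cycles $c_1, \ldots, c_k$ of $\varphi(T)$ ending at their maxima $M_1, \ldots, M_k$ sorted in decreasing order, I would observe that the number of blue blocks of $c$ equals the number of red-to-blue transitions in this cyclic word. I would then split these transitions into \emph{interior} transitions (strictly inside some $m_i$) and \emph{boundary} transitions (precisely at the positions $\b{j}, M_1, \ldots, M_k$, i.e.\ the positions where $\psi$ and $\varphi$ have different images).

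Since $\psi$ and $\varphi$ agree off the $M_i$'s, each interior red-to-blue transition of $c$ corresponds to a blue position of $\varphi(T)$ whose image is red. In parallel, I would compute the contribution of each cycle of $\varphi(T)$ to the CE-statistic and verify that the interior contributions match the excedances of CE-type ``value $> w_L$'' up to the discrepancy caused by that threshold being slightly weaker than ``image is red''. A short case analysis should show that this discrepancy is absent in the generic situation and, when present, is localised to the two positions $\b{j}$ and $\r{1}$ singled out in the statement.

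Next, I would analyse the boundary contributions one by one. The junction $\b{j} \to f_1$ fails to create a new blue block precisely when $f_1$ is blue, which is exactly the condition ``$\b{j}$ has a blue element to its right''; this accounts for the $-1$ correction. The other junctions $M_i \to f_{i+1}$ and $M_k \to \b{j}$ are handled by summing the contribution of each $M_i$: when $M_i$ is red, the $\psi$-transition generically matches a corresponding excedance of $\varphi$ at $M_i$, with any extra contribution captured by the $\r{1}$ condition; when $M_i$ is blue (which forces $c_i$ to be an all-blue cycle of $\varphi(T)$ and hence $i = k$), a separate check should show that its contribution is absorbed by the $\r{1}$-correction.

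The main obstacle I expect is the careful bookkeeping at the boundaries, in particular the treatment of cycles of $\varphi(T)$ whose maximum is blue: these exist only when $\varphi(T)$ has some all-blue cycle, they can appear only at the very end of the decomposition, and their interaction with the blue block containing $\b{j}$ must be tracked precisely. Likewise, locating $\r{1}$ inside the appropriate $m_i$ requires a case split to confirm that the $+1$ correction arises exactly from the boundary behaviour at that position and from nowhere else.
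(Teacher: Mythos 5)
Your overall route is the paper's own: the paper also proves this lemma by rewriting the CE-statistic in the coloured alphabet and then comparing $c=\psi(T)$ with $\varphi(T)=\Theta(c)$ through the decomposition $c=(\b{j}\,m_1\cdots m_k)$ of Lemma~\ref{lem:ana_perm_cycles_cycle_bleus_rouges}. However, your first "short case analysis'' rests on a claim that is false and would derail the interior-matching step. In the coloured alphabet the operative reading of the CE-statistic (the one the paper uses, and the only one under which the lemma is true) is exact, with no slack: $\operatorname{CE}(\varphi(T))-1$ counts precisely the blue entries whose image is red and different from $\r{1}$, i.e.\ the blue blocks of $\varphi(T)$ not followed by $\r{1}$. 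Your threshold "value $>\w_L$'' is \emph{not} "slightly weaker than image is red'' up to a discrepancy localised at $\b{j}$ and $\r{1}$: the mismatch between "$>\w_L$'' and "red, $\neq\r{1}$'' (i.e.\ "$>\w_R$'') is a whole interval of values of length $|\w_L-\w_R+1|$, and arbitrarily many blue positions can map into it. Concretely, for $\sigma=3124$ (excedance set $\{1\}$, so $\w_L=3$, $\w_R=2$) one gets $c=(\b{2}\,\r{3}\,\r{1}\,\b{1}\,\r{2})$ with two blue blocks and neither correction applicable, so the lemma forces $\operatorname{CE}=2$, whereas the literal threshold count $|\{u\leq 1:\sigma(u)>3\}|+1$ gives $1$. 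Moreover, trying to absorb threshold slack into the $\b{j}$/$\r{1}$ corrections would double-count them: those two corrections come from elsewhere, namely from the boundary pairs where $c$ and $\varphi(T)$ genuinely differ (the wrap pairs $(M_i,\sigma(M_i))$ are replaced by $(\b{j},\sigma(M_1))$, $(M_i,\operatorname{first}(m_{i+1}))$ and $(M_k,\b{j})$, which gives the clean identity: number of blue blocks of $c$ equals that of $\varphi(T)$ plus $1$ minus $[\sigma(M_1)\ \text{blue}]$) and from the $\r{1}$-preimage term (CE minus $1$ equals the number of blue blocks of $\varphi(T)$ minus $[\varphi(T)^{-1}(\r{1})\ \text{blue}]$); combining these two identities, and checking the degenerate case $\sigma(M_1)=\r{1}$ which the clause "different from $\b{j}$'' is there for, is the whole proof.

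Your second deferred check is also misdirected. Cycles of $\varphi(T)$ with a blue maximum do not exist at all: every blue entry is an excedance position of $\varphi(T)$ (the excedances are exactly the positions $1,\dots,\w_R-1$), while the maximum of a cycle can never be an excedance, so every $M_i$ is red. Your parenthetical "hence $i=k$'' is in any case a non sequitur -- nothing in your argument rules out several all-blue cycles; what rules them out is the excedance structure, which you never invoke, and which is also what guarantees that every blue block of $\varphi(T)$ is actually followed by a red element (the fact silently used in the reinterpretation of CE). Finally, your matching of the junction at $M_i$ with "an excedance of $\varphi$ at $M_i$'' cannot be right as stated, since $M_i$ is red and CE only counts blue sources; you count blue blocks by red-to-blue transitions but CE counts blue-to-red ones, and passing between the two requires the (easy, but unstated) cyclic equality of these transition counts within each cycle. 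The skeleton of your proposal is sound and identical to the paper's, but the two places where you wave at a "short case analysis'' are exactly the places where the proposal, as written, is wrong.
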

    \begin{proof}
        Let $\p$ be a permutation of size $n=i+j-1$ such that all its excedances
        are at position $\llbracket 1,j-1 \rrbracket$. If we keep the
        interpretation with red and blue integers, the CE-statistic can be interpreted as
        \[
            |\{
                \b{u}\in\llbracket \b{1},\b{j-1} \rrbracket,\,
                \p(\b{u})\in\llbracket \r{2},\r{i} \rrbracket
            \}|+1.
        \]

        Hence, the CE-statistic is the number of blue blocks of $\p$ without
        $\r{1}$ to their right plus 1. Hence, we should study how the number of
        blue blocks behave with respect to $\Theta$. Using
        Lemma~\ref{lem:ana_perm_cycles_cycle_bleus_rouges} we obtain the
        conditions of Lemma~\ref{lem:ana_blocs_bleus_CE}.
    \end{proof}
    Following the previous lemma,  $\Theta$ is not sufficient to prove the
    equidistribution between the hook statistic and the CE-statistic. We need
    one last involution. Let $m$ be the representative of a 2-coloured block
    decreasing sequence $c$ of size $i\times j$ such that $\b{j}$ is on the left of
    $m$. The word $m$ can be factorised as $m=\b{j}b_1\cdots b_k\r{1}m'$ where
    the $b_i$ are maximal blocks of same colours. Let $\omega$ be the involution
    such that if $k$ is even then a representative of $\omega(c)$ is 
    $m=\b{j}b_2b_1\cdots b_kb_{k-1}\r{1}m'$, and if $k$ is odd then
    $\omega(c)=c$.

    \begin{prop}\label{prop:ana_blocs_bleus_CE_equi}
        Let $c$ be a 2-coloured block decreasing sequence. The number of blue
        blocks of $c$ is equal to the CE-statistic of $\Theta(\omega(c))$.
    \end{prop}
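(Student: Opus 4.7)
The plan is to apply Lemma~\ref{lem:ana_blocs_bleus_CE} directly to $\omega(c)$, obtaining
\[
    \text{blue}(\omega(c)) = \operatorname{CE}(\Theta(\omega(c))) - \epsilon_1(\omega(c)) + \epsilon_2(\omega(c)),
\]
where $\epsilon_1,\epsilon_2\in\{0,1\}$ are the indicators of the two boundary conditions appearing in that lemma (respectively: $\b{j}$ is followed by a blue element; and $\r{1}$ is preceded by a blue element different from $\b{j}$). Since we want $\operatorname{CE}(\Theta(\omega(c)))=\text{blue}(c)$, the proof reduces to the combinatorial identity
\[
    \text{blue}(c)-\text{blue}(\omega(c)) = \epsilon_1(\omega(c))-\epsilon_2(\omega(c)),
\]
which I would establish by a case analysis on the parity of $k$.

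When $k$ is odd, $\omega(c)=c$ by definition, so the left-hand side vanishes. Because the blocks $b_1,\dots,b_k$ alternate in colour and $k$ is odd, $b_1$ and $b_k$ share their colour, so either both indicators $\epsilon_1(c),\epsilon_2(c)$ equal $1$ (both blue; the exclusion $b_k\neq\b{j}$ is automatic as $b_k$ lies strictly between $\b{j}$ and $\r{1}$) or both equal $0$ (both red), and the right-hand side also vanishes. The degenerate subcase $k=0$ is checked directly.

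For $k\geq 2$ even, the colours of $b_1$ and $b_2$ differ, as do those of $b_{k-1}$ and $b_k$. The crux is the observation that the only way $\omega$ can affect the count of \emph{blue} maximal cyclic blocks is through the merging between $\b{j}$ and its right neighbour: $\r{1}$ is red, so any merges along the $b_k$--$\r{1}$ boundary only affect red runs; and $m'$ is untouched by $\omega$. Hence $\text{blue}(\omega(c))-\text{blue}(c)=+1$ in the subcase $b_1$ blue (and $b_2$ red) and $-1$ in the opposite subcase, while reading off the colour of $b_2$ (which controls $\epsilon_1(\omega(c))$) and of $b_{k-1}$ (which controls $\epsilon_2(\omega(c))$) yields the matching values on the right-hand side. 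The main obstacle is precisely this merging bookkeeping: because the $b_i$'s are defined as maximal monochromatic runs only \emph{within} the segment $b_1\cdots b_k$, one must carefully distinguish between merges extending a blue run through $\b{j}$ versus merges extending a red run through $\r{1}$, and verify that only the former alter the blue-block count.
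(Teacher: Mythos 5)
Your proof is correct and follows exactly the route the paper intends: Proposition~\ref{prop:ana_blocs_bleus_CE_equi} is stated there without proof, the implicit argument being precisely your application of Lemma~\ref{lem:ana_blocs_bleus_CE} to $\omega(c)$ combined with a parity case analysis on $k$ exploiting the colour alternation of the maximal blocks $b_1,\dots,b_k$. Your bookkeeping checks out in every case (including the vacuous $k=0$ case, the automatic exclusion $b_k\neq\b{j}$, and the key observation that the junction of $b_k$ or $b_{k-1}$ with $\r{1}$ can only merge red runs and hence never alters the blue-block count), so you have in effect supplied the verification the paper omits.
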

    \begin{cor}
        The hook statistic on non-ambiguous trees and the CE-statistic on
        permutations with all their excedances at the beginning are
        equidistributed.
    \end{cor}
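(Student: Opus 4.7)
The plan is to exhibit an explicit statistic-preserving bijection by composing the maps already built. Starting from a non-ambiguous tree $T$ of geometric size $i\times j$, I would form the chain
\[
T \;\xmapsto{\psi}\; c \;\xmapsto{\omega}\; c' \;\xmapsto{\Theta}\; \sigma,
\]
where $c$ and $c'$ are $2$-coloured block decreasing cycles of size $i\times j$ and $\sigma$ is a permutation of size $i+j-1$ whose excedances lie in positions $\llbracket 1,j-1\rrbracket$. Each of the three maps is a bijection on the relevant sets: $\psi$ by Corollary~\ref{cor:ana_bij_cycle}, $\omega$ because it is an involution by construction, and $\Theta=\varphi\circ\psi^{-1}$ as a composition of two bijections. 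Hence the composite $\sigma=\Theta(\omega(\psi(T)))$ is a bijection from non-ambiguous trees of geometric size $i\times j$ onto permutations of $i+j-1$ with excedances at the beginning.

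Next I would track the two statistics along the chain. By Lemma~\ref{lem:ana_enumeration_nb_bloc_fixe}, the hook number of $T$ equals the number of blue blocks of $c=\psi(T)$. The involution $\omega$ only reorders the maximal monochromatic blocks $b_1,\ldots,b_k$ between $\b{j}$ and $\r{1}$ by swapping consecutive pairs when $k$ is even (and does nothing otherwise), so it preserves the multiset of blocks and in particular the number of blue blocks. Therefore $c'=\omega(c)$ has the same number of blue blocks as $c$, i.e.\ $\operatorname{hook}(T)$. Finally, Proposition~\ref{prop:ana_blocs_bleus_CE_equi} gives exactly $\operatorname{CE}(\Theta(\omega(c)))$ equal to the number of blue blocks of $c$, i.e.\ $\operatorname{CE}(\sigma)=\operatorname{hook}(T)$.

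The main (and only) nontrivial point is to verify that $\omega$ is well-defined and really corrects the discrepancy recorded in Lemma~\ref{lem:ana_blocs_bleus_CE}, but this is precisely the content of Proposition~\ref{prop:ana_blocs_bleus_CE_equi}, so no new combinatorial work is needed. Once this statistic-preserving bijection is in place, the equidistribution of $\operatorname{hook}$ on $\NAT_{i,j}$ and $\operatorname{CE}$ on permutations of size $i+j-1$ with excedances confined to $\llbracket 1,j-1\rrbracket$ follows at once, which is the claim of the corollary.
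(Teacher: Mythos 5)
Your proof is correct and is essentially the paper's own (implicit) argument: the corollary is stated there as an immediate consequence of the preceding results, namely that $\Theta\circ\omega\circ\psi$ is a bijection from non-ambiguous trees of geometric size $i\times j$ onto permutations of size $i+j-1$ with all excedances in positions $\llbracket 1,j-1\rrbracket$, carrying the hook statistic to the CE-statistic by Corollary~\ref{cor:ana_bij_cycle}, Lemma~\ref{lem:ana_enumeration_nb_bloc_fixe} and Proposition~\ref{prop:ana_blocs_bleus_CE_equi}. One correction, however: your intermediate assertion that $\omega$ ``preserves the multiset of blocks and in particular the number of blue blocks'' is false whenever $\omega$ acts nontrivially, since swapping the blocks adjacent to $\b{j}$ and to $\r{1}$ can merge or split monochromatic runs (e.g.\ placing a blue block directly after $\b{j}$ fuses two blue blocks into one) --- this alteration of the block count is exactly the mechanism by which $\omega$ compensates the discrepancy recorded in Lemma~\ref{lem:ana_blocs_bleus_CE}. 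Fortunately that sentence is never used in your chain: Proposition~\ref{prop:ana_blocs_bleus_CE_equi} equates $\operatorname{CE}(\Theta(\omega(c)))$ with the number of blue blocks of $c$ itself, not of $\omega(c)$, so the conclusion $\operatorname{hook}(T)=\operatorname{CE}(\Theta(\omega(\psi(T))))$ stands once the erroneous remark is deleted.
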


\subsection{q-analogues of the hook formula}\label{q_analogs}
As for binary trees, there exist  $q$-analogues of the hook formula for NATs 
of a given shape, associated to either the number of inversions
or the major index. There are two ingredients: first we need to associate a pair of
permutations to a non-ambiguous tree, and second we need to give a $q$-analogue
of the bilinear map $\Bxy$. It turns out that it is possible to use two
different $q$'s namely $q_R$ and $q_L$ for the derivative and integral in $x$
and $y$.

The first step, in order to formulate a $q$-hook formula, is to associate to any
non-empty non-ambiguous  tree $T$ a pair of permutations 
\begin{equation*}
\sigma(T)=(\sigma_L(T),
\sigma_R(T))\in\SG_{\LV(T)} \times \SG_{\RV(T)}.
\end{equation*}

\begin{defi}
  Let $T$ be a non-ambiguous tree. Then $\sigma_L(T)$ is obtained by
  performing a left postfix reading of the left labels: precisely we
  recursively read trees $\NodeBT{L}{R}$ by reading the left labels of $L$,
  then the left labels of $R$ and finally the label of the root if it is a left
  child. The permutation $\sigma_R(T)$ is defined similarly by reading the
  right labels in the right subtree, then in the left subtree and finally reading
  the root.
\end{defi}
If we consider the example of Figure~\ref{fig_exple_nat} the two associated permutations are  $\sigma_L(T) = (2,$ $1,$ $4,$ $3,$ $6,$ $10,$ $8,$ $9,$ $5,$ $7)$ and $\sigma_R(T) = (1,$ $2, 3,$ $4, 5,$ $7, 11,$ $9, 6,$ $8, 10)$.

Recall that the \emph{number of inversions} of a permutation $\sigma\in\SG_n$
is the number of $i<j\leqslant n$ such that $\sigma(i)>\sigma(j)$ (denoted by Inv from now on). A \emph{descent} of
$\sigma$ is a $i<n$ such that $\sigma(i)>\sigma(i+1)$ and the \emph{inverse
  major index} of $\sigma$ is the sum of the descents of $\sigma^{-1}$ (denoted by iMaj from now on). Finally
for a repetition free word $w$ of length~$l$ we denote by $\std(w)$ the
permutations in $\SG_l$ obtained by renumbering $w$ keeping the order of the
letters. For example $\std(36482)=24351$. We define as usual the $q$-integer
$[n]_q:= \frac{1-q^n}{1-q}$, and the $q$-factorial $[n]_q!:= \prod_{i=1}^{n}
[i]_q$. 
For a non-ambiguous tree $\N$ and a statistic $S\in\{\inv,\imaj\}$, we define the weight of a NAT $T$ as
\begin{equation}
  \omega_S(T):= q_L^{S(\sigma_L(T))}q_R^{S(\sigma_R(T))}.
\end{equation}
The following theorem is a $q$-analogue of Proposition \ref{prop_hook_nat}.
\begin{thm}\label{thm-q-hook}

For any non-empty binary tree $B$, 
\begin{equation}
   \sum_{T\in\NAT(\B)} \omega_{S}(T) = 
\frac{ |\LV(\B)|_{q_L}! \cdot |\RV(\B)|_{q_R}!}{
  \displaystyle\prod_{U: \text{left child}}[\EL(U)]_{q_L} \cdot 
  \prod_{U: \text{right child}}[\ER(U)]_{q_R}
}\,.
\end{equation}
\end{thm}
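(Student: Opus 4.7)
The plan is to mimic the proof of Proposition~\ref{prop_hook_nat}, replacing the bilinear pumping function $\Bxy$ with a $q$-analogue that tracks the statistic $\omega_S$. I will introduce a bilinear map $\Bxy^{q_L,q_R}\colon\QQ[[x,y]]\times\QQ[[x,y]]\to\QQ[[x,y]]$ obtained from $\Bxy$ by substituting the ordinary derivatives and integrals with Jackson $q$-analogues, with parameter $q_L$ in the variable $x$ and $q_R$ in~$y$. A direct computation will show that on $q$-exponential monomials,
\[
\Bxy^{q_L,q_R}\!\left(\tfrac{x^{a_1}y^{b_1}}{[a_1]_{q_L}![b_1]_{q_R}!},\tfrac{x^{a_2}y^{b_2}}{[a_2]_{q_L}![b_2]_{q_R}!}\right)=\binom{a_1+a_2-1}{a_2-1}_{q_L}\!\binom{b_1+b_2-1}{b_1-1}_{q_R}\!\tfrac{x^{a_1+a_2}y^{b_1+b_2}}{[a_1+a_2]_{q_L}![b_1+b_2]_{q_R}!}.
\]
I then define the $q$-weighted map $\Phi^q_S(T):=\omega_S(T)\,\tfrac{x^{w_L(T)}}{[w_L(T)]_{q_L}!}\tfrac{y^{w_R(T)}}{[w_R(T)]_{q_R}!}$, extended linearly to $\QQ\NAT$, and the recursive element $\Bx^q_S(B)$ defined by $\Bx^q_S(\emptyset_L)=y$, $\Bx^q_S(\emptyset_R)=x$, and $\Bx^q_S(\NodeBT{L}{R})=\Bxy^{q_L,q_R}(\Bx^q_S(L),\Bx^q_S(R))$. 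The target identity becomes $\Phi^q_S(\NATB[B])=\Bx^q_S(B)$.

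The crux is a $q$-analogue of Lemma~\ref{lem:com_phi}: for any NATs $T_1,T_2$,
\[
\sum_{T:\,T_L=T_1,\,T_R=T_2}\omega_S(T)=\binom{w_L(T)-1}{w_L(T_2)-1}_{q_L}\binom{w_R(T)-1}{w_R(T_1)-1}_{q_R}\omega_S(T_1)\,\omega_S(T_2).
\]
The sum ranges over pairs $(A,A')$, with $A\subset\{1,\dots,|\LV(T)|\}$ of size $w_L(T_1)$ and $A'\subset\{1,\dots,|\RV(T)|\}$ of size $w_R(T_2)$, prescribing which labels are reserved for $T_1$ on each side; the two factors decouple, so I treat the left one and handle the right symmetrically. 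The key observation is that the left-postfix reading forces $\sigma_L(T)$ to be the concatenation of an $L$-block of length $w_L(T_1)$ (an order-preserving relabelling of $\sigma_L(T_1)$ with $\max A$ appended at the end, coming from the root of the left subtree) and an $R$-block of length $w_L(T_2)-1$ (a relabelling of $\sigma_L(T_2)$).

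For $S=\inv$ the argument is direct: inversions within each block are preserved by the order-preserving relabelling, and the cross-block inversions equal $\#\{(a,b)\in A\times A^c:a>b\}$; the classical identity $\sum_A q^{\#\{\cdots\}}=\binom{|\LV(T)|}{w_L(T_1)}_{q}$ yields the needed $q$-binomial factor. For $S=\imaj$, I pass to the inverse: $\sigma_L(T)^{-1}$, read as a word indexed by values, is an honest word-shuffle of $\sigma_L(T_1)^{-1}$ (extended by a fixed point at $w_L(T_1)$, which introduces no new descent in $\operatorname{maj}$) and $\sigma_L(T_2)^{-1}$ shifted by $w_L(T_1)$, the shuffle pattern being encoded by $A$. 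The Garsia--Gessel shuffle identity for $\operatorname{maj}$ on disjoint consecutive alphabets then delivers the same $q$-binomial factor multiplied by $q_L^{\imaj(\sigma_L(T_1))+\imaj(\sigma_L(T_2))}$.

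Once this commutation is in hand, $\Phi^q_S(\NATB[B])=\Bx^q_S(B)$ follows by induction on $B$ exactly as in Proposition~\ref{prop:commutation_phi}, and the telescoping computation from the proof of Proposition~\ref{prop_hook_nat} carries over with factorials replaced by $q$-factorials, producing at each internal node the factor $\tfrac{1}{[w_L(L)]_{q_L}[w_R(R)]_{q_R}}$ that inductively builds up the denominator $\prod_{U}[\EL(U)]_{q_L}\cdot\prod_{U}[\ER(U)]_{q_R}$ of Theorem~\ref{thm-q-hook}. The main obstacle will be the $\imaj$ case of the commutation lemma: one must carefully verify that the recursive left-postfix reading really induces a word-shuffle on the inverses, that the appended fixed point does not alter $\operatorname{maj}$, and that Garsia--Gessel applies cleanly to the shifted alphabets.
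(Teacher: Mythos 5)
Your proposal is correct, and its global architecture coincides exactly with the paper's: the same $(q_L,q_R)$-pumping function $\Bxy_q$ built from Jackson derivatives and integrals, the same monomial computation producing the two $q$-binomial factors, the same commutation identity $\Phi_S(\NAT(B))=\Bx_q(B)$ proved by induction on the shape, and the same telescoping recursion in which the per-node factor $\frac{1}{[w_L(L)]_{q_L}[w_R(R)]_{q_R}}$ assembles the hook denominator. The one genuine divergence is in the proof of the key combinatorial identity (the paper's Lemma~\ref{lem:technical-q}) in the case $S=\imaj$. The paper proves it by an induction on $n+m$ with a case split on whether $n+m$ lies in $u$ or $v$, bootstrapping from Equation~36 of \cite{HivNovThib08} and finishing with the $q$-Pascal identity. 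You instead pass to inverses: since the letters of $u$ occupy the first $w_L(T_1)$ positions, $\theta^{-1}$ is an honest shuffle, encoded by the label set $A$, of $(\int\tau)^{-1}=\tau^{-1}\cdot(m{+}1)$ (appending the maximal letter, hence no change to $\operatorname{maj}$) with $\pi^{-1}$ shifted to the alphabet $\{m+2,\dots,m+n+1\}$, and the Garsia--Gessel shuffle theorem for $\operatorname{maj}$ on consecutive disjoint alphabets gives $q^{\imaj(\tau)+\imaj(\pi)}\binom{m+n+1}{m+1}_q$ in one stroke. This is a sound and arguably cleaner route: it replaces the paper's bespoke induction by a classical citable theorem, and it makes transparent why $\inv$ and $\imaj$ yield the same answer (both reduce to counting the interleaving pattern $A$); what the paper's inductive argument buys in exchange is self-containedness within the framework of \cite{HivNovThib08} and no appeal to the shuffle theorem. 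One cosmetic slip: you describe $A'$ of size $w_R(T_2)$ as the labels reserved for $T_1$, when it is of course reserved for $T_2$ (the root of the right subtree being a right child); the binomial $\binom{w_R(T)-1}{w_R(T_1)-1}_{q_R}$ you write is nevertheless the correct one.
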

Going back to the non-ambiguous tree of Figure~\ref{fig_exple_nat}, the
inversions numbers are $\inv(\sigma_L(T)) = 11$ and $\inv(\sigma_R(T)) = 7$, so
that $w_{\inv}(T) = q_L^{11}q_R^{7}$. For the inverse major index, we get the permutations $\sigma_L(T)^{-1} = (2, 1, 4, 3, 9, 5, 10, 7, 8, 6)$ and $
\sigma_R(T)^{-1} = (1, 2, 3, 4, 5, 9, 6, 10, 8, 11, 7)$.
Consequently, $\imaj(\sigma_L(T)) = 1 + 3 + 5 + 7 + 9 = 25$
and     $\imaj(\sigma_R(T)) = 6 + 8 + 10 = 24$ so that 
$w_{\imaj}(T) = q_L^{25}q_R^{24}$.

The argument of the proof follows the same path as for the hook formula, using
pumping functions. Recall that the $q$-derivative and $q$-integral are
defined as $\partial_{x, q} x^n:= [n]_qx^{n-1}$ and
$\int_{x, q} u^n\d u:= \frac{x^{n+1}}{[n+1]_q}$.
Then the $(q_L, q_R)$-analogue of the pumping function is given by
\begin{equation}
  \Bxy_q(f, g) = \int_{x,q_L}\int_{y,q_R}
  \partial_{x,q_L}g(u,v)\cdot
  \partial_{y,q_R}f(u,v)
  \diff u\;\diff v.
\end{equation}

We also define recursively $\Bx_q(B)$ by 
$\Bx_{q}\left(\NodeBT{L}{R}\right) = \Bxy_q\left(\Bx_q(L), \Bx_q(R)\right)\,$, with initial conditions $\Bx_q(\emptyset_R):= x$ and $\Bx_q(\emptyset_L):=y$. Then
the main idea is to go through a bilinear function on  permutations. We
write $\QQ\SG$ the vector space of formal sums of permutations. For any
permutation $\sigma\in\SG_n$ we write $\int\sigma=\sigma[n+1]$ the
permutation in $\SG_{n+1}$ obtained by adding $n+1$ at the end. Again we
extend $\int$ by linearity.
\begin{defi}
  The bilinear map 
  $\BSG:\QQ\SG\times\QQ\SG\mapsto\QQ\SG$ is defined
  for $\sigma\in\SG_m$ and $\mu\in\SG_n$ by 
  $$ \BSG(\sigma,\mu) =
    \sum_{\substack{uv\in\SG_{m+n+1} \\ \std(u)=\int\sigma \\ \std(v)=\mu}} uv\,.$$
\end{defi}
For example
$\BSG(21,12)=21345+21435+21534+31425+31524+
41523+32415+32514+42513+43512$.
\begin{lem}
For two non-empty non-ambiguous trees $C$ and $D$, we have
\begin{align*}
  \sum_{T\in\BNAT(C, D)} \sigma_L(T) &=
  \BSG(\sigma_L(C),\sigma_L(D)) \\
  \text{and} \qquad
  \sum_{T\in\BNAT(C, D)} \sigma_R(T) &=
  \BSG(\sigma_R(D),\sigma_R(C)).
\end{align*}
\end{lem}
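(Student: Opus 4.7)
The plan is to compute $\sigma_L(T)$ directly from the construction of $T\in\BNAT(C,D)$ out of $C$ and $D$, identify the result as a generic term in the sum defining $\BSG(\sigma_L(C),\sigma_L(D))$, and check that the assignment $T\mapsto uv$ realises the identity term-by-term; the second equality will then follow by a mirror argument.

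First I would unpack the shape of $T$: it is $\NodeBT{L}{R}$ where $L$ and $R$ are the shapes of $C$ and $D$, so that the $|\LV(T)| = |\LV(C)|+1+|\LV(D)|$ left labels of $T$ are partitioned between the $L$-region and the $R$-region. The $L$-region receives a subset of size $|\LV(C)|+1$---one label for each left child of $L$, plus one for the root of $L$, which is a left child of $T$'s root---while the $R$-region receives the remaining $|\LV(D)|$ labels. The NAT ancestor condition forces the root of $L$ to carry the maximum of its subset, and the remaining labels are distributed among the left children of $L$ in the relative order prescribed by $C$; the same rule governs the $R$-region. The right-label distribution is independent and handled by the mirror rule.

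Next I would read $\sigma_L(T)$ off this construction. The postfix traversal visits $T_L$, then $T_R$, then the root, outputting only the labels of left children. The $L$-region therefore contributes a block $u$ consisting of the left children of $L$ in $C$'s postfix order, followed by the root of $L$ (a left child of $T$'s root); this block has length $|\LV(C)|+1$, and standardising it yields $\sigma_L(C)\cdot(|\LV(C)|+1) = \int\sigma_L(C)$, since the root of $L$ received the largest element of its subset and is read last. The $R$-region contributes a block $v$ consisting of the left children of $R$ in $D$'s postfix order---the root of $R$ being a right child is skipped by $\sigma_L$---and this standardises to $\sigma_L(D)$. As $T$ runs over $\BNAT(C,D)$, equivalently as the left-label subset runs over its possibilities, the pair $(u,v)$ sweeps out exactly the factorizations enumerated by $\BSG(\sigma_L(C),\sigma_L(D))$, yielding the first identity.

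The second identity is established by the mirror argument: since $\sigma_R$ traverses the right subtree first, the role of ``subtree whose root contributes the appended maximum label'' passes from $L$ to $R$, so the $\int$ operator now decorates the $R$-factor and the standardised block from the $L$-region is simply $\sigma_R(C)$. The same analysis produces $\BSG(\sigma_R(D),\sigma_R(C))$. The principal bookkeeping point---and the only real source of trouble---is to verify that the $\int$ lands on exactly one of the two factors, namely the one whose root becomes an internal labeled child of $T$'s root in the relevant colour (left for $\sigma_L$, right for $\sigma_R$), and to apply the ``root always carries the maximum'' convention consistently at each step.
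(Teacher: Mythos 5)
Your proof is correct and follows essentially the same route as the paper's: you factor $\sigma_L(T)=uv$ along the two subtrees, observe that the root of $L$ becomes a left child of $T$'s root carrying the maximum label of its block (whence $\std(u)=\int\sigma_L(C)$ and $\std(v)=\sigma_L(D)$), and identify the choices of label subsets with the factorizations defining $\BSG$, with the mirror argument for $\sigma_R$ — right subtree read first, root of $D$ becoming a right label, hence the reversed order $\BSG(\sigma_R(D),\sigma_R(C))$ — exactly as in the paper. Your explicit appeal to the ancestor condition to force the root of $L$ to carry the maximum of its subset is just a slightly more detailed version of the paper's remark that the root of $C$ becomes a left label in $T$.
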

\begin{proof}
Let $T$ be a NAT appearing in $\BNAT(C, D)$, and let
its shape be the binary tree $B$.
We may write $\sigma_L(T)=uv$ where $u$ is the word given
by the left postfix reading of the left labels of the left 
subtree of $T$, and $v$ the one of its right subtree.
By definition, we have $\std(u)=\int\sigma_L(C)$ and $\std(v)=\sigma_L(D)$. The symbol $\int$ in the first equality comes from the fact that the root of $C$ becomes a left
label in $T$.
Moreover, the choice of a $v$ such that $\std(v)=\sigma_L(D)$
is equivalent to the choice of a labelling of the left vertices
of the right subtree $B_R$ of $B$, such that the order 
of the labels in $D$ is respected. The same holds for $u$,
and the corresponding root $C$.
This proves the first equality.

The second one is obtained in the same way.
We have to take care of two things: 
that the right postfix reading
reads first the right subtree, 
and that the root of $D$ here becomes a right label in $T$.
The order of $C$ and $D$ is thus reversed in $\BSG$. 
\end{proof}

Let us define a bilinear map $\Phi_S$. For two permutations 
$\tau\in\SG_{m}$ and $\pi\in\SG_{n}$
\begin{equation}
\Phi_S((\tau,\pi)):=
q_L^{S(\tau)}\frac{x^{m+1}}{[m+1]_{q_L}!}\,
q_R^{S(\pi)} \frac{y^{n+1}}{[n+1]_{q_R}!}\,
\end{equation}

We shall use the same notation
for NATs: $\Phi_S(T)= \Phi_S(\sigma_L(T),\sigma_R(T))$.

The main result, analogous to Proposition \ref{prop:commutation_phi} is the following.
\begin{prop}\label{prop:qhook-commut}
 For any non-empty binary tree $B$, we have the relation
$$\Phi_S(\NAT(B)) = \Bx_q(B).$$
\end{prop}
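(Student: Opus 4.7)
The plan is to mirror the proof of Proposition~\ref{prop:commutation_phi} at the $q$-level. The key intermediate step is to establish a $q$-analogue of Lemma~\ref{lem:com_phi}: for every pair $(C,D)\in\QQ\NAT_L\times\QQ\NAT_R$ of non-ambiguous trees,
\[
\Phi_S\bigl(\BNAT(C,D)\bigr) \;=\; \Bxy_q\bigl(\Phi_S(C),\Phi_S(D)\bigr),
\]
with the natural conventions $\Phi_S(\emptyset_L):=y$ and $\Phi_S(\emptyset_R):=x$. Granted this commutation, a direct induction on the shape $B$ using Lemma~\ref{lem:recur_B} concludes, exactly as Proposition~\ref{prop:commutation_phi} was deduced from Lemma~\ref{lem:com_phi}.

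To prove the commutation, I would first use the independence, inside $\BNAT(C,D)$, between the choice of left labellings and of right labellings, which makes the weighted sum defining $\Phi_S$ factor as a product of a $q_L$-sum and a $q_R$-sum. The lemma preceding Proposition~\ref{prop:qhook-commut} then rewrites each factor as a sum over the concatenation product $\BSG$ of two permutations. On the other hand, a short computation of $\Bxy_q$ on monomials gives
\[
\Bxy_q\!\left(\frac{x^{a_1}y^{b_1}}{[a_1]_{q_L}![b_1]_{q_R}!},\frac{x^{a_2}y^{b_2}}{[a_2]_{q_L}![b_2]_{q_R}!}\right)=\binom{a_1+a_2-1}{a_2-1}_{q_L}\binom{b_1+b_2-1}{b_1-1}_{q_R}\frac{x^{a_1+a_2}y^{b_1+b_2}}{[a_1+a_2]_{q_L}![b_1+b_2]_{q_R}!},
\]
so matching both sides reduces to the concatenation shuffle identity
\[
\sum_{\substack{uv\in\SG_{m+n+1}\\ \std(u)=\int\tau,\ \std(v)=\pi}} q^{S(uv)} \;=\; q^{S(\tau)+S(\pi)}\binom{m+n+1}{n}_q
\]
for $\tau\in\SG_m$, $\pi\in\SG_n$ and $S\in\{\inv,\imaj\}$, where I use that appending a maximal letter at the end of $\tau$ changes neither $\inv$ nor $\imaj$.

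For $S=\inv$ this identity is classical: conditioning on the letter set $A$ of $u$ gives $\inv(uv)=\inv(\tau)+\inv(\pi)+\mathrm{cross}(A)$ with $\mathrm{cross}(A)=|\{(a,b)\in A\times A^c:a>b\}|$, and the generating function of $\mathrm{cross}$ over subsets of fixed size is the standard combinatorial interpretation of the $q$-binomial. I expect the main obstacle to lie in the $S=\imaj$ case: the major index does not split additively over $u$ and $v$, so the argument must be performed on the inverse permutation $(uv)^{-1}$, tracking precisely how the descents of $\tau^{-1}$ and $\pi^{-1}$ are interleaved according to the chosen letter set $A$ and where new descents can be created at the boundary. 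Alternatively, one may invoke a MacMahon-type equidistribution of $\inv$ and $\imaj$ on concatenation classes to deduce the $\imaj$ identity from the $\inv$ one. Once both cases of the shuffle identity are in hand, the commutation, and hence the proposition, follow.
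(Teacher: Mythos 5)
Your overall architecture coincides with the paper's: the commutation you posit is exactly the paper's Lemma \ref{lem:com_phi_q}, proved there too by factoring the weighted sum over $\BNAT(C,D)$ through the $\BSG$-lemma and a monomial evaluation of $\Bxy_q$ (your monomial formula is correct, and your $\binom{m+n+1}{n}_{q}$ is the paper's $\binom{m+n+1}{m+1}_{q}$), so that everything reduces to the concatenation identity, which is the paper's Lemma \ref{lem:technical-q}; the induction on the shape via Lemma \ref{lem:recur_B} is then verbatim the paper's proof of Proposition \ref{prop:commutation_phi}. Your treatment of $S=\inv$ through the crossing statistic of the chosen value set is also the paper's argument.

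The genuine gap is the case $S=\imaj$ of the concatenation identity, which is the real content of Lemma \ref{lem:technical-q} and which you only sketch. Moreover, your fallback is false as stated: $\inv$ and $\imaj$ are \emph{not} equidistributed on the classes $\{uv : \std(u)=\int\tau,\ \std(v)=\pi\}$, since the two generating functions are $q^{\inv(\tau)+\inv(\pi)}\binom{m+n+1}{m+1}_{q}$ and $q^{\imaj(\tau)+\imaj(\pi)}\binom{m+n+1}{m+1}_{q}$, and the constants differ in general (for $\tau=231$ one has $\inv(\tau)=2$ but $\imaj(\tau)=1$); only the statistics normalized by their value on $(\tau,\pi)$ agree, and that is precisely what must be proven, so no MacMahon-type transfer is available off the shelf. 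Your primary route, on the other hand, can be completed: inverting maps the class bijectively onto the set of shuffles of the two disjoint-alphabet words $\tau^{-1}\,(m{+}1)$ and $\pi^{-1}$ shifted up by $m+1$, one has $\imaj(uv)=\operatorname{maj}\bigl((uv)^{-1}\bigr)$, and the required factorization is exactly the shuffle theorem of Garsia and Gessel for the major index over shuffles of disjoint words -- but you do not carry this out, and it is the heart of the matter. Note that the paper takes a different route here: it proves the $\imaj$ case by induction on $m+n$, splitting according to whether the maximal letter lies in $u$ or in $v$, importing Equation~36 of \cite{HivNovThib08} to obtain \eqref{eq:imaj_lemma_1}, and closing the induction for \eqref{eq:imaj_lemma_2} with the $q$-Pascal identity.
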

As for Proposition \ref{prop:commutation_phi}, 
Proposition \ref{prop:qhook-commut} is derived
recursively.
The analogue of Lemma \ref{lem:com_phi} 
is the following.

\begin{lem}\label{lem:com_phi_q}
For $(T_1, T_2)$ a pair of NATs in $\QQ\NAT_L\times\QQ\NAT_R$, one has
  $$\Phi_S(\BNAT(T_1, T_2)) = \Bxy_q(\Phi_S(T_1),\Phi_S(T_2)).$$
\end{lem}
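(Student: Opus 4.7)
The plan is to mirror the proof of Lemma~\ref{lem:com_phi}: compute both sides explicitly and match them, replacing the ordinary factorials and binomials by their $q$-analogues, and replacing the binomial counting of NATs in $\BNAT(T_1,T_2)$ by a $q$-counting of shuffles via $\BSG$.

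First, a direct calculation using $\partial_{x,q_L}x^n=[n]_{q_L}x^{n-1}$ and $\int_{x,q_L}x^n\diff x=\frac{x^{n+1}}{[n+1]_{q_L}}$ shows that for $f=\frac{x^{a_1}}{[a_1]_{q_L}!}\frac{y^{b_1}}{[b_1]_{q_R}!}$ and $g=\frac{x^{a_2}}{[a_2]_{q_L}!}\frac{y^{b_2}}{[b_2]_{q_R}!}$ one has
\[
\Bxy_q(f,g)=\binom{a_1+a_2-1}{a_2-1}_{q_L}\binom{b_1+b_2-1}{b_1-1}_{q_R}\frac{x^{a_1+a_2}}{[a_1+a_2]_{q_L}!}\frac{y^{b_1+b_2}}{[b_1+b_2]_{q_R}!}.
\]
Setting $(a_i,b_i)=(\wi_L(T_i),\wi_R(T_i))$ and folding in the $q_L^{S(\sigma_L(T_i))}q_R^{S(\sigma_R(T_i))}$ factors carried by $\Phi_S(T_i)$ gives an explicit closed form for the right-hand side of the lemma.

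Next I would expand the left-hand side as $\Phi_S(\BNAT(T_1,T_2))=\sum_{T\in\BNAT(T_1,T_2)}\Phi_S(T)$. All $T$ in the sum share the same geometric size $(\wi_L(T_1)+\wi_L(T_2),\wi_R(T_1)+\wi_R(T_2))$, so the monomial part factors out. The lemma immediately preceding shows moreover that choosing $T\in\BNAT(T_1,T_2)$ amounts to \emph{independently} choosing how to relabel the left vertices (producing a term of $\BSG(\sigma_L(T_1),\sigma_L(T_2))$) and the right vertices (producing a term of $\BSG(\sigma_R(T_2),\sigma_R(T_1))$). Consequently the weighted sum factors as
\[
\sum_{T\in\BNAT(T_1,T_2)}q_L^{S(\sigma_L(T))}q_R^{S(\sigma_R(T))}=\Bigl(\sum_{uv\in\BSG(\sigma_L(T_1),\sigma_L(T_2))}q_L^{S(uv)}\Bigr)\cdot\Bigl(\sum_{uv\in\BSG(\sigma_R(T_2),\sigma_R(T_1))}q_R^{S(uv)}\Bigr).
\]

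The key step is then a classical $q$-shuffle identity: for $S\in\{\inv,\imaj\}$, $\sigma\in\SG_m$, $\mu\in\SG_n$,
\[
\sum_{\substack{uv\in\SG_{m+n+1}\\ \std(u)=\int\sigma,\ \std(v)=\mu}}q^{S(uv)}=q^{S(\sigma)+S(\mu)}\binom{m+n+1}{m+1}_q.
\]
For $S=\inv$ this is MacMahon's $q$-binomial theorem: since appending $n+1$ creates no inversions, $\inv(uv)=\inv(\sigma)+\inv(\mu)+\#\{(a,b)\in A\times A^c:a>b\}$ where $A\subset\{1,\dots,m+n+1\}$ is the set of values placed in $u$, and summing over $A$ produces the $q$-binomial. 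For $S=\imaj$, the identity is equivalent via inversion of permutations to the Garsia--Gessel shuffle identity for the major index. Applying this identity on both factors yields exactly the same $q$-binomials $\binom{\wi_L(T_1)+\wi_L(T_2)-1}{\wi_L(T_2)-1}_{q_L}\binom{\wi_R(T_1)+\wi_R(T_2)-1}{\wi_R(T_1)-1}_{q_R}$ and the same $q$-powers as those coming from the computation of $\Bxy_q(\Phi_S(T_1),\Phi_S(T_2))$ above, so the two sides coincide. The main obstacle is the $\imaj$ case of the $q$-shuffle identity, since, unlike $\inv$, the inverse major index does not split trivially under concatenation; some care is needed to reduce it to Garsia--Gessel's theorem using the fact that in $\BSG$ the word $u$ strictly precedes $v$ positionally, so that descents of $(uv)^{-1}$ decompose cleanly according to whether the value is a letter of $u$ or of $v$.
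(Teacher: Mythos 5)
Your proposal is correct, and its skeleton is exactly the paper's: the paper also factors $\Phi_S(\BNAT(T_1,T_2))$ through the $\BSG$-lemma preceding Lemma~\ref{lem:com_phi_q}, isolates the shuffle identity
\[
\sum_{\substack{\theta=uv\in\SG_{m+n+1}\\ \std(u)=\int\tau,\ \std(v)=\pi}} q^{S(\theta)}
= q^{S(\tau)+S(\pi)}\binom{m+n+1}{m+1}_{q}
\]
as a separate statement (Lemma~\ref{lem:technical-q}), and then matches it against the same computation of $\Bxy_q$ on $q$-monomials that you carry out first (your binomials $\binom{a_1+a_2-1}{a_2-1}_{q_L}$ and $\binom{b_1+b_2-1}{b_1-1}_{q_R}$ are the complements of the paper's $\binom{w_L(T_1)+w_L(T_2)-1}{w_L(T_1)}_{q_L}$ and $\binom{w_R(T_1)+w_R(T_2)-1}{w_R(T_2)}_{q_R}$, so they agree). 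Your treatment of $S=\inv$ via MacMahon's cross-inversion count over the value set $A$ is also the paper's argument in compressed form. The one genuine divergence is the $\imaj$ case of the shuffle identity: the paper proves it by induction on $m+n$, splitting according to the positions of the largest letters, deriving the two refinements \eqref{eq:imaj_lemma_1} and \eqref{eq:imaj_lemma_2} (the first imported from Equation~36 of \cite{HivNovThib08}) and adding them with the $q$-Pascal rule; you instead pass to inverses, observing that $\{\theta^{-1}\}$ is precisely the set of shuffles of $(\int\tau)^{-1}$ on $\{1,\dots,m+1\}$ with $\pi^{-1}$ shifted by $m+1$, and invoke the Garsia--Gessel major-index shuffle theorem, using $\maj((\int\tau)^{-1})=\imaj(\tau)$ since appending the maximum creates no new descent of the inverse. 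That reduction is valid and is arguably cleaner: it replaces the paper's two-case induction with one classical theorem, at the cost of being less self-contained. One small point to note (which the paper also glosses over): the $\BSG$-lemma is stated for non-empty $T_1,T_2$, so the degenerate cases $T_1=\emptyset_L$ or $T_2=\emptyset_R$ formally require a separate (trivial) check against the conventions $w_L(\emptyset_L)=0$, $w_R(\emptyset_R)=0$.
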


To prove it, we first need a technical result.
\begin{lem}\label{lem:technical-q}
For a statistic $S\in\{\inv,\imaj\}$,
and for $\tau\in\SG_{m}$ and $\pi\in\SG_{n}$, we have:
$$
\sum_{\substack{\theta=uv\in\SG_{n+m+1} \\ \std(u)=\int\tau \\ \std(v)=\pi}}
q^{S(\theta)}
=
q^{S(\tau)+S(\pi)}\binom{m+n+1}{m+1}_{q}.
$$
\end{lem}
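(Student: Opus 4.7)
The plan is to reparametrize each summand by the set $U\subseteq\{1,\dots,m+n+1\}$ of values appearing in the prefix $u$. Given such a $U$ of size $m+1$, the word $u$ is forced to be the unique arrangement of $U$ whose pattern is $\int\tau$, and $v$ is forced to be the unique arrangement of $[m+n+1]\setminus U$ matching $\pi$. Hence the left-hand sum has exactly $\binom{m+n+1}{m+1}$ terms, one per subset $U$.

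For $S=\inv$, the statistic splits additively as $\inv(\theta)=\inv(u)+\inv(v)+C(U)$, where $C(U):=\#\{(a,b):a\in U,\ b\notin U,\ a>b\}$ counts cross-inversions. Because appending the maximum creates no inversion, $\inv(u)=\inv(\int\tau)=\inv(\tau)$ and similarly $\inv(v)=\inv(\pi)$, so the identity reduces to $\sum_{U}q^{C(U)}=\binom{m+n+1}{m+1}_q$---the classical $q$-binomial expansion obtained by regarding $U$ as a $(0,1)$-word of length $m+n+1$ with $m+1$ ones and summing over inversions.

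For $S=\imaj$ no comparable additive split exists, since the within-$u$ contribution to $\imaj(\theta)$ depends on the actual values in $U$ and not merely on the pattern $\tau$, so I would instead pass to the inverse and invoke a shuffle theorem. Writing $w:=\theta^{-1}$, a direct check shows that the positions of $w$ taking values in $[1,m+1]$ are exactly $U$, that the subword of $w$ at positions $U$ equals $\int(\tau^{-1})$, and that the subword at positions $[m+n+1]\setminus U$ equals $\pi^{-1}$ shifted upwards by $m+1$. As $U$ varies, $w$ therefore ranges over all shuffles of the two fixed disjoint-valued words $\alpha:=\int(\tau^{-1})$ and $\beta:=\pi^{-1}+(m+1)$, which satisfy $\operatorname{Maj}(\alpha)=\imaj(\tau)$ and $\operatorname{Maj}(\beta)=\imaj(\pi)$. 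Since $\imaj(\theta)=\operatorname{Maj}(\theta^{-1})$, the classical Garsia--Gessel shuffle identity---which states that $\sum_{\gamma}q^{\operatorname{Maj}(\gamma)}$ summed over shuffles of two disjoint-valued words equals $q^{\operatorname{Maj}(\alpha)+\operatorname{Maj}(\beta)}\binom{m+n+1}{m+1}_q$---delivers the result.

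The main obstacle is the $\imaj$ case: the pleasant additive split used for $\inv$ is unavailable, so the proof hinges on the inverse-then-shuffle reformulation together with the nontrivial $q$-shuffle identity for the major index. If a self-contained proof is preferred over citing Garsia--Gessel, I would establish the shuffle identity by induction on $m+n+1$ via the $q$-Pascal recurrence $\binom{N}{k}_q=\binom{N-1}{k-1}_q+q^{k}\binom{N-1}{k}_q$, tracking how the position of the largest letter alters the major index at each step.
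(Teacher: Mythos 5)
Your proposal is correct, and while your $\inv$ case coincides with the paper's (both split $\inv(\theta)=\inv(u)+\inv(v)+C(U)$ with the cross-inversions of the value set $U$ generating the $q$-binomial), your $\imaj$ case takes a genuinely different route. The paper proves the $\imaj$ identity by induction on $m+n$, splitting the sum according to whether the largest letter $m+n+1$ lies in $u$ or in $v$: the first subsum is imported from Equation~36 of \cite{HivNovThib08}, the second is handled by a further case analysis on the letter $m+n$, and the two are recombined via the $q$-Pascal rule $\binom{a+b+1}{a}_{q}=\binom{a+b}{a}_{q}+\binom{a+b}{a-1}_{q}$. You instead pass to $w=\theta^{-1}$, use $\imaj(\theta)=\operatorname{Maj}(w)$ (which is exactly the paper's definition of $\imaj$), and observe that as $U$ varies, $w$ runs bijectively over the shuffles of the fixed words $\alpha=\int(\tau^{-1})$ on values $1,\dots,m+1$ and $\beta=\pi^{-1}+(m+1)$ on values $m+2,\dots,m+n+1$; your identifications are all verifiable (the subword of $w$ on positions $U$ is $(\int\tau)^{-1}=\int(\tau^{-1})$, appending the maximum changes neither $\inv$ nor $\operatorname{Maj}$, and shifting values preserves descents), and since here the letters of $\beta$ all exceed those of $\alpha$, only the most classical form of the Stanley/Garsia--Gessel shuffle theorem for $\operatorname{Maj}$ is needed to produce $q^{\imaj(\tau)+\imaj(\pi)}\binom{m+n+1}{m+1}_{q}$ in one stroke. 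What each approach buys: the paper's induction stays self-contained relative to a single insertion identity from \cite{HivNovThib08} and never leaves its own setup, at the cost of two levels of case bookkeeping; yours makes the $q$-binomial structurally transparent and explains conceptually why $\imaj$ rather than $\operatorname{Maj}$ is the right statistic on $\theta$, at the cost of invoking the (classical but nontrivial) $\operatorname{Maj}$-shuffle identity --- and your fallback sketch, an induction on the position of the largest letter via $q$-Pascal, would reprove that identity by essentially the same mechanism the paper uses directly.
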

\begin{proof}
The case $S=\inv$ is easier to prove. The $q$-binomial consists in choosing a
permutation $\theta$ such that $\theta(1)<\cdots<\theta(m+1)$ and
$\theta(m+2)<\cdots<\theta(m+n+1)$. The term $q^{S(\tau)+S(\pi)}$ comes from the reordering of the $\theta(i)$ in order to have $\std(\theta(1)\cdots\theta(m+1))=\int\tau$ and $\std(\theta(m+2)\cdots\theta(m+n+1))=\pi$.

In order to prove the case $S=\imaj$, we first obtain the following two equations
\begin{equation}\label{eq:imaj_lemma_1}
\sum_{\substack{\theta=uv\in\SG_{n+m+1} \\ n+m+1\in u \\ \std(u)=\int\tau \\ \std(v)=\pi}}
q^{\imaj(\theta)}
=
q^{\imaj(\tau)+\imaj(\pi)+n}\binom{m+n}{n}_{q},
\end{equation}
and
\begin{equation}\label{eq:imaj_lemma_2}
\sum_{\substack{\theta=uv\in\SG_{n+m+1} \\ n+m+1\in v \\ \std(u)=\int\tau \\ \std(v)=\pi}}
q^{\imaj(\theta)}
=
q^{\imaj(\tau)+\imaj(\pi)}\binom{m+n}{n-1}_{q}.
\end{equation}

Equation~\eqref{eq:imaj_lemma_1} is a consequence of Equation~36 in
\cite{HivNovThib08}. We prove Equation~\eqref{eq:imaj_lemma_2} by induction on
$n+m$ and by distinguishing the two cases $n+m\in u$ and $n+m\in v$.
In the first case, by using \eqref{eq:imaj_lemma_1} we have
$$
\sum_{\substack{\theta=uv\in\SG_{n+m+1} \\ n+m+1\in v \\
n+m\in u\\ \std(u)=\int\tau \\ \std(v)=\pi}}
q^{\imaj(\theta)}
=
\sum_{\substack{uv'\in\SG_{n+m+1} \\ 
n+m\in u\\ \std(u)=\int\tau \\ \std(v')=\pi'}}
q^{\imaj(uv')}
=
q^{\imaj(\tau)+\imaj(\pi')+n-1}\binom{m+n-1}{n-1}_{q},
$$
where $\pi'$ is obtained from $\pi$ by deleting the entry $n$.
Let us set $\epsilon$ equal to $1$ if $n$ is to the left of $n-1$ 
in $\pi$ and $0$ otherwise. Then we may write: 
$\imaj(\pi)=\imaj(\pi')+\epsilon(n-1)$, whence
$$
\sum_{\substack{\theta=uv\in\SG_{n+m+1} \\ n+m+1\in v \\
n+m\in u\\ \std(u)=\int\tau \\ \std(v)=\pi}}
q^{\imaj(\theta)}
=
q^{\imaj(\tau)+\imaj(\pi')+(1-\epsilon)(n-1)}\binom{m+n-1}{n-1}_{q}.
$$

In the second case, by using \eqref{eq:imaj_lemma_1} by induction, we get:
\begin{align*}
\sum_{\substack{\theta=uv\in\SG_{n+m+1} \\ n+m+1\in v \\
n+m\in v\\ \std(u)=\int\tau \\ \std(v)=\pi}}
q^{\imaj(\theta)}
&=q^{\epsilon(m+n)}\sum_{\substack{uv'\in\SG_{n+m+1} \\
n+m\in v'\\ \std(u)=\int\tau \\ \std(v')=\pi'}}
q^{\imaj(uv')}\\
&=q^{\imaj(\tau)+\imaj(\pi')+\epsilon(m+n)}\binom{m+n-1}{n-2}_{q}\\
&=q^{\imaj(\tau)+\imaj(\pi)+\epsilon(m+1)}\binom{m+n-1}{n-2}_{q}.
\end{align*}

Then for any of $\epsilon$, we have
$$
q^{\epsilon(m+1)}\binom{m+n-1}{n-2}_{q}
+
q^{(1-\epsilon)(m+1)}\binom{n-1}{n-1}_{q}
=
\binom{m+n}{n-1}_{q}.
$$

Since \eqref{eq:imaj_lemma_1} is trivially true when $\pi$ is the unique permutation of size $1$, the proof by induction is complete.

To conclude the proof of the lemma, we add Equations~\eqref{eq:imaj_lemma_1} and \eqref{eq:imaj_lemma_2}
and use the identity
$$
\binom{a+b+1}{a}_{q}=
\binom{a+b}{a}_{q}
+
\binom{a+b}{a-1}_{q}.$$
\end{proof}

\begin{proof}[Proof of Lemma \ref{lem:com_phi_q}]
Let us set 
$$
u=\Phi_S(T_1)=
q_L^{S(\sigma_L(T_1))}\frac{x^{w_L(T_1)}}{[w_L(T_1)]_{q_L}!}\,
q_R^{S(\sigma_R(T_1))} \frac{y^{w_R(T_1)}}{[w_R(T_1)]_{q_R}!}\,
$$
and
$$
v=\Phi_S(T_2)=
q_L^{S(\sigma_L(T_2))}\frac{x^{w_L(T_2)}}{[w_L(T_2)]_{q_L}!}\,
q_R^{S(\sigma_R(T_1))} \frac{y^{w_R(T_1)}}{[w_R(T_1)]_{q_R}!}\,.
$$

Thanks to Lemma \ref{lem:technical-q}, we have to prove that
\begin{align*}
\Bxy_q(u,v)
&=
q_L^{S(\sigma_L(T_1))+S(\sigma_L(T_2))}
\binom{w_L(T_1)+w_L(T_2)-1}{w_L(T_1)}_{q_L}\\
&\times
q_R^{S(\sigma_R(T_1))+S(\sigma_R(T_2))}
\binom{w_R(T_1)+w_R(T_2)-1}{w_R(T_2)}_{q_R}\\
&\times
\frac
{x^{w_L(T_1)+w_L(T_2)}\,y^{w_R(T_1)+w_R(T_2)}}
{[w_L(T_1)+w_L(T_2)]_{q_L}!\,[w_R(T_1)+w_R(T_2)]_{q_R}!}
.
\end{align*}

This is done by a simple computation, as 
in the proof of Lemma \ref{lem:com_phi}.
\end{proof}

\begin{proof}[Proof of Proposition \ref{prop:qhook-commut}]
The proof is exactly the same as for Proposition \ref{prop:commutation_phi}. We just have to replace $\Phi$
by $\Phi_S$ and $\Bx$ by $\Bx_q$.
\end{proof}

We are now in a position to prove Theorem \ref{thm-q-hook}.
\begin{proof}[Proof of Theorem \ref{thm-q-hook}]
By definition, we have for any NAT $T$: $\Phi_S(T)=\omega_S(T)\times
\frac{x^{w_L(T)}}{[w_L(T)]_{q_L}!}\frac{y^{w_R(T)}}{[w_R(T)]_{q_L}!}$. 
Proposition \ref{prop:qhook-commut} implies that 
for a binary tree $B$, $\Phi_S([B])=\Bx_q(B)$, and
we may compute $\Bx_q(B)$ recursively.
Let us suppose that neither $B_L$ nor $B_R$ is empty.
We then have:
\begin{align*}
\Bx_q(B)
&=\Bxy_q(\Bx_q(B_L),\Bx_q(B_R))\\ 
&=\int_{x,q_L}\int_{y,q_R}
  \partial_{x,q_L}\Bx_q(B_L)(u,v)\cdot
  \partial_{y,q_R}\Bx_q(B_R)(u,v)
  \diff u\;\diff v\\
&=\frac{[w_R(B_L)]_{q_L}[w_L(B_R)]_{q_R}}{[w_L(B)]_{q_L}[w_R(B)]_{q_R}}\Bx_q(B_L)\Bx_q(B_R).
\end{align*}

The same recursive computation as in the proof of Proposition \ref{prop_hook_nat} leads to 
$$
\frac{\sum_{T\in\NAT(B)}\omega_S(T)}{[w_L(B)]_{q_L}![w_R(B)]_{q_R}!}
=
\frac{1}{
  \displaystyle\prod_{U: \text{left child}}[\EL(U)]_{q_L} \cdot 
  \prod_{U: \text{right child}}[\ER(U)]_{q_R}
}
$$
which gives Theorem \ref{thm-q-hook}.
\end{proof}

We conclude this section by an example, using the same tree as in Example \ref{exple_hook}.

\begin{exple}
  
Let $B = \scalebox{0.5}
{ \newcommand{\nodea}{\node[draw,fill,circle] (a) {$$};}
  \newcommand{\nodeb}{\node[draw,fill,circle] (b) {$$};}
  \newcommand{\nodec}{\node[draw,fill,circle] (c) {$$};}
  \newcommand{\noded}{\node[draw,fill,circle] (d) {$$};}
  \newcommand{\nodee}{\node[draw,fill,circle] (e) {$$};}
  \newcommand{\nodef}{\node[draw,fill,circle] (f) {$$};}
  \newcommand{\nodeg}{\node[draw,fill,circle] (g) {$$};}
  \newcommand{\nodeh}{\node[draw,fill,circle] (h) {$$};}
\begin{tikzpicture}[scale=0.5,baseline=(current bounding box.center),
                      every node/.style={inner sep=2pt}]
\matrix[column sep=.15cm, row sep=.15cm,ampersand replacement=\&]{
         \&         \&         \& \nodea  \&         \&         \&         \&         \&         \\
         \& \nodeb  \&         \&         \&         \&         \&         \& \nodee  \&         \\
 \nodec  \&         \& \noded  \&         \&         \& \nodef  \&         \&         \& \nodeh  \\
         \&         \&         \&         \&         \&         \& \nodeg  \&         \&         \\
};
\path[thick] (b) edge (c) edge (d)
    (f) edge (g)
    (e) edge (f)
    (a) edge (b) edge (e)
        (e) edge (h);
\end{tikzpicture}}$.
The $q$- hook formula is given by:
\begin{align*}
   \sum_{T\in\NAT(\B)} \omega_{S}(T) &= 
\frac{ [3]_{q_L}! [4]_{q_R}!}{
  \left([1]_{q_L}[2]_{q_L}[1]_{q_L} \right) \cdot
  \left( [1]_{q_R}[1]_{q_R}[3]_{q_R}[1]_{q_R}\right)
} \\
&=(q_R^3 + q_R^2 + q_R + 1)(q_L^2 +
q_L + 1)(q_R + 1) \,.
\end{align*}
 Expanding this expression, one finds that the coefficient of
$q_R^2q_L$ is $2$.

For the $\imaj$ statistic it corresponds to the two following non-ambiguous
trees which are shown with their associated left and right permutations:
\[
{\newcommand{\nodea}{\node (a) {(\color{red}$4$,\color{blue}$5$)}
;}\newcommand{\nodeb}{\node (b) {\color{red}$3$}
;}\newcommand{\nodec}{\node (c) {\color{red}$2$}
;}\newcommand{\noded}{\node (d) {\color{blue}$2$}
;}\newcommand{\nodee}{\node (e) {\color{blue}$4$}
;}\newcommand{\nodef}{\node (f) {\color{red}$1$}
;}\newcommand{\nodeg}{\node (g) {\color{blue}$3$}
;}\newcommand{\nodeh}{\node (h) {\color{blue}$1$}
;}\begin{tikzpicture}[baseline=(current bounding box.center)]
\matrix[column sep=1.5mm, row sep=1.5mm,ampersand replacement=\&]{
         \&         \&         \& \nodea  \&         \&         \&         \&         \&         \\ 
         \& \nodeb  \&         \&         \&         \&         \&         \& \nodee  \&         \\ 
 \nodec  \&         \& \noded  \&         \&         \& \nodef  \&         \&         \& \nodeh  \\ 
         \&         \&         \&         \&         \&         \& \nodeg  \&         \&         \\
};

\path (b) edge (c) edge (d)
    (f) edge (g)
    (e) edge (f) edge (h)
    (a) edge (b) edge (e);
\end{tikzpicture}} \left((2, 3, 1), (1, 3, 4, 2)\right),\]

\[{ \newcommand{\nodea}{\node (a) {(\color{red}$4$,\color{blue}$5$)}
;}\newcommand{\nodeb}{\node (b) {\color{red}$3$}
;}\newcommand{\nodec}{\node (c) {\color{red}$2$}
;}\newcommand{\noded}{\node (d) {\color{blue}$2$}
;}\newcommand{\nodee}{\node (e) {\color{blue}$4$}
;}\newcommand{\nodef}{\node (f) {\color{red}$1$}
;}\newcommand{\nodeg}{\node (g) {\color{blue}$1$}
;}\newcommand{\nodeh}{\node (h) {\color{blue}$3$}
;}\begin{tikzpicture}[baseline=(current bounding box.center)]
\matrix[column sep=1.5mm, row sep=1.5mm,ampersand replacement=\&]{
         \&         \&         \& \nodea  \&         \&         \&         \&         \&         \\ 
         \& \nodeb  \&         \&         \&         \&         \&         \& \nodee  \&         \\ 
 \nodec  \&         \& \noded  \&         \&         \& \nodef  \&         \&         \& \nodeh  \\ 
         \&         \&         \&         \&         \&         \& \nodeg  \&         \&         \\
};

\path (b) edge (c) edge (d)
    (f) edge (g)
    (e) edge (f) edge (h)
    (a) edge (b) edge (e);
\end{tikzpicture}}
\left((2, 3, 1), (3, 1, 4, 2)\right).\]

For the $\inv$ statistic it corresponds to the two following non-ambiguous
trees which are shown with their associated left and right permutations:
\[
{\newcommand{\nodea}{\node (a) {(\color{red}$4$,\color{blue}$5$)}
;}\newcommand{\nodeb}{\node (b) {\color{red}$3$}
;}\newcommand{\nodec}{\node (c) {\color{red}$1$}
;}\newcommand{\noded}{\node (d) {\color{blue}$3$}
;}\newcommand{\nodee}{\node (e) {\color{blue}$4$}
;}\newcommand{\nodef}{\node (f) {\color{red}$2$}
;}\newcommand{\nodeg}{\node (g) {\color{blue}$1$}
;}\newcommand{\nodeh}{\node (h) {\color{blue}$2$}
;}\begin{tikzpicture}[baseline=(current bounding box.center)]
\matrix[column sep=1.5mm, row sep=1.5mm,ampersand replacement=\&]{
         \&         \&         \& \nodea  \&         \&         \&         \&         \&         \\ 
         \& \nodeb  \&         \&         \&         \&         \&         \& \nodee  \&         \\ 
 \nodec  \&         \& \noded  \&         \&         \& \nodef  \&         \&         \& \nodeh  \\ 
         \&         \&         \&         \&         \&         \& \nodeg  \&         \&         \\
};

\path (b) edge (c) edge (d)
    (f) edge (g)
    (e) edge (f) edge (h)
    (a) edge (b) edge (e);
\end{tikzpicture}} \left((1,3,2), (2,1,4,3)\right),\]

\[{ \newcommand{\nodea}{\node (a) {(\color{red}$4$,\color{blue}$5$)}
;}\newcommand{\nodeb}{\node (b) {\color{red}$3$}
;}\newcommand{\nodec}{\node (c) {\color{red}$1$}
;}\newcommand{\noded}{\node (d) {\color{blue}$2$}
;}\newcommand{\nodee}{\node (e) {\color{blue}$4$}
;}\newcommand{\nodef}{\node (f) {\color{red}$2$}
;}\newcommand{\nodeg}{\node (g) {\color{blue}$3$}
;}\newcommand{\nodeh}{\node (h) {\color{blue}$1$}
;}\begin{tikzpicture}[baseline=(current bounding box.center)]
\matrix[column sep=1.5mm, row sep=1.5mm,ampersand replacement=\&]{
         \&         \&         \& \nodea  \&         \&         \&         \&         \&         \\ 
         \& \nodeb  \&         \&         \&         \&         \&         \& \nodee  \&         \\ 
 \nodec  \&         \& \noded  \&         \&         \& \nodef  \&         \&         \& \nodeh  \\ 
         \&         \&         \&         \&         \&         \& \nodeg  \&         \&         \\
};

\path (b) edge (c) edge (d)
    (f) edge (g)
    (e) edge (f) edge (h)
    (a) edge (b) edge (e);
\end{tikzpicture}}
\left((1,3,2), ( 1,3, 4, 2)\right).\]

\end{exple}

\section{Non-ambiguous trees in higher dimension}
\label{gnat}

In this section we give a generalisation of NATs to higher dimensions.
NATs are defined as binary trees whose vertices are 
embedded in $\mathbb{Z}^2$, and edges are objects of dimension 1 (segments).
Let $d \ge k \ge 1$ be two integers.
In higher dimension, binary trees are replaced by $\binom{d}{k}$-ary trees 
embedded in $\mathbb{Z}^d$ 
and edges are objects of dimension $k$.
As in Section~\ref{differential} we obtain differential equations for these objects.

\subsection{Definitions}
\label{def_natdk}

We call \emph{$(d,k)$-direction} a subset of cardinality $k$ of $\{1, \ldots, d\}$.
The set of $(d,k)$-directions is denoted by $\edir$. 
A $(d,k)$-tuple is a $d$-tuple of $(\mathbb{N}\cup\{\bullet\})^d$, 
in which $k$ entries are integers and $d-k$  are $\bullet$.
For instance,
$
(\bullet, \ 1, \  \bullet, \  5,\  2, \ \bullet, \ \bullet, \ 3, \ \bullet)
$
is a $(9,4)$-tuple.
The direction of a $(d,k)$-tuple $U$ is the set of indices of $U$
corresponding to entries different from $\bullet$.
For instance, the direction of our preceding example is $\{2,4,5,8\}$.

\begin{defi} 
A \emph{$\binom{\d}{\k}$-ary} tree $\M$ is a tree in which the children of a given vertex
are indexed by a $(\d,\k)$-direction.
\end{defi}

A $\binom{\d}{\k}$-ary tree will be represented as an ordered tree where the 
children of a vertex $S$ are drawn from left to right with respect to the 
lexicographic order of their indices.
If a vertex $S$ has no child associated to an index $\dir$, we draw an half 
edge in this direction.
Two examples are drawn in Figure~\ref{fig_gnat}.
As for binary trees, for each $(d,k)$-direction $\dir$ we consider that there is
a $\binom{\d}{\k}$-ary tree of size 0: the \emph{empty $\binom{\d}{\k}$-ary
tree of direction $\dir$} noted $\emptyset_\dir$.

\begin{defi}\label{def_gnat}
A \emph{non-ambiguous tree of dimension $(d,k)$} is a labelled 
\emph{$\binom{d}{k}$-ary} tree such that:
\begin{enumerate}[ref={\thecor.\arabic*}]
\item\label{gnat_dk_tuple} a child of index $\dir$ is labelled with a
    $(d,k)$-tuple of direction $\dir$ and \label{gnat_root} the root is
    labelled with a $(d,d)$-tuple;
\item\label{gnat_growth} for any descendant $U$  of $V$, if the $i$-th
    component of $U$ and $V$ are different from $\bullet$, then the $i$-th
    component of $V$ is strictly greater than the $i$-th component of $U$;
\item\label{gnat_distinct} for each $i \in \i{1}{\d}$, all the $i$th
    components, different from $\bullet$, are pairwise distinct;
\item\label{gnat_interval} the set of $i$th components different from
    $\bullet$ of every vertices in the tree is an interval whose minimum is
    $1$.
\end{enumerate}

The set of non-ambiguous trees of dimensions $(d,k)$ is denoted by $\NATdk$.
\end{defi}

\begin{rque}
The usual NATs are non-ambiguous trees of dimension  $(2,1)$.
\end{rque}

We write \nat for a non-ambiguous tree (of dimensions $(d,k)$).
Figure~\ref{fig_gnat} gives an example of a \nat[3][1] and a \nat[3][2].

\begin{figure}[h]
    \begin{center}
    \resizebox{\textwidth}{!}{
\includegraphics[scale=0.3]{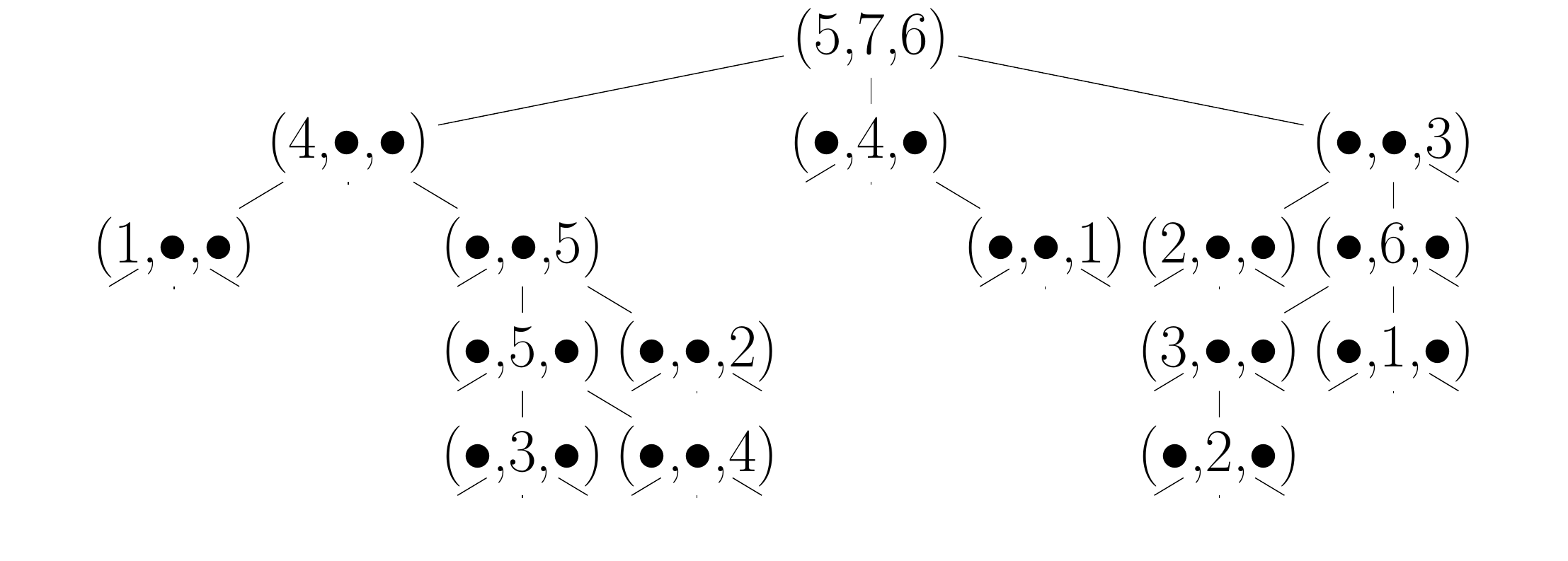}
\includegraphics[scale=0.5]{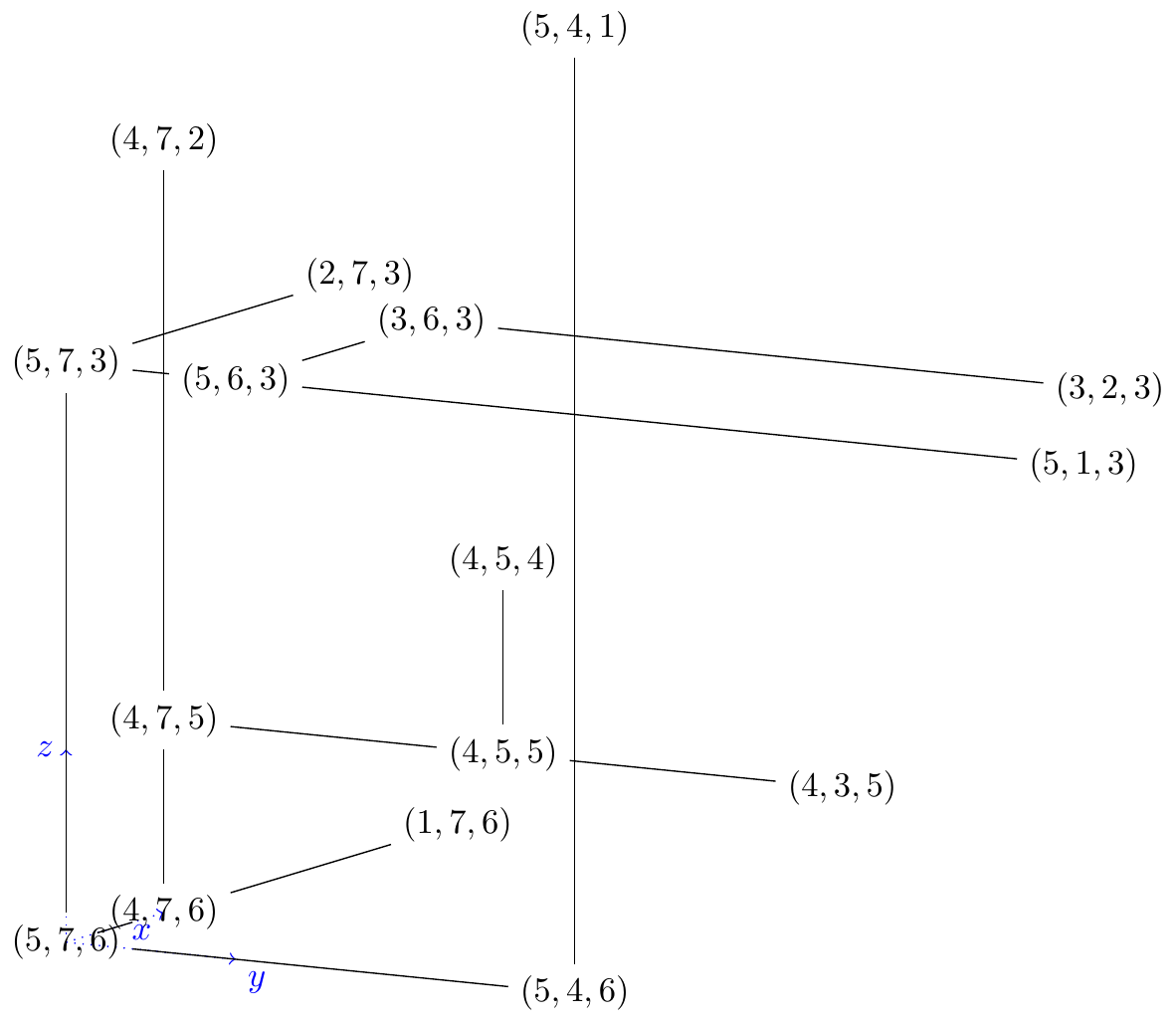}}

        \hspace{1cm}
\includegraphics[scale=0.5]{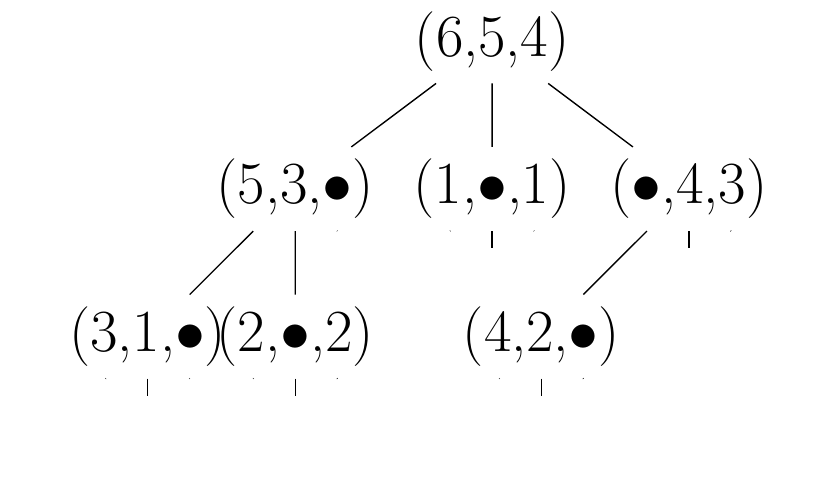}
\includegraphics[scale=0.5]{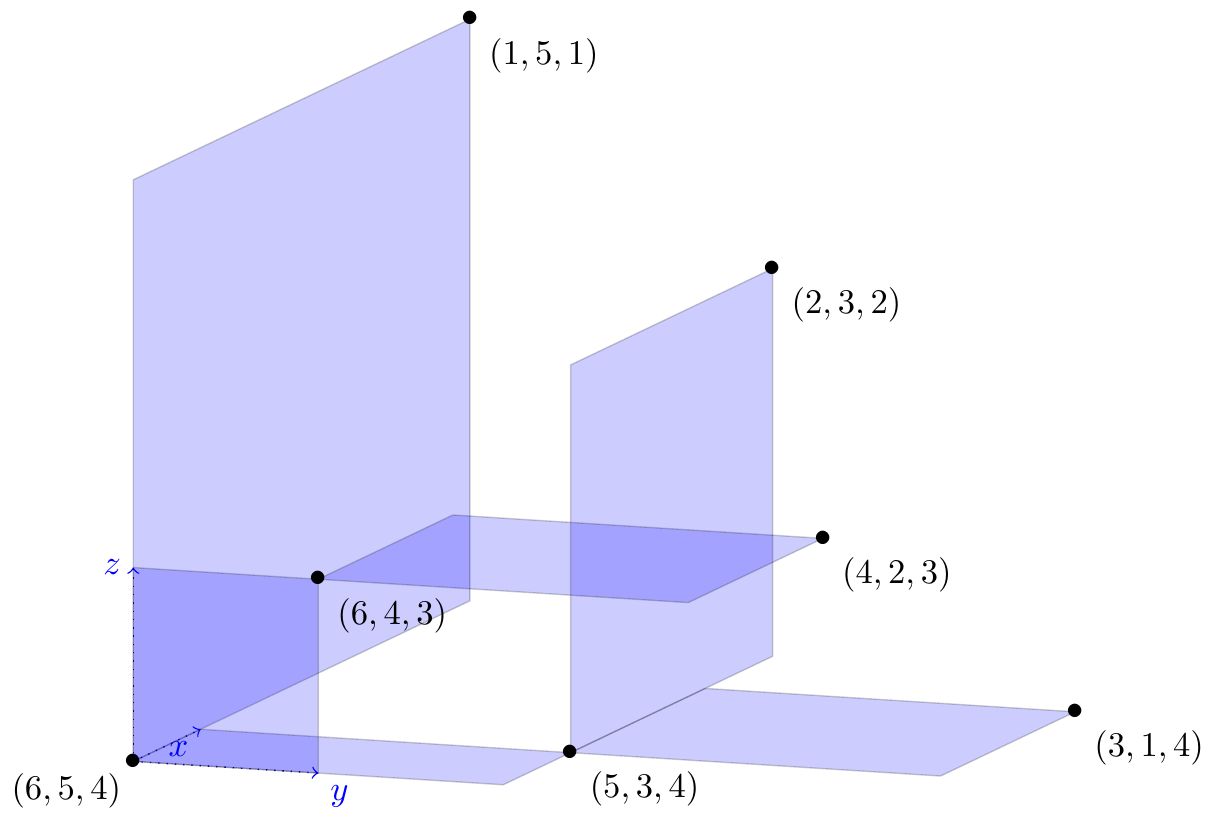}
    \end{center}
    \caption{A NAT of dimension $(3,1)$ and its geometric representation (above) and a NAT of dimension $(3,2)$ and its geometric representation
    (below).} 
    \label{fig_gnat}\label{ex_kdtree}
\end{figure}

\begin{defi}\label{gnat_gs}

The \emph{geometric size} of a \nat is the $d$-tuple of integers 
$(w_1, \ldots, w_d)$ which labels the root of the \nat, it is denoted by $\gs$.
The \emph{$\dir$-size} of a \nat   is the number of vertices in the tree of
direction $\dir$, the set of such vertices is denoted by $\DV$. 
\end{defi}

Proposition~\ref{prop_relation_sizes} gives the relation between the 
geometric size and the $\dir$-size of a non-ambiguous trees.

\begin{prop}
\label{prop_relation_sizes}
Let $\M$ be a $\binom{d}{k}$-ary tree, the root label is constant on elements of
$\NATdk$ of shape $M$ ($\NATdk(\M)$):
$$
\wi_i(M)
:=
\wi_i
=
\sum_{\dir \in \edir \ \mid\ i \in \dir} |\DV(\M)| +1
.
$$
\end{prop}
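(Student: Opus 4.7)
The plan is to read off the root label directly from axioms (\ref{gnat_root})--(\ref{gnat_interval}). Fix a coordinate $i \in \i{1}{d}$ and consider the multiset $S_i(T)$ of all $i$-th components (different from $\bullet$) appearing among the labels of a NAT $T \in \NATdk(M)$. I will show that the root's $i$-th component equals $|S_i(T)|$ and that $|S_i(T)|$ depends only on the shape $M$ (not on the particular labelling), which gives the stated formula.

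First, by condition (\ref{gnat_root}) the root carries a $(d,d)$-tuple, so its $i$-th coordinate is an integer for every $i$. Next, applying condition (\ref{gnat_growth}) to the pair $(\text{root}, V)$ for any other vertex $V$ whose $i$-th component is an integer, that integer is strictly smaller than the root's $i$-th component. Hence the root realises the maximum of $S_i(T)$. Combining condition (\ref{gnat_distinct}) (distinctness) with condition (\ref{gnat_interval}) (the set of $i$-th components is an interval with minimum $1$), the set $S_i(T)$ is exactly $\{1,2,\dots,N_i\}$ where $N_i := |S_i(T)|$. Therefore the root's $i$-th component equals $N_i$.

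It remains to express $N_i$ in terms of the shape $M$. A vertex $U$ of $M$ contributes an integer in position $i$ precisely when $i$ belongs to the direction of $U$. The root has direction $\i{1}{d}$, so it contributes $1$ to $N_i$ for every $i$. Every non-root vertex has direction some $\dir \in \edir$, by (\ref{gnat_dk_tuple}), and thus lies in $\DV(M)$ for exactly one such $\dir$. Summing,
\[
N_i \;=\; 1 + \sum_{\dir \in \edir,\ i \in \dir} |\DV(M)|,
\]
which is exactly the formula $\wi_i(M)$. Since this count depends only on $M$, the root label is constant on $\NATdk(M)$, proving the proposition.

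No serious obstacle is expected: the argument is a bookkeeping exercise combining the four defining conditions. The only subtlety is that the root direction is $\i{1}{d}$, which is not an element of $\edir$ when $k < d$, so its contribution must be tracked separately, accounting for the summand $+1$ in the formula.
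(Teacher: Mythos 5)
Your proof is correct, and since the paper states Proposition \ref{prop_relation_sizes} without any proof, your bookkeeping argument is exactly the intended justification: conditions \ref{gnat_growth}, \ref{gnat_distinct} and \ref{gnat_interval} of Definition \ref{def_gnat} force the integer $i$-th components of a NAT of shape $\M$ to form precisely the set $\{1,\dots,N_i\}$ with the maximum $N_i$ attained at the root, while $N_i$ is determined by the shape alone because a vertex contributes an integer in position $i$ exactly when its direction contains $i$. Your closing remark also correctly isolates the only subtlety --- the root's direction $\llbracket 1,d \rrbracket$ is not an element of $\edir$ when $k<d$, which produces the summand $+1$ --- and for completeness note that in the boundary case $k=d$ the formula likewise requires reading $\DV(\M)$ as the set of non-root vertices indexed by $\dir$ (consistent with the paper's later computation of $\GFNdk[d][d]$), which is what your accounting does.
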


\subsection{Associated differential equations}
As in section~\ref{differential}, we define a weight $\Phi$, a
$\binom{d}{k}$-linear map $\BNAT$ on generalised non-ambiguous trees of dimension
$(d,k)$ and $\binom{d}{k}$-linear map $\Bxy$ on multivariate power series such
that ``$\Phi\circ\BNAT=\Bxy\circ\Phi$''. From this identity, we obtain a hook formula
for the number of generalised non-ambiguous trees with a fixed underlying tree
and a differential equation satisfied by the generating function of
generalised non-ambiguous trees of dimension $(d,k)$.

Let $\emptyset_\dir$ be the empty $\binom{d}{k}$-ary tree of direction $\dir$, by convention
\[
    \wi_i(\emptyset_\dir)=
    \left \{
        \begin{array}{ll}
            0   &\text{ if } i\in\dir,\\
            1   &\text{ else.}
        \end{array}
    \right.
\]

The weight of a generalised non-ambiguous tree $\N$ is given by
\[
  \Phi(\N) := 
  \prod_{i=1}^d
  \frac{
    x_i^{\wi_i(\N)}
  }{
    \wi_i(\N) !
  }.
\]

Let $\QQ\NATdk$ be the $\QQ$-vector space generated by generalised non-ambiguous
trees of dimension $(d,k)$, and $\QQ\NATdkp$ its subspace generated by non-empty
generalised non-ambiguous trees and $\emptyset_\dir$. Let $\BNAT$ be the
multilinear map
\[
    \BNAT:
    \displaystyle\prod_{\dir\in\edir}\mathbb{Q}\NATdkp
    \rightarrow
    \mathbb{Q}\NATdk
\]
such that 
\(
    \BNAT((NAT_\dir)_{\dir\in\edir})
\)
is equal to the formal sum of all the generalised non-ambiguous trees such that
the root's child of direction $\pi$ is $NAT_\dir$.
To define $\Bxy$, we need the following notations
\[
  \partial_\dir:=\partial_{i_1}\cdots\partial_{i_k}
  ,\hspace{.5cm}
  \int_{\dir}:=
  \int_0^{x_{i_1}}
  \cdots
  \int_0^{x_{i_k}}
  \hspace{.5cm}
  \text{and}
  \hspace{.5cm}
  \d x_\dir:=\d x_{i_1}\cdots\d x_{i_k}.
\]
with $\dir=\{i_1,\ldots,i_k\}$. We denote by $\Bxy$ the $\binom{d}{k}$-linear map
\[
    \QQ\llbracket x_1,\ldots,x_d \rrbracket^{\edir}
    \rightarrow
    \QQ\llbracket x_1,\ldots,x_d \rrbracket
\]
such that
\[
    \Bxy((f_\dir)_{\dir\in\edir}):=
    \mathlarger{\int}_{\llbracket 1,d \rrbracket}
    \left(
        \prod_{\dir\in\edir}
        \partial_{\llbracket 1,d \rrbracket\setminus\dir}(f_\dir)
    \right)
    \;\d x_{\llbracket 1,d \rrbracket}.
\]

Similarly to the dimension $(2,1)$, by recursion on $\binom{d}{k}$-ary trees, we prove that
\begin{equation}\label{eq:commutative_diag}
  \Bxy(\Phi(NAT_\dir)_{\dir\in\edir})
  =
  \Phi(\BNAT((NAT_\dir)_{\dir\in\edir})).
\end{equation}

As for non-ambiguous trees (Proposition \ref{prop_hook_nat}), there is a hook
formula for the number of non-ambiguous trees of dimension $(d,k)$ with fixed
underlying tree. Let $\M$ be a $\binom{d}{k}$-ary tree, for each vertex $U$ we
denote by $\E_i(U)$ the number of vertices, of the subtree whose root is $U$
(itself included in the count), whose direction contains $i$. Then, by
recursively using the previous equation we obtain that
$$
|\NATdk(M)|
=
\frac{
    \displaystyle
    \prod_{i=1}^d \left( \wi_i( \M ) - 1 \right) !
}{
    \displaystyle
    \prod_{
        \substack{
            U: \text{ child of direction}\\ \text{ containing } i
        }
    }{
        \E_i(U)
    }
}
.
$$

Let $\LGFNdk(x_1,\ldots,x_d)$ be the exponential generating function of
non-empty generalised non-ambiguous trees
\[
    \LGFNdk(x_1,\ldots,x_d)
    := 
    \sum_{\N \in \NATdk^*}\Phi(\N).
\]
and $\GFNdk(x_1,\ldots,x_d)$ its derivative
\[
  \GFNdk(x_1,\ldots,x_d)
  := 
  \partial_{\{1,\ldots,d\}}\LGFNdk(x_1,\ldots,x_d).
\]

There is a $(d,k)$-dimensional analogue of the fixed point differential
Equation~\ref{equ_gfn}. Similarly to the notations at the beginning of the
section, we define $x_{\{i_1,\ldots,i_k\}}$ as the product $x_{i_1}\cdots
x_{i_k}$, in particular, $\Phi(\emptyset_\dir) = x_{\llbracket 1,d \rrbracket\setminus\dir}$.
\begin{prop}
\label{prop_equ_diff_natdk}
The exponential generating function $\LGFNdk$ of generalised non-ambiguous trees
satisfies the following differential equation
\begin{equation}
  \label{equ_lgfndk}
  \LGFNdk
  =
  \sum_{\dir\in\edir}x_{\llbracket 1,d \rrbracket\setminus\dir}
  +
  \int_{\llbracket 1,d \rrbracket}
  \prod_{\dir\in\edir}
  \partial_{\llbracket 1,d \rrbracket\setminus\dir}(\LGFNdk)
  \;\d x_{\llbracket 1,d \rrbracket},\\
\end{equation}
and its derivative $\GFNdk$ satisfies
\begin{equation}
\label{equ_gfndk}
\GFNdk
=
\prod_{\dir \in \edir} \left( 1 + \int_{\dir} \GFNdk  \right).
\end{equation}
\end{prop}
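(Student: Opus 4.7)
The plan is to adapt the two-step strategy used in the dimension $(2,1)$ case (Proposition~\ref{prop_equ_diff_nat}): first establish the fixed-point equation~\eqref{equ_lgfndk} for $\LGFNdk$ via a recursive decomposition of non-ambiguous trees combined with the commutation identity~\eqref{eq:commutative_diag}, then deduce the multiplicative equation~\eqref{equ_gfndk} for $\GFNdk = \partial_{\llbracket 1,d \rrbracket}\LGFNdk$ by partial differentiation.

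For~\eqref{equ_lgfndk}, I would split the formal sum defining $\LGFNdk$ into the empty trees $\emptyset_\dir$, $\dir \in \edir$ (contributing $\sum_{\dir} \Phi(\emptyset_\dir) = \sum_{\dir} x_{\llbracket 1,d \rrbracket \setminus \dir}$), and the non-empty NATs. Each non-empty NAT is uniquely described by its tuple of children-subtrees $(\N_\dir)_{\dir \in \edir} \in \prod_{\dir} \NATdkp$, and hence is recovered as $\BNAT((\N_\dir)_\dir)$. Summing over all tuples, applying $\Phi$ and using~\eqref{eq:commutative_diag} with the multilinearity of $\Bxy$, one obtains
\[
\LGFNdk - \sum_{\dir \in \edir} x_{\llbracket 1,d \rrbracket \setminus \dir} = \Bxy\!\left(\left(\LGFNdk - \sum_{\dir' \ne \dir} x_{\llbracket 1,d \rrbracket \setminus \dir'}\right)_{\dir \in \edir}\right),
\]
since the $\dir$-th slot carries the formal sum $\sum_{\N \in \NATdkp} \Phi(\N) = \LGFNdk - \sum_{\dir' \ne \dir} x_{\llbracket 1,d \rrbracket \setminus \dir'}$. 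The key observation is that $\partial_{\llbracket 1,d \rrbracket \setminus \dir}(x_{\llbracket 1,d \rrbracket \setminus \dir'}) = \delta_{\dir,\dir'}$, so inside each slot of $\Bxy$ one may freely replace the argument by $\LGFNdk$ without changing the integrand. This yields~\eqref{equ_lgfndk}.

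For~\eqref{equ_gfndk}, I would apply $\partial_{\llbracket 1,d \rrbracket}$ to both sides of~\eqref{equ_lgfndk}. The linear sum vanishes because for each $\dir$ the full derivative involves some $\partial_{x_i}$ with $i \in \dir$, while $x_{\llbracket 1,d \rrbracket \setminus \dir}$ does not contain $x_i$; the fundamental theorem of calculus collapses the integral to its integrand, giving $\GFNdk = \prod_{\dir} \partial_{\llbracket 1,d \rrbracket \setminus \dir}(\LGFNdk)$. Each factor is then computed by differentiating~\eqref{equ_lgfndk} by $\partial_{\llbracket 1,d \rrbracket \setminus \dir}$: the linear part contributes $1$ (only $\dir' = \dir$ survives) and the integral, with $d-k$ of its $d$ integrations eliminated, contributes $\int_{\dir} \GFNdk\, \diff x_\dir$. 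Hence $\partial_{\llbracket 1,d \rrbracket \setminus \dir}(\LGFNdk) = 1 + \int_{\dir} \GFNdk\, \diff x_\dir$, which together with the previous identity gives~\eqref{equ_gfndk}.

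The main obstacle is the careful formal-sum bookkeeping around the empty-tree contributions under the multilinear $\Bxy$. In dimension $(2,1)$ the direct vanishing $\Bxy(x, \cdot) = \Bxy(\cdot, y) = 0$ made the simplification $\Bxy(\LGFN - x, \LGFN - y) = \Bxy(\LGFN, \LGFN)$ transparent; here the analogous clean-up requires the refined identity $\partial_{\llbracket 1,d \rrbracket \setminus \dir}(x_{\llbracket 1,d \rrbracket \setminus \dir'}) = \delta_{\dir,\dir'}$, applied inside each of the $\binom{d}{k}$ partial-derivative factors of $\Bxy$.
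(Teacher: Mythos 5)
Your proposal is correct and takes essentially the same approach as the paper: you sum the commutation identity \eqref{eq:commutative_diag} over all tuples of root-subtrees in $\prod_{\dir\in\edir}\QQ\NATdkp$ and eliminate the empty-tree terms via $\partial_{\llbracket 1,d \rrbracket\setminus\dir}\bigl(x_{\llbracket 1,d \rrbracket\setminus\dir'}\bigr)=0$ for $\dir\neq\dir'$, exactly as in the paper's proof of \eqref{equ_lgfndk}. Your derivation of \eqref{equ_gfndk} by applying $\partial_{\llbracket 1,d \rrbracket}$ and computing $\partial_{\llbracket 1,d \rrbracket\setminus\dir}\LGFNdk = 1 + \int_{\dir}\GFNdk$ merely spells out the steps the paper compresses into the single identity $\GFNdk=\partial_{\{1,\ldots,d\}}\LGFNdk$, mirroring the dimension-$(2,1)$ argument of Proposition~\ref{prop_equ_diff_nat}.
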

\begin{proof}
  The summation of Equation~\ref{eq:commutative_diag} over
  $\displaystyle\prod_{\dir\in\edir}\mathbb{Q}\NATdkp$ gives us
\[
  \LGFNdk
  =
  \sum_{\dir\in\edir}x_{\llbracket 1,d \rrbracket\setminus\dir}
  +
  \int_{\llbracket 1,d \rrbracket}
  \prod_{\dir\in\edir}
  \partial_{\llbracket 1,d \rrbracket\setminus\dir}(\LGFNdk-\sum_{\dir'\in(\edir\setminus\dir)}x_{
    \llbracket 1,d \rrbracket\setminus\dir'})
  \;\d x_\dir.\\
\]
Since $\partial_{\llbracket 1,d \rrbracket\setminus\dir} (x_{\llbracket 1,d
\rrbracket\setminus\dir'})= 0$ for $\dir\neq\dir'$, the
Equation~\ref{equ_lgfndk} is proven. Equation~\ref{equ_gfndk} is obtained by
using the identity
\[
  \GFNdk(x_1,\ldots,x_d)
  = 
  \partial_{\{1,\ldots,d\}}\LGFNdk(x_1,\ldots,x_d).
\]

\end{proof}

In the generic case, we are not able to solve those differential equations.
We know that setting a variable $x_d$ to $0$ gives the generating function 
of NATs of lower dimension.
\begin{prop}
Let $d > k \ge 1$, then
$
\left.
    \GFN_{d,k}
\right|_{x_d=0}
=
\GFN_{d-1,k}
.
$
\end{prop}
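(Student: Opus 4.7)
The plan is to derive the identity directly from the fixed-point differential equation
\[
\GFNdk = \prod_{\dir \in \edir[d][k]} \left(1 + \int_\dir \GFNdk\right)
\]
given by Equation~\eqref{equ_gfndk}, by specialising both sides to $x_d = 0$. The key observation is that the set $\edir[d][k]$ of $(d,k)$-directions splits as the disjoint union of those avoiding $d$ (which is exactly $\edir[d-1][k]$) and those containing $d$.

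First I would handle each factor on the right-hand side. For any direction $\dir$ containing $d$, the iterated integral $\int_\dir$ begins with $\int_0^{x_d}(\cdots)\,dx_d$, so $\int_\dir F$ has no constant term in $x_d$; therefore $(\int_\dir F)\big|_{x_d=0} = 0$ for every formal power series $F$, and the corresponding factor becomes $1$. For $\dir \in \edir[d-1][k]$, the operator $\int_\dir$ involves no $x_d$ and commutes with the substitution $x_d \mapsto 0$. Combining these observations transforms the fixed-point equation into
\[
\GFNdk\big|_{x_d=0} = \prod_{\dir \in \edir[d-1][k]} \left(1 + \int_\dir \GFNdk\big|_{x_d=0}\right),
\]
which is precisely the fixed-point equation satisfied by $\GFN_{d-1,k}$.

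Finally, I would conclude by uniqueness of the formal power series solution. Each iterated integral $\int_\dir$ strictly raises the total degree (by $k \ge 1$), so the equation $F = \prod_\dir(1 + \int_\dir F)$ determines the coefficients of $F \in \QQ[[x_1,\ldots,x_{d-1}]]$ recursively in increasing total degree. The constant term of the right-hand side is automatically $1$, so $F$ is forced to have constant term $1$, and all higher coefficients are then uniquely determined. Both $\GFNdk\big|_{x_d=0}$ and $\GFN_{d-1,k}$ are solutions of this equation, hence they coincide.

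The only delicate step is justifying the commutation between restriction and integration and the uniqueness of the fixed-point solution, both of which are routine for formal power series. A more conceptual alternative (not pursued) would be to observe that setting $x_d = 0$ in $\GFNdk$ selects the contribution of $(d,k)$-NATs whose root has $d$-th coordinate equal to $1$; by Condition~(2) of Definition~\ref{def_gnat}, such trees have no other vertex carrying coordinate $d$, so erasing the $d$-th coordinate yields a weight-preserving bijection onto $(d-1,k)$-NATs.
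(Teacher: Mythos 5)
Your proof is correct, and in fact the paper offers no proof at all for this proposition: it is stated as an immediate observation (``setting a variable $x_d$ to $0$ gives the generating function of NATs of lower dimension''), with the intended justification almost certainly being the combinatorial remark you relegate to your final sentence rather than the analytic route you develop. Your main argument via the fixed-point equation \eqref{equ_gfndk} is sound and complete: the split of $\edir$ into directions containing $d$, whose factors restrict to $1$ because $\int_0^{x_d}$ kills the constant term in $x_d$, and directions avoiding $d$, which form exactly $\edir[d-1][k]$ precisely because $d>k$; the commutation of $\int_\dir$ with the substitution $x_d\mapsto 0$ for such $\dir$; and the uniqueness of the formal power series fixed point, which holds since each $\int_\dir$ raises total degree by $k\ge 1$, so coefficients are determined by induction on degree with constant term forced to be $1$. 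Two details in your bijective sketch deserve tightening if you promote it to a proof: first, the fact that setting $x_d=0$ selects NATs with $\wi_d=1$ (root $d$-th coordinate equal to $1$) uses that $\GFNdk=\partial_{\{1,\ldots,d\}}\LGFNdk$, so the exponent of $x_d$ is $\wi_d-1$ rather than $\wi_d$; second, ruling out non-root vertices with a non-$\bullet$ $d$-th component needs condition (4) of Definition~\ref{def_gnat} (the $d$-th components form an interval with minimum $1$, hence are at least $1$) combined with the strict-decrease condition (2) you cite, or, more directly, Proposition~\ref{prop_relation_sizes}, which says $\wi_d=1$ if and only if no vertex has a direction containing $d$, after which erasing the $d$-th coordinate is a weight-preserving bijection onto $\NATdk[d-1][k]$. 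In exchange for being longer, your analytic argument has the virtue of relying only on Proposition~\ref{prop_equ_diff_natdk} and generic power series manipulations, whereas the combinatorial argument explains the identity at the level of the objects themselves.
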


For some specific values of $d$ and $k$ we have (at least partial) results.
\begin{prop}
\label{prop_sol_d_d-1}
Let $k=d-1$, if we know a particular solution $s( x_1, \ldots, x_d)$ for
$$
\partial_1 \ldots \partial_d \LGFNdk[d][d-1]
=
\partial_{1} \LGFNdk[d][d-1] \times \ldots \times \partial_{d} \LGFNdk[d][d-1] 
$$
then, for any function $s_1(x_1), \ldots, s_d(x_d)$, the function
$s( s_1(x_1), \ldots, s_d(x_d))$ is also a solution.
\end{prop}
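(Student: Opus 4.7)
The plan is to verify the equation directly by chain rule, exploiting the crucial structural feature that each new argument $s_i(x_i)$ depends on a single variable $x_i$, with distinct $i$'s. Set $F(x_1,\ldots,x_d) := s(s_1(x_1),\ldots,s_d(x_d))$, and write $\tilde{s}$ for the function $s$ evaluated at the point $(s_1(x_1),\ldots,s_d(x_d))$, with a similar convention for its partials. I want to show that the PDE
\[
\partial_1\cdots\partial_d F \;=\; \partial_1 F\cdot \partial_2 F\cdots \partial_d F
\]
reduces to the same identity for $s$, which holds by assumption.

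First, I would compute each first partial: since $s_j(x_j)$ depends on $x_j$ only, and on no other $x_i$, the chain rule gives
\[
\partial_i F(x_1,\ldots,x_d) \;=\; s_i'(x_i)\cdot (\widetilde{\partial_i s})(x_1,\ldots,x_d).
\]
Next, because differentiating with respect to $x_i$ only activates the factor $s_i'(x_i)$ together with a further partial of $s$ with respect to its $i$-th slot (each $x_i$ appearing in a single slot), an easy induction on the number of derivations shows
\[
\partial_1\cdots\partial_d F \;=\; \Bigl(\prod_{i=1}^d s_i'(x_i)\Bigr)\cdot \bigl(\widetilde{\partial_1\cdots\partial_d\,s}\bigr).
\]
By hypothesis on $s$, the second factor equals $\prod_i \widetilde{\partial_i s}$. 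Regrouping the product $\prod_i s_i'(x_i)\cdot \prod_i \widetilde{\partial_i s}$ as $\prod_i \bigl(s_i'(x_i)\cdot \widetilde{\partial_i s}\bigr)$, one obtains exactly $\prod_i \partial_i F$, which is the claimed identity.

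The computation is elementary and there is no real obstacle; the only point requiring attention is to justify cleanly the factorisation of the iterated partial $\partial_1\cdots\partial_d F$, and here the essential input is that the change of variables is \emph{diagonal}, i.e.\ $s_j$ depends only on $x_j$. If any $s_j$ depended on more than one variable, extra cross-terms would appear and the argument would break down. This explains why the proposition is stated with one-variable functions $s_i$.
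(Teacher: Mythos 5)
Your verification is correct: since the substitution is diagonal, the chain rule gives $\partial_i F = s_i'(x_i)\,(\partial_i s)(s_1(x_1),\ldots,s_d(x_d))$ and, by induction, $\partial_1\cdots\partial_d F = \bigl(\prod_{i=1}^d s_i'(x_i)\bigr)\,(\partial_1\cdots\partial_d s)(s_1(x_1),\ldots,s_d(x_d))$, after which the hypothesis on $s$ and regrouping of the product yield $\prod_{i=1}^d \partial_i F$ exactly as you argue (in the formal power series setting one should additionally require $s_i(0)=0$ so that the composition is defined, just as the paper requires a zero constant term for $\varphi$ in the two-variable case). The paper states Proposition~\ref{prop_sol_d_d-1} without proof, and its two-variable analogue is likewise asserted without justification inside the proof of Proposition~\ref{prop:gf_expressions}, so your chain-rule computation supplies precisely the intended elementary argument, including the correct observation that diagonality of the change of variables is what rules out cross-terms.
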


\begin{prop}
\label{prop_sol_d_1}
Some non trivial rational functions are solutions of
$$
\partial_1 \ldots \partial_d \LGFNdk[d][1]
=
\prod_{\dir \in \edir[d][d-1]}
    \partial_{\dir} \LGFNdk[d][1]
.
$$
\end{prop}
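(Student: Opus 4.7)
The plan is to exhibit explicit nontrivial rational functions satisfying the PDE
$\partial_1 \cdots \partial_d f = \prod_{\dir \in \edir[d][d-1]} \partial_{\dir} f$
and to verify the identity directly. Since $\edir[d][d-1]$ consists of the $d$ subsets of size $d-1$ of $\{1,\dots,d\}$, the right-hand side rewrites as $\prod_{i=1}^{d} \partial_{\widehat{i}} f$ with $\partial_{\widehat{i}} := \prod_{j \neq i} \partial_j$. The PDE is invariant under the natural $\mathfrak{S}_d$-action permuting the coordinates, so it is natural to search among symmetric rational functions. Moreover, any $f$ that is independent of at least one $x_i$, or that is additively separable (a sum $\sum_j g_j(x_j)$), makes every mixed partial of the appropriate order vanish and hence satisfies the PDE trivially; one must therefore insist that $f$ depend on all $d$ variables with at least one mixed partial nonzero.

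I would first rule out polynomial solutions by a degree-count argument: if $f$ is a polynomial of degree $n$ depending genuinely on all variables, then the left-hand side has degree $n-d$ while the right-hand side has degree $d(n-d+1)$; equating these forces $n = d$, after which inspecting the leading-term equation shows that the would-be top monomial $x_1\cdots x_d$ fails to satisfy the PDE. Hence any nontrivial solution must have a nontrivial denominator. This motivates an ansatz of the form $f = A(x_1,\dots,x_d)/B(x_1,\dots,x_d)^{\alpha}$ with $A, B$ symmetric polynomials of small degree (for instance $B = 1 + c_1 e_1 + \cdots + c_d e_d$ with $e_i$ the elementary symmetric polynomials in the $x_j$, and $A$ a monomial or a symmetric polynomial) and $\alpha$ a rational parameter. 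Substituting, each mixed partial of $f$ is a rational function whose denominator is a power of $B$, so clearing denominators reduces the PDE to a polynomial identity in $A, B$ and their derivatives; equating the total powers of $B$ on both sides fixes $\alpha$, after which matching numerator coefficients yields an algebraic system in the remaining unknowns.

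The main obstacle is that many natural symmetric reductions lead only to algebraic but not rational solutions. For example, the ansatz $f = g(x_1 + \cdots + x_d)$ reduces the PDE to the ODE $g^{(d)} = (g^{(d-1)})^d$, whose general solution involves fractional powers; and the ansatz $f = g(x_1 \cdots x_d)$ yields a nonlinear one-variable ODE whose only closed-form solutions are again algebraic. A genuinely multivariate ansatz with enough free parameters is thus required in order for the algebraic system above to admit a nonzero rational solution. Once an appropriate such ansatz is found and its parameters are solved, the verification of the PDE is a direct (if tedious) computation.
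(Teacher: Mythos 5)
Your proposal does not prove the proposition: the statement asserts the \emph{existence} of nontrivial rational solutions, and your text stops exactly where that existence would have to be established --- ``once an appropriate such ansatz is found and its parameters are solved'' is the missing mathematical content, not a deferrable verification. Everything preceding it (the symmetry of the equation, the exclusion of trivial and additively separable solutions, the generic $A/B^{\alpha}$ scheme) is preparation that never narrows the search enough to guarantee the final algebraic system has a nonzero rational solution. A secondary but concrete error: your degree count is wrong. If $f$ is a polynomial of degree $n$, the left side has degree at most $n-d$ and the right side at most $d(n-d+1)$; equating gives $n=\frac{d(d-2)}{d-1}$, not $n=d$. Since this is never an integer for $d\ge 3$, the corrected computation would actually serve your purpose better (it immediately rules out polynomial solutions with matching top degrees), but the step as written fails.

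It is worth comparing with the paper's own (admittedly sketchy) proof, because it goes through the very equation you dismissed. Setting $F_{(i)} = \partial_{\{1,\dots,d\}\setminus\{i\}} f$, the PDE says $\partial_i F_{(i)} = \prod_{j=1}^{d} F_{(j)}$ for every $i$; multiplying these $d$ identities yields $\prod_{i=1}^{d}\partial_i F_{(i)} = \prod_{i=1}^{d} F_{(i)}^{d}$, and the paper obtains particular solutions by identifying termwise $\partial_i F_{(i)} = F_{(i)}^{d}$ and solving these one-variable ODEs. For this identification to solve the original equation the $F_{(i)}$ must essentially coincide, which is your symmetric reduction: with $f=g(x_1+\cdots+x_d)$ and $h=g^{(d-1)}$ one gets exactly your ODE $h'=h^{d}$, i.e.\ $h=\left((1-d)(x_1+\cdots+x_d)+c\right)^{1/(1-d)}$. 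These are rational only for $d=2$, where one recovers $f_x=f_y=-1/(x+y+c)$ as in the proof of Proposition~\ref{prop:gf_expressions}; for $d\ge 3$ they involve the fractional powers you observed. So your ``main obstacle'' paragraph in fact correctly identifies a weakness in the paper's sketch (its claim of rationality is not substantiated by the displayed identification beyond $d=2$), but you neither turn this into a precise critique nor supply an alternative construction that produces rational solutions; the proposal therefore has a genuine gap precisely at the claim it was meant to establish.
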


\begin{proof}[Proof (sketch)]
We generalise the first part of the proof of
Proposition~\ref{prop:gf_expressions}.
We  define
$
\LGFN_{(i)} = \partial_{\dir} \LGFNdk[d][1]
$
where 
$i \in \i{1}{d}$
and
$\dir = \i{1}{d} \setminus \{i\}$.
We get the relation
$
\partial_i \LGFN_{(i)}
=
\prod_{j=1}^d \LGFN_{(j)}
$
and then
$
\prod_{i=1}^d
\partial_i \LGFN_{(i)}
=
\prod_{i=1}^d \LGFN_{(i)}^d
.
$
To obtain a particular solution, we just need to identify, in the previous 
equation, the term $\partial_i \LGFN_{(i)}$ to the term $\LGFN_{(i)}^d.$
We thus obtain some non trivial solutions for our equation,
which are rational functions.
\end{proof}

Since dimension $(2,1)$ is the unique case where Proposition~\ref{prop_sol_d_d-1}
and Proposition~\ref{prop_sol_d_1} can be applied at the same time,
and the computation of $\GFNdk[\d][\d]$ is straightforward,
we have the following proposition.
\begin{prop}
We have the closed formulas:
$$
\GFNdk[2][1] = \GFN \qquad\text{and}\qquad
\GFNdk[d][d]=\sum_{n \ge 0} \frac{(x_1\cdot \ldots \cdot x_d)^n}{(n!)^d}.
$$
\end{prop}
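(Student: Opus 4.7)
The plan is to handle the two identities separately; both follow quickly from material already established in the paper.

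For $\GFNdk[2][1] = \GFN$: by the remark following Definition~\ref{def_gnat}, the objects counted by $\NATdk[2][1]$ coincide with the usual NATs of Section~\ref{nat}, and Definition~\ref{gnat_gs} specialises to $(w_1,w_2) = (w_L,w_R)$. Thus the weight $\Phi$ used here matches the weight defining $\LGFN$ in Section~\ref{differential}, so $\LGFNdk[2][1] = \LGFN$ and $\GFNdk[2][1] = \partial_1 \partial_2 \LGFNdk[2][1] = \partial_x \partial_y \LGFN = \GFN$. The closed form is then Proposition~\ref{prop:gf_expressions}.

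For the second identity, the key observation is that when $k=d$ the set $\edir[d][d]$ has the single element $\dir=\{1,\ldots,d\}$, so a $\binom{d}{d}$-ary tree has at most one child per vertex and is therefore a chain. My plan is to check by direct inspection of Definition~\ref{def_gnat} that a chain on $n+1$ vertices admits a unique non-ambiguous labelling: every vertex must carry a genuine $d$-tuple of integers, and for each coordinate $i$ the strict-decrease, distinctness and interval conditions together force the $i$-th coordinates from root to leaf to be $n+1, n, \ldots, 1$. The unique NAT of dimension $(d,d)$ with $n+1$ vertices therefore has root label $(n+1,\ldots,n+1)$ and weight
\[
\Phi(\N) = \prod_{i=1}^d \frac{x_i^{n+1}}{(n+1)!} = \frac{(x_1\cdots x_d)^{n+1}}{((n+1)!)^d}.
\]
Summing over $n\geq 0$ yields $\LGFNdk[d][d] = \sum_{n\geq 1} \frac{(x_1\cdots x_d)^n}{(n!)^d}$, and applying $\partial_1\cdots\partial_d$ gives the stated formula.

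As a cross-check, the second identity can also be read straight off Proposition~\ref{prop_equ_diff_natdk}: for $k=d$ the product in Equation~\eqref{equ_gfndk} collapses to the single factor $\GFNdk[d][d] = 1 + \int_{\{1,\ldots,d\}} \GFNdk[d][d]$. Expanding $\GFNdk[d][d] = \sum_\alpha c_\alpha x^\alpha$ and matching coefficients forces $c_\alpha = 0$ unless $\alpha_1 = \cdots = \alpha_d$, and the recursion $c_{n,\ldots,n} = c_{n-1,\ldots,n-1}/n^d$ with $c_0 = 1$ yields $c_{n,\ldots,n} = 1/(n!)^d$. There is no substantive obstacle; the only mildly delicate point is noticing that the interval and strict-monotonicity conditions together eliminate all labelling freedom on a chain, which makes both the combinatorial enumeration and the coefficient extraction trivial.
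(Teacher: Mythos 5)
Your proof is correct and takes essentially the same route the paper intends: the paper gives only the one-line justification that dimension $(2,1)$ coincides with ordinary NATs (so the closed form is Proposition~\ref{prop:gf_expressions}) and that the computation of $\GFNdk[d][d]$ is ``straightforward,'' and you supply exactly those details --- the chain structure of $\binom{d}{d}$-ary trees with forced labellings, confirmed by the collapsed fixed-point equation $\GFNdk[d][d]=1+\int_{\{1,\ldots,d\}}\GFNdk[d][d]$. One harmless nit: since $\LGFNdk$ is defined over non-empty trees, strictly $\LGFNdk[2][1]=\LGFN-x-y$ rather than $\LGFN$, but $\partial_x\partial_y$ annihilates $x+y$, so your conclusion $\GFNdk[2][1]=\GFN$ is unaffected.
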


We see $\GFNdk[d][d]$ as is a kind of generalised Bessel
function because
\[
    \GFNdk[2][2](x/2, -x/2) = J_0(x)
\]
where $J_\alpha$ is the classical Bessel function. This supports our feeling
that the general case leads to serious difficulties.

\subsection{Geometric interpretation} 
\label{geom_gnat}

As for non-ambiguous trees, we can give a geometric definition
of non-ambiguous trees of dimensions $(d,k)$ as follows.
We denote by $(e_1,\dots,e_d)$ the canonical basis of $\mathbb{R}^\d$ and
$(\X_1,\ldots,\X_\d)$ its dual basis, i.e. $\X_i$ is $\mathbb{R}$-linear
and $\X_i(e_i)=\delta_{i, j}$.
Let $P\in\mathbb{R}^\d$ and $\dir=\{i_1,\ldots,i_k\}$ a $(d,k)$-direction,
we call \emph{cone of origin $P$ and direction $\dir$} the set of points
$C(P, \dir):= \{P + a_1e_{i_1} + \cdots + a_ke_{i_k}\ |\ (a_1,\ldots,a_k)\in\mathbb{N}^k\}.$
\begin{defi}\label{def_gnat_geo}
A \emph{geometric non-ambiguous tree} of dimension $(\d,\k)$ and box $\gs$
is a non-empty set $\V$ of points of $\mathbb{N}^\d$ such that:
\begin{enumerate}[ref={\thecor.\arabic*}]
\item\label{gnat_geo_box} $\V$ is contained in $\i{1}{\wi_1}\times\cdots\times\i{1}{\wi_\d}$. 
\item\label{gnat_geo_root} $\V$ contains the point $(\wi_1,\ldots,\wi_\d)$,
    which is called the \emph{root}, 
\item\label{gnat_geo_cone} For $P\in \V$ different from the root, there exists
    a unique $(d,k)$-direction $\dir=\{i_1,\ldots,i_k\}$ such that the cone
    $c(P, \dir)$ contains at least one point different from $P$. We say that
    $P$ is of type $\dir$. 
\item\label{gnat_geo_distinct_and_interval} For each $i\in\i{1}{\d}$ and for each 
    $l\in\i{1}{\wi_i-1}$, the affine hyperplane $\{x_i=l\}$ contains exactly
    one point of type $\dir$ that contains $i$. 
\item\label{gnat_geo_affine} For $P$ and $P'$ two points of $\V$ belonging to a
    same affine space of direction $\Vect(e_{i_1},\ldots,e_{i_k})$, then,
    either $\forall j\in\i{1}{\k}\text{, }X_{i_j}(P)>X_{i_j}(P')$, or $\forall
    j\in\i{1}{\k}\text{, }X_{i_j}(P')>X_{i_j}(P).$
\end{enumerate}
\end{defi}

Let us compare the original definition (\cite{AvaBouBouSil14}) of non-ambiguous
trees recalled in Section~\ref{definitions} with Definition~\ref{def_gnat_geo}. Both
have a condition for the existence of a root \ref{condition_1_ana} and
\ref{gnat_geo_root}. The existence and uniqueness of a parent for a non-root
point are given by conditions \ref{condition_2_ana} and \ref{gnat_geo_cone},
moreover when $k\geq2$ we also need the condition \ref{gnat_geo_affine}.
Finally, the compactness is given by conditions \ref{condition_3_ana},
\ref{gnat_geo_box} and \ref{gnat_geo_distinct_and_interval}.

\begin{prop}
There is a simple bijection between the set of geometric non-ambiguous trees of
box $\gs$ and the set of non-ambiguous trees of
geometric size $\gs$.
\end{prop}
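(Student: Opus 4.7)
The plan is to mirror the argument that established the equivalence of the two definitions in the $(2,1)$ case. Given a labelled non-ambiguous tree $T\in\NATdk$, I would build a point set $\mathcal{V}(T)\subset\mathbb{N}^d$ by assigning to each vertex $U$ of $T$ a point $P_U$ whose $i$-th coordinate equals the $i$-th entry of the label of $U$ when $i$ lies in the direction of $U$, and otherwise equals the $i$-th coordinate of the closest ancestor of $U$ whose direction contains $i$. Such an ancestor always exists because the root has direction $\llbracket 1,d\rrbracket$ and carries every coordinate; its own image is $(w_1,\ldots,w_d)$. Unlike in the argument for dimension $(2,1)$, no change of coordinates is needed here since Definition~\ref{def_gnat_geo} already places the root at $(w_1,\ldots,w_d)$.

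The verification of the axioms of Definition~\ref{def_gnat_geo} would rely on those of Definition~\ref{def_gnat}. The box condition \ref{gnat_geo_box} and the root condition \ref{gnat_geo_root} are direct consequences of the interval and growth axioms. Distinctness and the interval property yield \ref{gnat_geo_distinct_and_interval} immediately. For the cone condition \ref{gnat_geo_cone}, I would argue that the parent of $U$ in $T$ (for $U$ non-root, of direction $\pi$) gives a point inside $c(P_U,\pi)\setminus\{P_U\}$: by construction $P_U$ and the image of its parent share every coordinate outside $\pi$, while strict growth forces the parent to be strictly larger than $P_U$ in each $\pi$-coordinate. Uniqueness of the direction $\pi$ follows because any point sitting in $c(P_U,\pi')$ with $\pi'\neq\pi$ would have to share with $P_U$ every coordinate outside $\pi'$, and a short case analysis using distinctness forbids this.

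The main obstacle will be the alignment condition \ref{gnat_geo_affine}, which has no counterpart in the $k=1$ case where a one-dimensional ordering is trivial. The key combinatorial input: if $P_U$ and $P_V$ lie in the same affine subspace of direction $\Vect(e_i:i\in\pi)$, they agree on every coordinate outside $\pi$ and therefore share the same closest ancestor in each direction $i\notin\pi$. Tracing these shared ancestors back through the tree, together with the fact that a vertex of a $\binom{d}{k}$-ary tree has at most one child per $(d,k)$-direction, would force $U$ and $V$ to lie on a single ancestor-descendant chain whose intermediate vertices all have directions contained in $\pi$; the growth axiom along this chain then yields the required component-wise comparability of their $\pi$-coordinates.

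For the inverse map, from a geometric NAT $\mathcal{V}$ I would reconstruct a labelled tree by placing the root at $(w_1,\ldots,w_d)$ and creating, for each other point $P\in\mathcal{V}$ of type $\pi$, a vertex of direction $\pi$ whose label records the $\pi$-coordinates of $P$. I would attach this vertex as the $\pi$-child of the unique point $Q$ which is the immediate successor of $P$ along the total order on $c(P,\pi)\cap\mathcal{V}$ guaranteed by condition \ref{gnat_geo_affine}. Checking that the result is a labelled NAT satisfying Definition~\ref{def_gnat} is the exact reverse of the forward verification, and the two constructions are inverse by design, yielding the claimed bijection.
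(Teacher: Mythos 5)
Your proposal is correct and follows the paper's own proof essentially step for step: your completed-coordinate map (filling each $\bullet$ from the closest ancestor carrying that coordinate) is exactly the paper's forward construction, your chain argument for the alignment condition (agreement outside $\pi$ yields common ancestors by distinctness, and uniqueness of the $\pi$-indexed child forces the two vertices onto one ancestor--descendant chain along which the growth axiom gives componentwise comparability) is the paper's argument for that condition, and your inverse map (parent of a point of type $\pi$ is the closest point of the cone $c(P,\pi)$, well defined thanks to the alignment condition) matches as well. The only cosmetic difference is that the paper first treats the degenerate case $k=d$ separately, where the point set is forced to be the diagonal chain, before running the general argument for $k<d$.
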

An example of the correspondence is given in Figure~\ref{fig_gnat}.
\begin{proof}
If $k=d$, $\V$ is of the form
\[
    \{(\wi,\dots,\wi),(\wi-1,\dots,\wi-1),\dots,(1,\dots,1)\},
\]
which corresponds exactly to non-ambiguous trees of dimension $(\d,\d)$ defined
in $\ref{def_gnat}$.

Let us now suppose that $k<d$.

 \textbf{Definition \ref{def_gnat} implies Definition \ref{def_gnat_geo}:}

Let $\N$ be a non-ambiguous tree of dimension $(d,k)$ as defined in
Definition \ref{def_gnat} and let $\gs$ be its geometric size.
The first step is to define the coordinates, also called completed label, of a vertex $U$
by replacing the $\bullet$ by integers in all the labels of vertices.
Let $U$ be a vertex of $\N$ such that its $i$th component is a $\bullet$. We replace the $i$th component of $U$
with the $i$th component of the first ancestor of $U$ with a $i$th component
different from $\bullet$. Such an ancestor exists
since the root has no $\bullet$ component.
As a consequence, using Point \ref{gnat_growth} of Definition \ref{def_gnat},  we have
for a vertex $V$ of completed label $(v_1,\dots,v_\d)$ that
if $V$ has a child $U$ indexed by a $(\d,\k)$-direction $\dir$
and of completed label $(u_1,\dots,u_\d)$, then for $i\in\dir$,
$v_i>u_i$ and for $i\not\in\dir$, $v_i=u_i$.
Moreover, \ref{gnat_distinct} and the definition of completed labels implies
that, for $U$ and $V$ two vertices, if there exists $i$ such that $u_i=v_i$,
then they have a common ancestor $W$ such that $w_i\neq\bullet$ and
$w_i=u_i=v_i$. We denote by $\mathfrak{a}_i(U)$, or equivalently
$\mathfrak{a}_i(V)$, the vertex $W$.

Let $V$ be a vertex of $\N$ of completed label $(u_1,\dots,u_\d)$.
We denote by $P_V\in\mathbb{N}^\d$ the point $(u_1,\dots,u_\d)$.
Let $\V$ be the set of points $\{P_V\ |\ V\text{ vertex of }\N\}$.
By definition, each vertex has a different completed label.
Let us prove that $\V$ satisfies the conditions of \ref{def_gnat_geo}.
\begin{enumerate}
\item It is a consequence of \ref{gnat_interval}, and Definition~\ref{gnat_gs}.
\item $\V$ contains $(\wi_1,\dots,\wi_\d)$, since $(\wi_1,\dots,\wi_\d)$
      is the label of the root of $\N$.
\item Let $P_U$ be a point of $\V$ different from $(\wi_1,\dots,\wi_\d)$. Since
    $U$ is not the root, it is a child indexed by a $(\d,\k)$-direction $\dir$,
    of a vertex $V$. Hence for $i\not\in\dir$, $\X_i(P_V)=\X_i(P_U)$ and for
    $i\in\dir$, $\X_i(P_V)>\X_i(P_U)$. So $P_V$ is in the cone of origin $P_U$
    and direction $\dir$, in particular, $P_U$ is of type $\dir$. Let us prove
    the uniqueness by contradiction. Suppose there is another $(\d,\k)$-direction
    $\dir'$ such that the cone of origin $P_U$ and direction $\dir'$ contains a
    point $P_{V'}$ different from $P_U$. Let $i\in\dir\setminus\dir'$. Then, by
    definition of the cone we have $X_i(P_U)=X_i(P_V')$. Since
    $\mathfrak{a}_i(U)=V$, then $\mathfrak{a}_i(V')=V$. Let
    $i'\in\dir'\setminus\dir$ then $X_i'(P_{V'})>X_i(P_U)=X_i(P_V)$, which is not
    possible since $V$ is an ancestor of $V'$. In particular the type of $P_U$
    corresponds to the type of $U$.
\item Let $i\in\llbracket1,d\rrbracket$ and $l\in\llbracket1,\wi-1\rrbracket$,
    by \ref{gnat_interval}, there exists $U$ such that $u_i\neq\bullet$ and
    $u_i=l$. Let $\dir$ be the index of $u$ then $i\in\dir$, hence, $P_U$
    satisfies \ref{gnat_geo_distinct_and_interval}. Suppose there exists
    another point $P_V$ satisfying \ref{gnat_geo_distinct_and_interval}, let
    $\dir'$ be its type. Then $V$ is indexed by $\dir'$ and $i\in\dir'$. Thus, $V$
    is another vertex such that $v_i\neq\bullet$ and $v_i=l$ which is in
    contradiction with \ref{gnat_distinct}.
\item Let $\F$ be an affine space of direction $\Vect(e_{i_1},\ldots,e_{i_k})$
    containing a point $P_U$. We denote by $\dir$ the set
    $\{e_{i_1},\ldots,e_{i_k}\}$. If $U$ is the root, then
    \ref{gnat_geo_affine} is satisfied. Else, let $(V_0,V_1,\dots,V_m)$ be the
    sequence of ancestors of $U$, i.e $V_0=U$ for all $j$, $V_j$ is a child of
    $V_{j+1}$ and $V_m$ is the root of $\N$. Let $l$ be the index such that
    $V_{l}$ is a child of $V_{l+1}$ not indexed by $\dir$ and $\forall
    j\in\i{0}{l-1}$, $V_j$ is the child indexed by $\dir$ of $\V_{j+1}$, $V_l$
    will be denoted $V$. Let $\dir'$ be the $(d,k)$-direction indexing $V$ and
    let $i\in\dir'\setminus\dir$. If $V$ is the root then $\dir'=\llbracket 1,d
    \rrbracket$. Let $P_{U'}$ be another point of $\F$, since
    $i\not\in\dir$, then $u'_i=u_i$. Hence, we have
    $\mathfrak{a}_i(U')=\mathfrak{a}_i(U)=V$.
    Since for all $j\not\in\dir$, $u'_j=v_j$, the path from $U'$ to $V$ contains
    only vertices indexed by $\dir$. Hence, by definition of $V$, $U$ is an
    ancestor of $U'$ or the converse, which proves \ref{gnat_geo_affine}.
\end{enumerate}

\textbf{Definition \ref{def_gnat_geo} implies Definition \ref{def_gnat}:}
Let $\V$ be a non-ambiguous tree defined with Definition \ref{def_gnat_geo}. We
start by constructing the underlying $\binom{d}{k}$-ary tree $\M$ of $\V$. The
vertices of $\M$ correspond to the points of $\V$. In particular, the root of
$\M$ corresponds to the root of $\V$. Let $P$ be a point of $\V$, we denote by
$V_P$ the corresponding vertex of $\M$. Let $P$ be a point of $\V$ different
from the root, let $\dir=\{e_{i_1},\ldots,e_{i_k}\}$ be the $(d,k)$-direction
defined by \ref{gnat_geo_cone}. Using \ref{gnat_geo_affine}, we can define
without ambiguity the parent of $V_P$ as the vertex $V_{P'}$ such that $P'$ is the
closest point to $P$ belonging to $c(P,\dir)$. $V_P$ is the child of $V_{P'}$
indexed by $\dir$, moreover, for all $i\in\dir$, $X_i(P)<X_i(P')$ and for all
$i\not\in\dir$, $X_i(P)=X_i(P')$. The labelling is done as follows. For each
point $P$ of $\V$ we label the vertex $V_P$ with the coordinates of $P$. Then, given a vertex
$V$ of type $\dir$, for all $i\not\in\dir$, we replace the $i$th
component of its label with $\bullet$. Thus, if the $i$th component of a vertex
$V_{P'}$ is equal to $l$, then for each descendant $V_P$ of $V_{P'}$ we have
$\X_i(P)\leqslant l$ and if $\X_i(P)=l$ then the $i$th component of $V_{P'}$
is $\bullet$. Let us prove that the conditions of Definition~\ref{def_gnat} are
satisfied.
\begin{enumerate}
\item By construction of the labels.
\item Let $P$ and $P''$ be two points such that $V_P$ is a descendant of
    $V_{P''}$ indexed by $\dir$ such that $i$ belongs to $\dir$. Let $P'$ be
    the father of $P$, then $\X_i(P)<\X_i(P')\leqslant \X_i(P'')$.
\item Let $V_P$ be a vertex of $\M$ such that its $i$th component is different
    from $\bullet$. If $V_P$ is the root, then all the vertices of $\M$ are its
    descendants, hence its $i$th label appears only once. Else, $V_P$ is a
    child indexed by $\dir$ of a vertex $V_{P'}$. In particular, $\dir$
    contains $i$ since the $i$th component of $V_P$ is not $\bullet$. Hence, by
    \ref{gnat_geo_distinct_and_interval}, the $i$th component of $V_P$ is
    unique.
\item \ref{gnat_geo_distinct_and_interval} implies that for each
    $i\in\i{1}{\d}$, for all $l\in\i{1}{\wi_i-1}$ there is a point $P$ of type
    $\dir$ such that $\X_i(P)=l$ and $i\in\dir$, so that the $i$th component of
    $V_P$ is equal to $l$. Moreover, the coordinates of the root are
    $(\wi_1,\dots,\wi_\d)$ and $\V$ is contained in the box
    $\i{1}{\wi_1}\times\cdots\times\i{1}{\wi_\d}$. Therefore, the set of $i$th
    components, different from $\bullet$, is the interval $\i{1}{\wi_i}$.
\end{enumerate}
\end{proof}

\section{A new statistic on binary trees: the hook statistic}\label{hook} 

We present in this section a bijection between binary trees and ordered trees, sending the vertices to edges and the hook statistic defined in Definition \ref{def_hook} to the number of vertices having at least a child which is a leaf, what we will call the \emph{child-leaf statistic}.
The corresponding integer series appears as \cite[A127157]{oeis} in OEIS.

We denote by $\Bp$ and $\Op$ respectively the exponential generating series of
these trees, with these statistics, the variable $x$ indexing the number of
vertices in $\Bp$ and the number of edges in $\Op$, and $t$ the statistic.
Then, these generating series satisfy:
\begin{prop}
The generating series of binary trees with hook statistic and ordered trees with the child-leaf statistic are given by the following functional equations:
\begin{equation*}
\Bp=  1+xt\times\left(\frac{1}{1-x\Bp} \right)^2
\end{equation*}
\begin{equation*}
\Op=\frac{1}{1-x(\Op-1))}\times\left(1+xt\times\frac{1}{1-x\Op} \right)
\end{equation*}
These generating series are equal.
\end{prop}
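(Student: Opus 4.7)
The plan is to prove the two functional equations separately by combinatorial decomposition, and then to deduce $\Bp = \Op$ from a common polynomial identity together with a uniqueness argument for its formal power series solution.

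For the binary tree identity, I would decompose a non-empty binary tree around the hook of its root. The root itself contributes $xt$ (one vertex, one new hook). The leftmost branch beyond the root is a possibly empty sequence of vertices, each carrying an external right subtree which is itself an arbitrary binary tree; such a pair contributes $x\Bp$, so the whole branch contributes $\sum_{\ell\ge 0}(x\Bp)^\ell = \frac{1}{1-x\Bp}$. Symmetric treatment of the rightmost branch, together with the contribution $1$ of the empty tree, yields the stated equation.

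For the ordered tree identity, I would decompose an ordered tree by inspecting the sequence of children of its root, split at the first leaf child whenever one exists. The initial run of non-leaf children (each an edge $x$ attached to a non-trivial subtree counted by $\Op-1$) contributes $\frac{1}{1-x(\Op-1)}$ and, via its empty product, absorbs the single-vertex tree. When a leaf child does follow, this first such child contributes $xt$, with $t$ recording that the root now has at least one leaf child, and the arbitrary subsequent children contribute $\frac{1}{1-x\Op}$. Grouping these two alternatives produces the formula for $\Op$.

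For the equality, I would reduce both identities to the common polynomial equation $(Y-1)(1-xY)^2 = xt$. Clearing the denominator in the binary tree equation gives this directly. For the ordered tree equation I would multiply through by $(1-x(\Op-1))(1-x\Op)$, use $1-x(\Op-1) = (1-x\Op)+x$, and set $w=1-x\Op$ to reduce the relation to $w^2(1-w-x) = x^2 t$, which after substituting back is exactly $(1-x\Op)^2(\Op-1) = xt$. Writing $Z=Y-1$, the common identity becomes the fixed-point equation $Z = \frac{xt}{(1-x-xZ)^2}$, whose right-hand side sends any $Z$ of positive valuation in $(x,t)$ to a series of valuation at least two, and is a strict contraction in the $(x,t)$-adic topology; it therefore admits a unique formal power series solution vanishing at the origin, forcing $\Bp-1=\Op-1$.

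The main obstacle is arranging the ordered tree decomposition so that the marker $t$ is placed exactly once per eligible root (hence the choice to highlight the \emph{first} leaf child); the algebraic manipulation and the uniqueness argument are then routine. A bijective explanation of $\Bp = \Op$, matching hooks in binary trees to vertices in ordered trees carrying at least one leaf child, is the object of the remainder of the section.
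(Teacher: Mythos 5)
Your proposal is correct and follows essentially the same route as the paper: the same decomposition of a binary tree along the hook of its root, the same splitting of the root's children of an ordered tree at the first leaf child, and the same algebraic manipulation showing both series satisfy one common equation. The only difference is that you make explicit, via the reduction to $(Y-1)(1-xY)^2=xt$ and a contraction argument in the $(x,t)$-adic topology, the uniqueness step that the paper leaves implicit after multiplying the two equations by $1-x\Bp$ and $1-x(\Op-1)$ respectively.
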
 

\begin{proof}
The first functional equation is obtained by considering the vertices in the hook of the root: there can be none or there is a root, a list of left descendant (whose right child is a binary tree) and a list of right descendant (whose left child is a binary tree).

The second functional equation is obtained by considering, if the ordered tree
is not reduced to a vertex, the first leaf of the root from left to right, if
it exists. Then, on the left side of this leaf, there is a list of
ordered trees not reduced to a vertex and on the right side a list of ordered trees, if there is a leaf.
Then, by multiplying the preceding equations by $1-x\Bp$ and $1-x(\Op-1))$ respectively, they are equivalent to:
\begin{equation*}
\Bp -x\Bp^2=  1-x\Bp + xt \frac{1}{1-x\Bp}
\end{equation*}
\begin{equation*}
\Op -x \Op^2 +x \Op = 1 +xt\frac{1}{1-x\Op}.
\end{equation*}
\end{proof}

Let us now exhibit a bijection between these two objects. This bijection comes from the following equation:
\begin{equation*}
(\Bp-1) -x (\Bp-1)-x (\Bp-1)^2 = xt \frac{1}{1-x\Bp}.
\end{equation*}

This equation can be viewed as considering only binary trees whose root has no left descendants or ordered trees such that the leftmost child of the root is a leaf.
We obtain the following bijection:
\begin{prop}
The map $\zeta$ sends a binary tree $\B$ to an ordered tree $\O$ by mapping:
\begin{itemize}
\item the leftmost descendant of the root, if it is a leaf, to an edge between the root and its only child
\item the leftmost descendant of the root $v$ to an edge between the root of the tree associated with the descendants of $v$ and the root of the tree obtained from what is left
\item the set of right descendants of the root to the set of children of the root.
\end{itemize}
It is a bijection between binary trees and ordered trees, sending the vertices to edges and the hook statistic to the child-leaf statistic.
\end{prop}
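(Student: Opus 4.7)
The plan is to prove the proposition by induction on the size of the binary tree, using the trichotomous decomposition of non-empty binary trees (and of non-empty ordered trees) implicit in the equation displayed just above the proposition. Rewriting that equation as
\[
\Bp - 1 = x(\Bp - 1) + x(\Bp - 1)^2 + \frac{xt}{1 - x\Bp},
\]
and obtaining the parallel identity $\Op - 1 = x(\Op - 1) + x(\Op - 1)^2 + xt/(1 - x\Op)$ from the equation for $\Op$, one sees that both families split into three combinatorial classes indexed by the shape of the leftmost descendant of the root (for binary trees) and of the leftmost child of the root (for ordered trees). Matching these classes summand by summand will give $\zeta$ and its inverse recursively.

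Concretely, the third summand $xt/(1 - x\Bp)$ counts binary trees whose leftmost descendant of the root is a leaf: a root carrying a new hook (the $xt$-factor) together with a list of left-subtrees hanging off the right spine of the root (the $1/(1 - x\Bp)$-factor). Its counterpart $xt/(1-x\Op)$ counts ordered trees whose leftmost child of the root is a leaf. I would map the former to the latter by recursively applying $\zeta$ to each hanging subtree. The summands $x(\Bp-1)$ and $x(\Bp-1)^2$ treat the case in which the leftmost descendant of the root has a right subtree; they are paired with the corresponding ordered-tree summands, and no new hook (respectively, no new child-leaf) is created at this step. The inverse is obtained by the mirror construction; mutual invertibility follows by direct inspection that the trichotomy is preserved in both directions.

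Both statistic preservations follow by induction on the size. The size-to-edges identity is carried by the factor $x$ appearing in every summand: each recursive step introduces one vertex on the binary-tree side and one edge on the ordered-tree side. The hook-to-child-leaf identity reduces to the observation that only the third summand carries $t$ on either side, and the corresponding $t$ records the new hook containing the root (on the binary-tree side) and the newly created leftmost-child leaf of the root (on the ordered-tree side). The main obstacle I anticipate is producing a clean combinatorial description of the middle summand $x(\Bp-1)^2$ and of its ordered-tree counterpart: one must specify precisely how the original binary tree is split into two non-empty subtrees and check that this split neither creates nor destroys hooks. Pairing carefully with the corresponding $x(\Op-1)^2$ decomposition should resolve this, though the bookkeeping is delicate.
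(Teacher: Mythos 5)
Your global strategy---reading both $\Bp-1$ and $\Op-1$ as solutions of the same fixed-point equation $F=xF+xF^2+\frac{xt}{1-x(F+1)}$, matching the three summands, and defining $\zeta$ recursively with an induction on size---is exactly the route the paper intends (the paper itself only sketches it, via the displayed equation and the figures). But your identification of the three classes on the binary-tree side is wrong, and the error sits precisely where you concede the bookkeeping is delicate. The third summand $xt\frac{1}{1-x\Bp}$ does \emph{not} count binary trees whose leftmost descendant of the root is a leaf; it counts the trees whose root has no left child at all (the paper's ``root has no left descendants''), which is in fact what your own parenthetical description---a root plus the left subtrees hanging off its right spine---says. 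A size-$3$ check separates the two families: the trees whose left spine ends in a leaf are the left-left path and the complete tree, contributing $2x^3t$, whereas $xt/(1-x\Bp)$ has degree-$3$ part $x^3t+x^3t^2$ (the right-right path has one hook, and the tree root--right child--left grandchild has two). The correct trichotomy is governed by the terminal vertex $v$ of the left spine: empty left spine gives $xt/(1-x\Bp)$; $v$ a leaf gives $x(\Bp-1)$ (this is bullet 1 of the statement, whose image class---ordered trees whose root has an only child carrying a nontrivial subtree---is $x(\Op-1)$, not the leftmost-child-a-leaf class); $v$ carrying a nonempty right subtree gives $x(\Bp-1)^2$.

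Consequently the step you defer is not mere bookkeeping but the heart of the proof, and your labels would steer you toward a split that fails. Splitting the $x(\Bp-1)$ summand at the root (root times nonempty left subtree) does not preserve hooks: for the tree consisting of a root, its left child $l$, and the right child of $l$, the hook number is $2$, while its left subtree alone has hook number $1$. The surgery must instead be performed at $v$: when $v$ is a leaf, deleting it leaves the hook number unchanged, because $v$ lies in the root's hook, which survives; when $v$ has a right subtree $T$, detaching $\{v\}\cup T$ gives $hook(B)=hook(T)+hook\bigl(B\setminus(\{v\}\cup T)\bigr)$, since the hook partition of $T$ is untouched and $v$ was absorbed in the root's hook. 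One then needs the mirror trichotomy on ordered trees (leftmost child of the root: a leaf; an only child with nontrivial subtree; a nontrivial subtree among several children), the analogous additivity of the child-leaf statistic, and the check that the three correspondences are compatible with the recursive calls, before the induction closes. As written, your argument essentially restates the equality of generating series already proven in the preceding proposition; it does not yet define $\zeta$ nor transport the statistics.
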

The bijection is described in Figure~\ref{zeta} and the explicit correspondance
for the sizes one, two and three is given in Table~\ref{tableau_zeta} the first
terms in the bijection. Another way of describing recursively the bijection
$\zeta$ is given in Figure~\ref{zeta2}, the empty binary tree is still send to
the ordered tree reduced to one vertex.
\begin{figure}
\scalebox{0.75}{
\begin{tikzpicture}
\draw (-0.5,0) -- (0,0.75) -- (0.5,0) -- cycle;
\draw[fill] (0,0.75) circle(0.1);
\draw (-0.5,0)--(-1,-0.75);
\draw[fill] (-1,-0.75) circle(0.1);
\draw (1,0) node{$\Leftrightarrow$};
\draw (1,-0.75) -- (1.5,0) -- (2,-0.75) -- cycle;
\draw[fill] (1.5,0) circle(0.1);
\draw (1.5,0)--(1.5,0.75);
\draw[fill] (1.5,0.75) circle(0.1);
\end{tikzpicture} }
\hspace{1cm}
\scalebox{0.75}{
\begin{tikzpicture}
\draw (-0.5,0) -- (0,0.75) -- (0.5,0) -- cycle;
\draw (0,0.25) node{$1$};
\draw[fill] (0,0.75) circle(0.1);
\draw (-0.5,0)--(-1,-0.75);
\draw[fill] (-1,-0.75) circle(0.1);
\draw (-1,-2.25) -- (-0.5,-1.5) -- (0,-2.25) -- cycle;
\draw (-0.5,-2) node{$2$};
\draw[fill] (-0.5,-1.5) circle(0.1);
\draw (-0.5,-1.5)--(-1,-0.75);
\draw (1,-0.75) node{$\Leftrightarrow$};
\draw (2,-1) -- (2.5,-0.25) -- (3,-1) -- cycle;
\draw (2.5,-0.75) node{$2$};
\draw[fill] (2.5,-0.25) circle(0.1);
\draw (2.5,-0.25)--(1.5,-0.75);
\draw[fill] (1.5,-0.75) circle(0.1);
\draw (1,-1.5) -- (1.5,-0.75) -- (2,-1.5) -- cycle;
\draw (1.5,-1.25) node{$1$};
\end{tikzpicture} }
\hspace{1cm}
\scalebox{0.75}{
\begin{tikzpicture}
\draw (0,2.5)--(2.25,1);
\draw[fill] (0,2.5) circle(0.1);
\draw[fill] (0.75,2) circle(0.1);
\draw[fill] (1.5,1.5) circle(0.1);
\draw[fill] (9/4,1) circle(0.1);
\draw (0.75,1.5) node{$1$};
\draw (1.5,1) node{$2$};
\draw (9/4,0.5) node{$3$};
\draw (0.30,1.25) -- (0.75,2) -- (1.20,1.25) -- cycle;
\draw (1.05,0.75) -- (1.5,1.5) -- (1.95,0.75) -- cycle;
\draw (1.80,0.25) -- (2.25,1) -- (2.70,0.25) -- cycle;
\draw (3,1.25) node{$\Leftrightarrow$};
\draw[fill] (4,2) circle(0.1);
\draw[fill] (3.5,1) circle(0.1);
\draw[fill] (4.5,1) circle(0.1);
\draw[fill] (5.5,1) circle(0.1);
\draw[fill] (6.5,1) circle(0.1);
\draw (4.5,0.5) node{$1$};
\draw (5.5,0.5) node{$2$};
\draw (6.5,0.5) node{$3$};
\draw (4,2)--(3.5,1);
\draw (4,2)--(4.5,1);
\draw (4,2)--(5.5,1);
\draw (4,2)--(6.5,1);
\draw (4.05,0.25) -- (4.5,1) -- (4.95,0.25) -- cycle;
\draw (5.05,0.25) -- (5.5,1) -- (5.95,0.25) -- cycle;
\draw (6.05,0.25) -- (6.5,1) -- (6.95,0.25) -- cycle;
\end{tikzpicture}} 
 \caption{Bijection $\zeta$}\label{zeta}
 \end{figure}
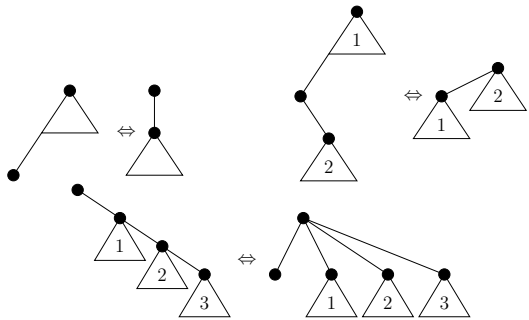
 \begin{table}
 \begin{tabular}{|c|cccccccc|}
 \hline
 binary trees & 
\begin{tikzpicture}
\node[blue] (r) at (0,0) {$\bullet$};
 \end{tikzpicture}&
\begin{tikzpicture}
\node[blue] (r) at (0,0) {$\bullet$};
\node[blue](fg) at (-0.5,-0.75){$\bullet$};
\draw[blue] (r) --(fg);
\end{tikzpicture}&
\begin{tikzpicture}
\node[blue] (r) at (0,0) {$\bullet$};
\node[blue](fd) at (0.5,-0.75){$\bullet$};
\draw[blue] (r) --(fd);
\end{tikzpicture}&
\begin{tikzpicture}
\node[blue] (r) at (0,0) {$\bullet$};
\node[blue](fg) at (-0.5,-0.75){$\bullet$};
\node[blue](fd) at (0.5,-0.75){$\bullet$};
\draw[blue] (fd)--(r) --(fg);
\end{tikzpicture} &
\begin{tikzpicture}
\node[blue] (r) at (0,0) {$\bullet$};
\node[blue](fg) at (-0.5,-0.75){$\bullet$};
\node[blue](fgg) at (-1,-2*0.75){$\bullet$};
\draw[blue] (r) --(fg)--(fgg);
\end{tikzpicture}&
\begin{tikzpicture}
\node[blue] (r) at (0,0) {$\bullet$};
\node[blue](fd) at (0.5,-0.75){$\bullet$};
\node[blue](fdd) at (1,-2*0.75){$\bullet$};
\draw[blue] (r) --(fd)--(fdd);
\end{tikzpicture}&
\begin{tikzpicture}
\node[blue] (r) at (0,0) {$\bullet$};
\node[blue](fg) at (-0.5,-0.75){$\bullet$};
\node[purple](fgg) at (0,-2*0.75){$\bullet$};
\draw[blue] (r) --(fg);
\draw (fg)--(fgg);
\end{tikzpicture}&
\begin{tikzpicture}
\node[blue] (r) at (0,0) {$\bullet$};
\node[blue](fd) at (0.5,-0.75){$\bullet$};
\node[purple](fdd) at (0,-2*0.75){$\bullet$};
\draw[blue] (r) --(fd);
\draw (fd)--(fdd);
\end{tikzpicture}
\\ \hline
 ordered trees &
\begin{tikzpicture}
 \node[blue](r) at (0,0){$\bullet$};
\node(f) at (0,-0.75){$\bullet$};
 \draw (r)--(f);
 \end{tikzpicture}&
\begin{tikzpicture}
 \node(r) at (0,0){$\bullet$};
\node[blue](f) at (0,-0.75){$\bullet$};
\node(ff) at (0,-2*0.75){$\bullet$};
 \draw (r)--(f)--(ff);
 \end{tikzpicture}&
\begin{tikzpicture}
\node[blue] (r) at (0,0) {$\bullet$};
\node(fg) at (-0.5,-0.75){$\bullet$};
\node(fd) at (0.5,-0.75){$\bullet$};
\draw (fd)--(r) --(fg);
\end{tikzpicture} &
\begin{tikzpicture}
\node(rr) at (0,0.75){$\bullet$};
\node[blue] (r) at (0,0) {$\bullet$};
\node(fg) at (-0.5,-0.75){$\bullet$};
\node(fd) at (0.5,-0.75){$\bullet$};
\draw (fd)--(r) --(fg);
\draw (r)--(rr);
\end{tikzpicture} &
  \begin{tikzpicture}
 \node(r) at (0,0){$\bullet$};
\node(f) at (0,-0.75){$\bullet$};
\node[blue](ff) at (0,-2*0.75){$\bullet$};
\node(fff) at (0,-3*0.75){$\bullet$};
 \draw (r)--(f)--(ff)--(fff);
 \end{tikzpicture}&
\begin{tikzpicture}
\node[blue] (r) at (0,0) {$\bullet$};
\node(fg) at (-0.5,-0.75){$\bullet$};
\node(fd) at (0.5,-0.75){$\bullet$};
\node(fm) at (0,-0.75){$\bullet$};
\draw (fd)--(r) --(fg);
\draw (r)--(fm);
\end{tikzpicture} &
\begin{tikzpicture}
\node[blue] (r) at (0,0) {$\bullet$};
\node(fg) at (-0.5,-0.75){$\bullet$};
\node[purple](fd) at (0.5,-0.75){$\bullet$};
\node(f) at (0.5,-2*0.75){$\bullet$};
\draw (f)--(fd)--(r) --(fg);
\end{tikzpicture} &
\begin{tikzpicture}
\node[blue] (r) at (0,0) {$\bullet$};
\node[purple] (fg) at (-0.5,-0.75){$\bullet$};
\node(f) at (-0.5,-2*0.75){$\bullet$};
\node(fd) at (0.5,-0.75){$\bullet$};
\draw (fd)--(r) --(fg)--(f);
\end{tikzpicture}  
 \\ \hline
  \end{tabular}
\caption{First terms of the bijection $\zeta$}\label{tableau_zeta} 
\end{table}
\begin{figure}
\scalebox{1}{
    \begin{tikzpicture}
        \draw[blue] (0,2.5)--++ (-3,-1.5);
        \draw[blue] (0,2.5)--++ (3,-1.5);
        \draw (-3,1) --++ (.5,-.75);
        \draw (-2,1.5) --++ (.5,-.75);
        \draw (-1,2) --++ (.5,-.75);
        \draw (1,2) --++ (-.5,-.75);
        \draw (2,1.5) --++ (-.5,-.75);
        \draw (3,1) --++ (-.5,-.75);
        \draw[fill,blue] (-3,1) circle(0.1);
        \draw[fill,blue] (-2,1.5) circle(0.1);
        \draw[fill,blue] (-1,2) circle(0.1);
        \draw[fill,blue] (0,2.5) circle(0.1);
        \draw[fill,blue] (1,2) circle(0.1);
        \draw[fill,blue] (2,1.5) circle(0.1);
        \draw[fill,blue] (3,1) circle(0.1);
        \draw (-2.5,-.25) node{$a$};
        \draw (-1.5,.25) node{$b$};
        \draw (-.5,.75) node{$c$};
        \draw (.5,.75) node{$1$};
        \draw (1.5,.25) node{$2$};
        \draw (2.5,-.25) node{$3$};
        \draw (-2.5,.25)--++ (-.45,-.75) --++ (.9,0) -- cycle;
        \draw (-1.5,.75)--++ (-.45,-.75) --++ (.9,0) -- cycle;
        \draw (-.5,1.25)--++ (-.45,-.75) --++ (.9,0) -- cycle;
        \draw (.5,1.25)--++ (-.45,-.75) --++ (.9,0) -- cycle;
        \draw (1.5,0.75)--++ (-.45,-.75) --++ (.9,0) -- cycle;
        \draw (2.5,0.25)--++ (-.45,-.75) --++ (.9,0) -- cycle;
        \draw (4.5,1.25) node{$\Longleftrightarrow$};
    \end{tikzpicture}
    \hspace{5mm}
    \begin{tikzpicture}
        \draw (4,2)--++(2.25,2.25);
        \draw (6.25,3.75) node{\Small$\zeta(a)$};
        \draw (5.5,3) node{\Small$\zeta(b)$};
        \draw (4.75,2.25) node{\Small$\zeta(c)$};
        \draw (4.5,0.5) node{\Small$\zeta(1)$};
        \draw (5.5,0.5) node{\Small$\zeta(2)$};
        \draw (6.5,0.5) node{\Small$\zeta(3)$};
        \draw (6.25,4.25) --++ (-.45,-.75) --++ (.9,0) -- cycle;
        \draw (5.5,3.5) --++ (-.45,-.75) --++ (.9,0) -- cycle;
        \draw (4.75,2.75) --++ (-.45,-.75) --++ (.9,0) -- cycle;
        \draw (4,2)--(3.5,1);
        \draw (4,2)--(4.5,1);
        \draw (4,2)--(5.5,1);
        \draw (4,2)--(6.5,1);
        \draw[fill] (6.25,4.25) circle(0.1);
        \draw[fill] (5.5,3.5) circle(0.1);
        \draw[fill] (4.75,2.75) circle(0.1);
        \draw[fill,blue] (4,2) circle(0.1);
        \draw[fill] (3.5,1) circle(0.1);
        \draw[fill] (4.5,1) circle(0.1);
        \draw[fill] (5.5,1) circle(0.1);
        \draw[fill] (6.5,1) circle(0.1);
        \draw (4.05,0.25) -- (4.5,1) -- (4.95,0.25) -- cycle;
        \draw (5.05,0.25) -- (5.5,1) -- (5.95,0.25) -- cycle;
        \draw (6.05,0.25) -- (6.5,1) -- (6.95,0.25) -- cycle;
    \end{tikzpicture}
}
\caption{Bijection $\zeta$, alternative description.}\label{zeta2}
\end{figure}
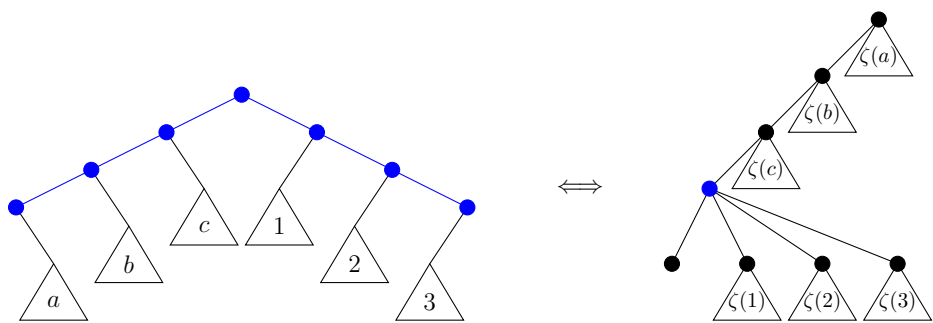

\section*{Perspectives}

In this work, we give new results about NATs, and we generalise the
definition of NATs to higher dimensions with a choice on the dimension of the
edges. It gives rise to several questions that we now detail.

We exhibit nice formulas for the generating function and the generating polynomial
of non-ambiguous trees that take into account the number of points in the first
column and in the first row. Those two parameters correspond to the parameters
$\alpha$ and $\beta$ of the PASEP. These formulas have been obtained with two different
technics: by solving a differential equation and by decomposing non-ambiguous
trees into hooks. In the context of the PASEP, it raises natural questions. Is it
possible to introduce the parameter $q$ of the PASEP in either one of them ? What
 can we deduce from the hook decomposition of tree-like tableaux ?

We give a polynomial analogue for the hook formula enumerating the NATs with a
fixed binary tree, by adapting the methods of \cite{HivNovThib08}. Since there is a hook formula also for NATs of higher dimension, we could extend the work
to higher dimensions to get polynomial analogue.

The generating functions of generalised NATs satisfy differential equations
similar to the case of NATs. While we give a solution for the case of NATs and
get a nice closed form, we have not been able to tackle the general case. It
would be interesting to find a generic way for solving this type of
differential equations.

By generalising the NATs to higher dimension, we answer a question raised
in the perspectives of \cite{AvaBouBouSil14}. In this last paper, the authors also study NATs with a complete underlying binary tree, and
 obtain nice combinatorial identities. It would be interesting to see if
the same happens in higher dimension.

Finally, as mentioned in the introduction, NATs correspond to tree-like
tableaux of rectangular shape. The question of the generalisation of tree-like tableaux to
higher dimension is raised by our work and we hope we will obtain again an insertion algorithm, which is a key
property of tree-like tableaux.

\ \\
{\bf Acknowledgement.}
The authors thank Samanta Socci for fruitful discussions which were
the starting point of the generalisation of non-ambiguous trees.
This research was driven by computer exploration using the open-source software
\texttt{Sage}~\cite{sage} and its algebraic combinatorics features developed by
the \texttt{Sage-Combinat} community~\cite{Sage-Combinat}.
The authors also thank the referees for their helpful suggestions.

\nocite{*}
\bibliographystyle{alpha}
\bibliography{biblio}
\end{document}